\documentclass[11pt]{article}
\usepackage[top=1in, bottom=1in, left=1in, right=1in]{geometry}

\usepackage[style=alphabetic,natbib=true,maxbibnames=99,backend=biber]{biblatex}

\addbibresource{reference.bib}

\usepackage[utf8]{inputenc}
\usepackage{mathrsfs}
\usepackage{graphicx}
\usepackage{savesym}
\savesymbol{Bbbk}
\usepackage{float}

\usepackage{xcolor}
\usepackage{thmtools}
\usepackage{thm-restate}

\usepackage{algorithm}
\usepackage{algpseudocode}[1]

\usepackage[normalem]{ulem}
\usepackage{amsthm,amsmath,amssymb,amsfonts}
\usepackage{hyperref, cleveref}
\usepackage{mathtools,pifont,enumitem}
\usepackage{dsfont}
\usepackage{subfigure}
\usepackage{nicefrac}
\usepackage{caption}

\newtheorem{theorem}{Theorem}[section]
\newtheorem{lemma}[theorem]{Lemma}
\newtheorem{definition}[theorem]{Definition}

\newtheorem{fact}[theorem]{Fact}

\newtheorem{remark}[theorem]{Remark}
\newtheorem{example}[theorem]{Example}
\newtheorem{problem}[theorem]{Problem}

\crefname{section}{Section}{Sections}

\crefname{theorem}{Theorem}{Theorems}
\crefname{assumption}{Assumption}{Assumptions}
\crefname{lemma}{Lemma}{Lemmas}
\crefname{definition}{Definition}{Definitions}
\crefname{conjecture}{Conjecture}{Conjectures}
\crefname{corollary}{Corollary}{Corollaries}
\crefname{construction}{Construction}{Constructions}
\crefname{claim}{Claim}{Claims}
\crefname{observation}{Observation}{Observations}
\crefname{proposition}{Proposition}{Propositions}
\crefname{fact}{Fact}{Facts}
\crefname{question}{Question}{Questions}
\crefname{problem}{Problem}{Problems}
\crefname{remark}{Remark}{Remarks}
\crefname{example}{Example}{Examples}
\crefname{equation}{Equation}{Equations}
\crefname{appendix}{Appendix}{Appendices}
\crefname{algorithm}{Algorithm}{Algorithms}
\crefname{model}{Model}{Models}

\newcommand{\yesnum}{\addtocounter{equation}{1}\tag{\theequation}}

\makeatletter
\newcommand{\customlabel}[2]{%
\protected@write \@auxout {}{\string \newlabel {#1}{{#2}{\thepage}{#2}{#1}{}} }%
\hypertarget{#1}{}
}
\makeatother

\newcommand{\white}[1]{\textcolor{white}{#1}}
\newcommand{\gray}[1]{\textcolor{lightgray}{#1}}

\newcommand{\N}{\mathbb{N}}

\newcommand{\R}{\mathbb{R}}
\newcommand{\Z}{\mathbb{Z}}

\newcommand{\cD}{\mathcal{D}}

\newcommand{\cM}{\mathcal{M}}
\newcommand{\cP}{\mathcal{P}}
\newcommand{\cR}{\mathcal{R}}
\newcommand{\cS}{\mathcal{S}}

\newcommand{\cU}{\mathcal{U}}

\newcommand{\evE}{\ensuremath{\mathscr{E}}}
\newcommand{\evF}{\ensuremath{\mathscr{F}}}
\newcommand{\evG}{\ensuremath{\mathscr{G}}}

\newcommand{\wt}{\widetilde}

\newcommand{\st}{\mathrm{s.t.}}

\newcommand{\eps}{\varepsilon}
\renewcommand{\epsilon}{\varepsilon}

\newcommand{\argmax}{\operatornamewithlimits{argmax}}
\newcommand{\Ex}{\operatornamewithlimits{\mathbb{E}}}

\newcommand{\poly}{\mathop{\mbox{\rm poly}}}

\def\abs#1{\left| #1 \right|}

\newcommand{\inparen}[1]{\left(#1\right)}
\newcommand{\inbrace}[1]{\left\{#1\right\}}
\newcommand{\insquare}[1]{\left[#1\right]}

\newcommand{\floor}[1]{\left\lfloor#1\right\rfloor}
\newcommand{\sfloor}[1]{\lfloor#1\rfloor}
\newcommand{\ceil}[1]{\left\lceil#1\right\rceil}

\newcommand{\zo}{\{0,1\}}

\newcommand{\Stackrel}[2]{\stackrel{\mathmakebox[\widthof{\ensuremath{#2}}]{#1}}{#2}}

\newcommand{\wh}[1]{\widehat{#1}}

\newcommand{\hD}{\widehat{D}}
\newcommand{\hM}{\widehat{M}}

\newcommand{\tD}{\wt{D}}

\newcommand{\np}{\textbf{NP}}
\newcommand{\negsp}{\hspace{-0.5mm}}

\newif\ifconf
\conftrue

\ifconf

\else

\fi

\ifconf

\else

\fi

\newcommand{\prog}[1]{Program~\eqref{#1}}

\newcommand{\gfviol}{\mathscr{G}_{\rm violation}}
\newcommand{\ifviol}{\mathscr{I}_{\rm violation}}
\newcommand{\util}{\mathscr{U}}

\title{Sampling Individually-Fair Rankings that are \textit{Always} Group Fair}

\author{Sruthi Gorantla\thanks{Equal contribution}\\ IISc, Bengaluru \and Anay Mehrotra\(^*\)\\ Yale University \and Amit Deshpande\(^\dag\)\\ MSR India \and Anand Louis\thanks{Equal contribution}\\ IISc, Bengaluru}

\begin{document}

\maketitle

\begin{abstract}
    Rankings on online platforms help their end-users find the relevant information---people, news, media, and products---quickly.
    Fair ranking tasks, which ask to rank a set of items to maximize utility subject to satisfying group-fairness constraints, have gained significant interest in the Algorithmic Fairness, Information Retrieval, and Machine Learning literature.
    Recent works, however, identify uncertainty in the utilities of items as a primary cause of unfairness and propose introducing randomness in the output.
    This randomness is carefully chosen to guarantee an adequate representation of each item (while accounting for the uncertainty).
    However, due to this randomness, the output rankings may violate group fairness constraints.
    We give an efficient algorithm that samples rankings from an individually-fair distribution while ensuring that \emph{every} output ranking is group fair.
    The expected utility of the output ranking is at least $\alpha$ times the utility of the optimal fair solution.
    Here, $\alpha$ depends on the utilities, position-discounts, and constraints---it approaches 1 as the range of utilities or the position-discounts shrinks, or when utilities satisfy distributional assumptions.
    Empirically, we observe that our algorithm achieves individual and group fairness and that Pareto dominates the state-of-the-art baselines.
\end{abstract}

\newpage
\tableofcontents
\newpage

\section{Introduction}\label{sec:intro}

    Rankings are ubiquitous on online platforms and have become a quintessential tool for users to find relevant information \cite{IRbook,liu2011learning,Epstein2015,Noble2018,googleLTR}.
    The core algorithmic problem in generating a ranking, given $m$ items (denoting individuals, products, or web pages) is to select and order a subset of $n$ items that are ``most'' relevant to the given query \cite{IRbook,liu2011learning,googleLTR}.
    This is a fundamental problem in Information Retrieval and has been extensively studied in the Machine Learning literature \cite{IRbook,liu2011learning,googleLTR}.

    Without any fairness considerations, rankings on online platforms have been observed to have skewed representations of certain demographic groups resulting in large-scale perpetuation and amplification of fairness-related harms \cite{KayMM15, Noble2018}.
    Skewed rankings can have adverse effects both at the group level--altering the end-users' perception of socially-salient groups \cite{KayMM15} and polarizing their opinions \cite{Epstein2015,polarizationWSJ2020}--and at an individual level--leading to a denial of economic opportunities to individuals (in later positions) \cite{hannak2017bias}.
    A reason for this is that the estimated relevance (or utilities) of items may be influenced by societal biases leading to skews affecting socially-salient, and often legally protected, groups such as women and people of color.
    Another reason for underrepresentation is that the utility estimates used to generate the ranking are bound to have some uncertainty, which leads to over-estimation or underestimation of utilities for different items--at an individual level.

    A large body of work designs algorithms to generate rankings that ensure sufficient representation \cite{causal2021yang,  policyLearningAshudeep, ReducingDisparateExposureZehlike, robustFairLTR2020, YangS17, celis2018ranking, BalancedRankingYang2019, linkedin_ranking_paper, GorantlaUnderranking21, fairExposureAshudeep, AmortizedFairnessBiega2018} (also see the surveys \cite{fair_ranking_survey1,fair_ranking_survey2,overviewFairRanking,criticalReviewFairRanking22}).
    A significant fraction of these works focus on group-level representation and have considered several types of group fairness constraints \cite{causal2021yang,  policyLearningAshudeep, ReducingDisparateExposureZehlike, robustFairLTR2020, YangS17, celis2018ranking, BalancedRankingYang2019, linkedin_ranking_paper, GorantlaUnderranking21, fairExposureAshudeep}.
    Two popular ones are equal representation and proportional representation.
    In the case of two groups $G_1$ and $G_2$, equal representation with a parameter $k$ requires that for every $j$ (roughly) $\frac{k}{2}$ items from each group appear between the $(kj+1)$-th and the $(kj+k)$-th positions \cite{GorantlaUnderranking21}.
    Here, for instance, $k$ could denote the number of items on each ``page'' of the ranking or the number of items in a user's browser window.
    Proportional representation requires that, for every $j$, $k\frac{\abs{G_\ell}}{m}$ items appear between the $(kj+1)$-th and the $(kj+k)$-th positions.
    Other constraints, that generalize equal representation and proportional representation, and notions of fairness from the perspective of other stakeholders (such as the end-user) have also been considered \cite{fair_ranking_survey1,fair_ranking_survey2,overviewFairRanking,criticalReviewFairRanking22} {(also see \cref{sec:model}).}
    Broadly speaking, all of these works, given group fairness constraints, output a ranking that has the maximum relevance or \textit{utility} subject to satisfying the specified constraints.

    Ensuring group-wise representation, via such group fairness constraints, can address underrepresentation across groups of items but may not address harms at an individual level:
    Across multiple output rankings, specific items (e.g., whose utilities have high uncertainty) may be systematically assigned lower positions.
    In other words, group fairness can mitigate under-representation at the group level but may not mitigate (or could even exacerbate) misrepresentation and denial of opportunities to individual items.

    \begin{example}[\textbf{Insufficiency of group-fairness constraints}]\label{ex:exampleIntro}
        As a concrete example consider an online hiring platform where recruiters search for relevant candidates and are presented with a ranked list of candidates; as is common in existing recruiting platforms \cite{linkedin_ranking_paper}.
        Suppose that this platform ensures proportional representation across individuals with, say, different skin tones.
        Here, it can be shown that, to maximize the ``utility,'' it is optimal to order individuals inside one group (those with the same skin tone) in decreasing order of their (estimated) utility.
        Consider two individuals $i_1$ and $i_2$ with the same skin tone and estimated utilities $\rho$ and $\rho-\eps$ (for some small constant $\eps>0$).
        Due to the difference in their utility, $i_2$ would always be ranked one or more positions below $i_1$.
        Since positions of individuals on recruiting platforms have been observed to affect their chances of being hired, $i_2$ has a systematically lower chance of being hired -- even though there is little difference in their utility \cite{hannak2017bias}.
        Moreover, this difference may be because of estimation errors that are bound to arise in any real-world setting and especially in the context of online recruiting where the utilities of \mbox{individuals are inherently uncertain and even change over time.}
    \end{example}

    \noindent Motivated by such examples, recent work on fair ranking has proposed various ways to define and incorporate fairness -- from the perspective of individuals -- in rankings \cite{fairExposureAshudeep,AmortizedFairnessBiega2018,AshudeepUncertainty2021}.
    Since opportunity, exposure, or attention received by individuals is ultimately linked to positions in the ranked order, a single deterministic ranking cannot avoid denial of opportunity when ranking multiple items with similar relevance, and hence, individually-fair rankings are inevitably stochastic in nature \cite{fairExposureAshudeep,AmortizedFairnessBiega2018,AshudeepUncertainty2021}.
    To gain intuition, observe that in the earlier example, any deterministic ranking must place either $i_1$ before $i_2$ or $i_2$ before $i_1$, due to which the ``exposure'' received by the item placed earlier is systematically higher than the other item irrespective of how small the difference in their utilities (i.e., $\eps$) is.
    While there are many notions of individual fairness with respect to items, their specification often boils down to specifying lower and/or upper bounds on the probability with which an individual or item must appear in a set of positions.
    For instance, an individual fairness constraint, specified by a matrix $C$, may require item $i$ to appear between the $(kj+1)$-th and the $(kj+k)$-th position with probability at least $C_{ij}$.
    Where, as before, $k$ could encode the number of items in one page of the ranking, in which case, the individual fairness constraint requires item $i$ to appear on page $j$ with at least a specified probability for every $i$ and $j$.
    While this stochasticity guarantees that individuals with ``similar'' utilities (as in the above example) receive similar average exposure, due to their stochastic nature, specific output rankings may violate group fairness requirements (as we empirically verify in \cref{sec:empirical}).

    Given the importance of both individual and group fairness in ranking, we study a dual-task in fair ranking wherein addition to individual fairness, we want to ensure that every output ranking is group fair.
    It is important to note that stochastic rankings that incorporate group fairness guarantee \emph{in expectation} may not output rankings that are \emph{always} group fair.
    This is particularly concerning in high stake contexts (such as online recruiting) where it may be legally required to ensure group fairness for each output ranking.
    Thus, the following question arises:
    \emph{Given the individual fairness constraints, the group fairness constraints, and item utilities, is there an algorithm that outputs samples rankings such that (1) individual fairness is satisfied, (2) each output ranking is group fair, and (2) the expected utility of the rankings is maximized?}

    \subsection{Our Contributions}
        We present an efficient approximation algorithm (\cref{alg:main}) for the above problem when the (socially salient) groups of items form a laminar set family (i.e., any two groups are either disjoint or related by containment) (\cref{sec:overview}).
        {This algorithm works for a general family of individual and group fairness constraints, which includes the aforementioned constraints and their generalizations (\cref{def:group_constraints,def:individual_constraints}).}
        For any given individual and group fairness constraints {from these families} along with the utilities of all items, our algorithm outputs rankings sampled from a distribution such that the specified individual fairness are satisfied and each output ranking is group fair (\cref{thm:algo_main}).
        The rankings output by our algorithm have an expected utility that is at least $\alpha$ times the optimal utility, where $\alpha$ is a constant that depends on the utilities, position discounts, and group fairness constraints--it approaches 1 when the ranges of the utilities or of the position-discounts shrink (\cref{eq:strongerInequality}).
        In particular, for the aforementioned constraints specified by a parameter $k$, %
        $\alpha\geq \frac{v_1+v_2+\dots+ v_k}{k\cdot v_1},$ where $v_j$ is a position-discount for each position (\cref{thm:algo_main}).
        With the standard DCG-discounts and equal representation constraints for two groups (for which it suffices to have $k=2$), this approximation guarantee becomes $\alpha \geq 0.81$ {(see \cref{sec:theoretical_results} for other common examples).}
        Further, in addition to these utility-independent bounds, we also derive additional (stronger) bounds on $\alpha$ when the item's utilities are generated via certain generative models (\cref{thm:approxStochasUtil}).

        Empirically, we evaluate our algorithm on synthetic and real-world data against standard group fairness metrics (such as equal representation) and the individual fairness constraints proposed by \citet{AshudeepUncertainty2021}.
        We compare the performance of our algorithm to key baselines \cite{fairExposureAshudeep,celis2018ranking} with both two and multiple protected groups.
        Unlike baselines, in all simulations, our algorithm outputs rankings that always satisfy the specified individual fairness constraint and group fairness constraint; at a small cost to utility (a maximum of 6\% loss compared to the baselines) (Figures~\ref{fig:fairness} and \ref{fig:utility}).

        To the best of our knowledge, there is no previously known algorithm that takes a stochastic fair ranking satisfying fairness constraints \emph{in expectation} and rounds it to output rankings that are \emph{always} group fair without much loss in the ranking utility. A key technical challenge in doing so is that the Birkhoff-von Neumann rounding of stochastic fair rankings (as used in \citet{AshudeepUncertainty2021}) can violate group fairness constraints significantly. Overcoming this challenge requires a generalization of the Birkhoff-von Neumann rounding from the polytope of all rankings (that has only integral vertices) to the polytope of group-fair rankings (that can have fractional vertices).
        Stochastic rankings that satisfy group-wise representation constraints in the top-$k$ positions \emph{in expectation}, typically have a standard deviation of about $\sqrt{k}$ (e.g., Theorem 4.1 by \citet{mehrotra2022fair}).\footnote{Concretely, consider a ranking $R$ sampled from some distribution such that $R$ satisfies the equal representation constraints in expectation for two groups. The best guarantee provided by state-of-the-art fair ranking algorithms that sample a ranking \cite{mehrotra2022fair} is that, with high probability, the output $R$ places at most $\frac{k}{2}+O(\sqrt{k})$ items from each group in the top-$k$ positions--thereby violating the constraint by up to an additive factor of $O(\sqrt{k})$.}
        As $k$ is small in practice (e.g., $k\approx 10$ on LinkedIn), a deviation of $\sqrt{k}$ in group-wise representation is impractical.

    \paragraph{Outline.}
        The rest of the paper is organized as follows.
        \cref{sec:relatedWork} presents a brief overview of relevance estimation in Information Retrieval and Machine Leaning literature and works (in Algorithmic Fairness, Information Retrieval, and Machine Learning) that design algorithms for fair ranking.
        \cref{sec:model} presents preliminaries, defines the families of fairness constraints we consider (\cref{def:group_constraints,def:individual_constraints}), and the technical challenges (\cref{sec:challenges}).
        \cref{sec:theoretical_results} presents our theoretical results (\cref{thm:algo_main,thm:approxStochasUtil}) and overviews our algorithm (\cref{sec:algorithm_overview}).
        \cref{sec:empirical} presents empirical evaluation of \cref{alg:main} and compares its performance to state-of-the-art fair ranking baselines.
        \cref{sec:theoretical_results} presents the key proof ideas in our main theoretical result.
        \cref{sec:conclusion} presents the main limitations and concludes.

    \section{Related Work}\label{sec:relatedWork}
        \paragraph{Relevance estimation for ranking.}
            There is a huge body of work studying relevance estimation for automated information retrieval  \cite{LiddyAutomatic05,cleverdon1991significance} (also see \citet{IRbook} and the references therein).
            This body of works develops methods to estimate the relevance (or utility) of items to specific queries in a variety of contexts (from web search \cite{bar2008random}, personalized feeds \cite{jeh2003scaling}, to e-commerce \cite{dave2003mining}) and modalities (from web pages \cite{kleinberg1999authoritative}, images and videos \cite{googleLTR}, to products \cite{dave2003mining}).
            In the last three decades, the Machine Learning literature has also made significant contributions to this body of works \cite{liu2011learning} -- by supplementing traditional IR methods (by, e.g., auto-tuning hard-to-tune parameters) \cite{taylor2006optimisation}, increasing the efficiency of IR methods (via clustering-based techniques) \cite{singitham2004efficiency,altingovde2008incremental}, and proposing novel neural-network-based methods to predict item relevance \cite{burges2010ranknet,burges2005learning,weston2010large,googleLTR}.
            That said, despite the numerous methods for relevance estimation, the relevance values output by any method is bound to have some uncertainty and have also been observed to propagate societal biases in their inputs \cite{KayMM15,linkedin_ranking_paper}.

        \paragraph{Fair ranking.}
        Below we summarize previous work on group fair and individually-fair rankings, various approaches to formulate and solve these problems, and their relation to our work. For a comprehensive survey of these topics, \mbox{we refer the reader to \cite{fair_ranking_survey1,fair_ranking_survey2,criticalReviewFairRanking22,overviewFairRanking, Castillo18}.}

        \smallskip\noindent\textit{Group fair ranking.}
        There is a long line of work on group fair rankings that can be divided into two broad categories: (1) those that incorporate group fairness in learning-to-rank (LTR) algorithms \cite{causal2021yang,policyLearningAshudeep,ReducingDisparateExposureZehlike,robustFairLTR2020,YangS17} and (2) (re-)ranking algorithms that modify a given output ranking to satisfy group fairness constraints \cite{celis2018ranking,BalancedRankingYang2019,linkedin_ranking_paper,GorantlaUnderranking21,fairExposureAshudeep,AmortizedFairnessBiega2018}.
        Furthermore, there are diverse approaches within each of the above categories.
            The group fair LTR works can be further subdivided as algorithms that (a) post-process the estimated utilities to ensure group fairness \cite{causal2021yang}, (b) add group fairness penalty in the LTR objective for training \cite{policyLearningAshudeep,ReducingDisparateExposureZehlike,robustFairLTR2020}, and (c) modify feature representation learned by up-stream systems so that the utilities learned from the modified representation satisfy group fairness \cite{YangS17}.
            The group-fair re-ranking works can be further subdivided based on whether they guarantee that (a) \emph{each} output ranking satisfies group fairness constraints \cite{celis2018ranking,BalancedRankingYang2019,linkedin_ranking_paper,GorantlaUnderranking21} or (b) the group fairness constraints are satisfied \emph{in aggregate} over multiple rankings \cite{fairExposureAshudeep,AmortizedFairnessBiega2018}.
        As highlighted before, rankings output by these works may lead to adverse effects on individuals due to uncertainties in utilities.

        \smallskip\noindent\textit{Individually fair ranking.}
        There are a number of notions of individual fairness in ranking.
        For instance, \citet{AmortizedFairnessBiega2018} define ``equity of attention'' as requiring that the cumulative attention garnered by an individual item across multiple rankings (corresponding to same or different queries) be proportional to its average relevance (across the corresponding queries).
        \citet{AmortizedFairnessBiega2018} propose an online algorithm to minimize the aggregate unfairness between attention and relevance for all items, amortized over multiple rankings, while maintaining the ranking utility (e.g., NDCG@k) above a given threshold; while they consider a notion of individual fairness, they do not consider group fairness.
        Other notions include fairness of exposure \cite{fairExposureAshudeep} and ``merit-based'' fairness--we discuss these below \cite{AshudeepUncertainty2021}.

        \smallskip\noindent\textit{Ranking under both individual and group fairness constraints:}
        Some of the aforementioned works offer frameworks that can be adapted to incorporate both individual and group fairness constraints \cite{policyLearningAshudeep,fairExposureAshudeep,AshudeepUncertainty2021}.
        \citet{fairExposureAshudeep} define fairness of exposure in stochastic rankings, which can be applied at both individual and group levels.
        They solve a linear programming relaxation over stochastic rankings to maximize the expected ranking utility subject to the fairness of exposure \emph{in expectation} \cite{fairExposureAshudeep}.
        This approach gets around the exponential search space of deterministic rankings (or permutations), and their final output is the Birkhoff-von Neumann rounding of the above stochastic ranking. %
        \citet{AshudeepUncertainty2021} define a notion of ``merit-based'' fairness when the merits (or utilities) are random variables.
        They take a similar linear-programming approach as \citet{fairExposureAshudeep} to formulate the fairness constraints and use the Birkhoff-von Neumann rounding to generate the output rankings. %
        However, unlike this work, these works either do not guarantee that the output rankings satisfy the fairness constraints or they use randomization and only guarantee that the output rankings satisfy group fairness constraints in aggregate (not \textit{always}).

        Some recent works step away from the paradigm of utility maximization to incorporate individual fairness and group fairness constraints \cite{Soriano2021Maxmin,GorantlaUnderranking21,bailey2021fair}.
        \citet{Soriano2021Maxmin} propose a polynomial time (re-)ranking algorithm to maximize the utility of the worst-off individual subject to group fairness constraints. They also show that probabilistic rankings give better max-min fairness than deterministic rankings.
        \citet{GorantlaUnderranking21} define individual fairness in terms of the worst-case ``underranking'' of any item compared to its true or deserved rank, and give efficient (re-)ranking algorithms for given group fairness and underranking constraints simultaneously.
        {In the special case of selection, where items only have to be selected and their order is not relevant, \cite{bailey2021fair} select subsets maximizing a specified individual-fairness metric subject to satisfying group-fairness constraints.}
        Unlike these works, we require the output ranking to maximize the utility subject to satisfying the specified (group and individual) fairness constraints.
        Beyond ranking {and selection,} there are also works that incorporate fairness constraints in matching problems (where multiple items can be matched to one position) \cite{Chierichetti0LV19,kempe2023matching,benabbou2018diversity}.
        Among these \citet{benabbou2018diversity} consider block-wise group-fairness constraints that are similar to the block-wise group fairness constraints we consider (\cref{def:group_constraints}) and design $\frac{1}{2}$-approximation algorithm for the resulting constrained matching task.
        However, unlike our work, \citet{benabbou2018diversity} do not consider individual fairness constraints and our algorithm provides better than $\frac{1}{2}$-approximation guarantee for common utility models (such as discounted cumulative gain \cite{DCG}) and block sizes.

        \paragraph{Fair decision-making with inaccuracies and uncertainty in inputs.}
            A growing number of works develop fair algorithms for decision-making that are robust to uncertainties and inaccuracies in their inputs \cite{LamyZ19,awasthi2020equalized,MozannarOS20,wang2020robust,wang2021label,Mehrotra2021MitigatingBI,celis2021fairclassification,celis2021adversarial,prob_fair_clustering,mehrotra2022fair,pln_2022,kempe2023matching}.
            Many of these works consider inaccuracies in protected attributes in decision-making tasks including ranking but extending beyond to subset selection, clustering, and classification \cite{LamyZ19,awasthi2020equalized,MozannarOS20,wang2020robust,Mehrotra2021MitigatingBI,celis2021fairclassification,celis2021adversarial,prob_fair_clustering,mehrotra2022fair}.
            A few recent works also consider uncertainty in other parts of the input \cite{wang2021label,pln_2022,kempe2023matching}.
            Among these, most relevant to our work, \citet{kempe2023matching} and \citet{pln_2022} study variants of the matching problem with uncertainty in utilities of items:
            \citet{kempe2023matching} adapt \citet{AshudeepUncertainty2021}'s notion of merit-based fairness to the matching task.
            \citet{pln_2022} consider both individual fairness and group fairness constraints, where the individual fairness constraints can capture the merit-based notion of \citet{kempe2023matching}.
            \citet{pln_2022} give an algorithm that samples a matching that satisfies the individual fairness constraints and satisfies the group fairness constraint (always).
            Interestingly, despite the differences between the ranking and the matching problem, we show a connection between our approach and a technical result in \cite{pln_2022} (see \cref{sec:overview}).

\section{Preliminaries and Model}\label{sec:model}
    \paragraph{Ranking problem.}
        In ranking problems, given $m$ items, the task is to select a subset $S$ of $n$ of these items and output the permutation of $S$ that is most valuable for the user.
        This permutation is called a \textit{ranking.}
        We consider a variant of the problem where the values or \textit{utilities} of the items are known.
        There is a vast literature on estimating item utilities (for specific queries) \cite{jeh2003scaling,dave2003mining,bar2008random,IRbook,liu2011learning} (see \cref{sec:relatedWork}).
        Abstracting this, we assume that for each item $i$ there is a utility $\rho_i\geq 0$ and for each position $j$ there is a discount factor $v_j>0$ such that placing the item $i$ at position $j$ generates value $\rho_i\cdot v_j$.
        The utility of a ranking is the sum of utilities generated by each item in its assigned position.
        The position discounts encode the fact that users pay higher attention to items earlier in the ranking.
        Various values of position discounts  have been considered in information retrieval literature. Perhaps the more prevalent one is discounted cumulative gain (DCG), which is specified by $v_j = (\log(1+j))^{-1}$ for each $j$ \cite{DCG}.
        Without loss of generality, we assume that item indices are ordered in non-increasing order of utilities, i.e., $\rho_1\geq \rho_2\geq \dots \geq \rho_m$.

        We denote a ranking by an assignment matrix $R\in \zo^{m\times n}$:
        $R_{ij}=1$ if item $i$ is placed in position $j$ and $R_{ij}=0$ otherwise.
        In this notation, the utility of a ranking $R$ is $$\rho^\top R v = \sum\nolimits_{i=1}^m\sum\nolimits_{j=1}^n \rho_i v_j R_{ij}.$$
        This variant of the vanilla ranking problem asks to solve:
        $$\max\nolimits_{R\in \cR} \rho^\top R v,$$ where $\cR$ is the set of all assignment matrices denoting a ranking:
        \begin{align*}
             \cR\coloneqq \inbrace{X\negsp{}\in\negsp{} \zo^{m\times n} \colon \forall_{i\in [m]},\ \sum\nolimits_j X_{ij}\leq 1,\ \forall_{j\in [n]},\ \sum\nolimits_i X_{ij} =1 }
            \yesnum\label{def:set_of_rankings}
        \end{align*}
        Here, for each $i$, the constraint $\sum\nolimits_j X_{ij}\leq 1$ encodes that item $i$ is placed in at most one position.
        For each $j$, the constraint $\sum\nolimits_i X_{ij}= 1$ encodes that there is exactly one item placed at position $j$.

    \paragraph{Fairness constraints.}
        Group fairness constraints are defined with respect to $p\geq 2$ socially-salient groups $G_1,G_2,\dots,G_p$ (e.g., the group of all women or the groups of all Asian or all black individuals).
        For simplicity, we state our results when groups $G_1,\dots,G_p$ are disjoint.
        In \cref{sec:overview}, we show that the same results hold when $G_1,\dots,G_p$ belong to a general family of overlapping sets, the laminar set family (see \cref{sec:overview}).
        There are many forms of group fairness constraints for ranking.
        We consider a class of fairness constraints that are placed over disjoint blocks of positions $B_1,B_2,\dots,B_q$.
        Blocks of positions can correspond to pages of search results or different windows in a scrollable feed.
        A basic example is where the $q=\frac{n}{k}$ blocks are disjoint sets of $1\leq k\leq n$ consecutive positions.
        Note, however, in general blocks can have different sizes.
        \begin{definition}[\textbf{Group fairness constraints}; \cite{GorantlaUnderranking21}]\label{def:group_constraints}
            Given matrices $L,U\in \Z^{q\times p}$ a ranking $R$ satisfies the $(L,U)$-group fairness constraints if for each $j\in [q]$ and $\ell\in [p]$
            \begin{align*}
                L_{j\ell}\leq \sum\nolimits_{i\in G_\ell}\sum\nolimits_{t\in B_j} R_{ij} \leq U_{j\ell}.
                \yesnum\label{eq:group_constraints}
            \end{align*}
        \end{definition}
        \noindent The above family of constraints can encapsulate a variety of group fairness notions.
        For instance, the equal representation constraint is captured by $L_{\ell j}=\floor{\frac{\abs{B_j}}{p}}$ and $U_{\ell j}=\ceil{\frac{\abs{B_j}}{p}}$ for each $\ell$ and $j$. (For readability, we omit the floor and ceiling operators henceforth.)
        To capture the Four-Fifths rule, it suffices to choose any constraints such that $L_{\ell j}\geq \frac{4}{5}\cdot U_{t j}$ for each $\ell,t\in [p]$ and $j\in [q]$.
        Existing works study related families of constraints \cite{fair_ranking_survey1,fair_ranking_survey2,overviewFairRanking}.
            We specifically consider \cref{def:group_constraints} as its block structure enables us to design efficient algorithms.
            In \cref{sec:additional_remarks}, we show that \cref{def:group_constraints} can ensure fairness with respect to the families of constraints from existing works \cite{fair_ranking_survey1,fair_ranking_survey2,overviewFairRanking}, hence, it also captures the corresponding notions of group fairness.

        That said, \cref{def:group_constraints} does not capture the adverse effects on specific items or individuals (henceforth, just items):
        highly-relevant items may get low visibility even though each protected group is sufficiently represented in every block (\cref{ex:exampleIntro}).
        To capture such underrepresentation, we consider the following family of individual fairness constraints.
        \begin{definition}[\textbf{Individual fairness constraints}]\label{def:individual_constraints}
            Given $A,C\in [0,1]^{m\times q}$, a distribution $\cD$ over the set $\cR$ of all rankings satisfies $(C,A)$-individual fairness constraints if for each $i\in [m]$ and $j\in [q]$
            \begin{align*}
                C_{ij} \leq \Pr\nolimits_{R\sim \cD}\insquare{R_{it} = 1 \text{ for some $t\in B_j$}} \leq A_{ij}.
                \yesnum\label{eq:equality_const_indv_fairness}
            \end{align*}
        \end{definition}
        \noindent By choosing $C_{ij}, A_{ij}$, one can lower and upper bound the probability that item $i$ appears in the block $B_j$ by a desired value.
        When item utilities are only probabilistically known, then a natural choice for the lower bounds is
        \[
            \hspace{-1.5mm}
            C_{ij} = Z \cdot \Pr_{\cU}[\text{$\exists t\in B_j$, $\rho_i$ is the $t$-th largest value in $\inbrace{\rho_1,\dots,\rho_m}$}].
            \yesnum\label{eq:example_indv_const}
        \]
        where $\cU$ is the joint distribution of utilities (see the discussion in \cite{AshudeepUncertainty2021}) {and $Z$ is a normalization constant that ensures that $\sum_{j=1}^m C_{ij}=1$ for all $1\leq i\leq m$.}
        Existing works have considered closely related families of individual fairness constraints and shown that those families capture many common notions of individual fairness \cite{AshudeepUncertainty2021,Patrick2022Expohedron}.
            Like group fairness constraints in \cref{def:group_constraints}, we consider the specific family in \cref{def:individual_constraints} as it enables efficient algorithms.
            In \cref{sec:additional_remarks}, we show that \cref{def:individual_constraints} can ensure fairness with respect to the constraints studied in  \cite{AshudeepUncertainty2021,Patrick2022Expohedron}, hence, \mbox{can also capture most common notions of individual fairness.}

        In the special case where $\cD$ is supported at just one ranking $R$, \cref{def:individual_constraints} specializes to the following: A ranking $R$ is $(C,A)$-individually fair if and only if the distribution supported on just $R$ is $(C,A)$-individually fair.
        Apart from very specific choices of the matrices $C$ and $A$, no ranking $R$ can be $(C,A)$-individually fair.
        For instance, this is true, whenever there is at least one $j$ such that $C_{ij}$ is positive for more than $\abs{B_j}$ choices of $i\in [n]$.
        Thus, in general, some of the output rankings must violate the individual fairness constraint.
        This has been recognized in the fair ranking literature \cite{AmortizedFairnessBiega2018,Oosterhuis21,fair_ranking_survey1,fair_ranking_survey2,overviewFairRanking}, and is one of the main reasons to consider randomized algorithms for ranking.
        In contrast, as mentioned in \cref{sec:intro}, it may be necessary (legally or otherwise) to ensure that each output ranking satisfies the group fairness constraints.
        Motivated by this, our goal is to solve the following problem.

        \begin{problem}[\textbf{Ranking problem with individual and group fairness constraints}]\label{prob:main}
            Given matrices $L,U, A, C$ and vectors $\rho,v$,
            find a distribution $\cD^\star$ over rankings maximizing the expected utility $\Pr_{R\sim \cD^\star}[\rho^\top R v]$ subject to satisfying  (i) $(C,A)$-individual fairness constraints and (ii) that each $R$ in the support of $\cD^\star$ satisfies $(L,U)$-group fairness constraints.
        \end{problem}
        \noindent A naive representation of $\cD^\star$ is to specify $\Pr_{S\sim \cD^\star}[S=R]$ for each ranking $R$.
        However, since the number of rankings is exponential in $n$ and $m$ (at least $n!$), even writing down this representation is intractable.
        Instead, like prior works \cite{fairExposureAshudeep,AshudeepUncertainty2021}, we encode $\cD^\star$ by the following $nm$ marginal probabilities.
        Given a distribution $\cD$, let $D\in [0,1]^{m\times n}$ encode the following marginals of $\cD$: $$D_{ij}\coloneqq \Pr_{R\sim \cD}[R_{ij}=1].$$
        In other words, in a ranking sampled from $\cD$, item $i$ appears in position $j$ with probability $D_{ij}$.

    \subsection{Challenges in Solving \cref{prob:main}}\label{sec:challenges}
        We first discuss the approach of a prior work \citet{fairExposureAshudeep,AshudeepUncertainty2021} and then discuss why it is challenging to use a similar approach to solve \cref{prob:main}.

        \paragraph{The approach of prior work.}
        \citet{AshudeepUncertainty2021} study a version of \cref{prob:main} where the blocks overlap and there are no group fairness constraints.
        Let $\wh{\cD}$ and $\hD$ be an optimal solution of their problem and its marginal respectively.
        Their algorithm has two parts: (1) solve \prog{prog:from_ashudeep} to compute $\hD$, (2) use the Birkhoff-von-Neumann (BvN)  algorithm \cite{brualdi_1982_birkhoff_von_neumann} to recover $\wh{\cD}$ from $\hD$.
        \begin{align*}
            &\argmax\nolimits_{D\in [0,1]^{m\times n}} \quad \rho^\top D v,
            \yesnum\label{prog:from_ashudeep}\\
            &\quad \st,\quad
            \forall i,\qquad\quad\ \ C_{ij}\leq \sum\nolimits_{t\in B_j} D_{it}\leq A_{ij}, \yesnum\label{eq:prog_from_ashudeep_eq_1}\\
            & \quad\qquad\ \ \forall j,\qquad\quad \ \
                \sum\nolimits_{i} D_{ij}=1
                        \quad\text{and}\quad
                    \forall i, \ \ \sum\nolimits_{j} D_{ij}\leq 1.\yesnum\label{eq:prog_from_ashudeep_eq_2}
        \end{align*}
        Consider any distribution $\cD$ and its marginal $D$, it can be shown that the objective $\rho^\top D v$ is equal to expected utility of $\cD$,  i.e., $\rho^\top D v = \Pr_{R\sim \cD}[\rho^\top R v]$, and that $D$ is feasible for \prog{prog:from_ashudeep} if and only if $\cD$ is $(C,A)$-individually fair.
        Using these, one can show that $\hD$ is an optimal solution of \prog{prog:from_ashudeep}.

        Since \prog{prog:from_ashudeep} is a linear program with $\poly(n,m)$ variables and constraints, it can be solved in polynomial time to get $\hD$.
        $\wh{\cD}$ can be recovered from $\hD$ using the BvN algorithm:
        Given $\hD$, the BvN algorithm outputs at most $nm$ rankings $R_1,\dots,R_{nm}$ and corresponding coefficients $\alpha_1,\dots,\alpha_{nm}$ such that
        $\wh{\cD}$ is the distribution that samples ranking $R_i$ with probability $\alpha_i$ for each $1\leq i\leq nm.$

        \newcommand{\rgf}{\ensuremath{\cR_{\rm GF}}}
        \newcommand{\mgf}{\ensuremath{\cM_{\rm GF}}}

        \paragraph{Challenges in solving \cref{prob:main}.}
            Let $\rgf$ be the set of all rankings that satisfy the $(L, U)$-group fairness constraints.
            Unlike \citet{AshudeepUncertainty2021}, we require the output distribution $\cD$ to be supported over $\rgf$.
            In other words, each $R$ sampled from $\cD$ should satisfy the $(L,U)$-group fairness constraints. %
            An obvious approach to solve \cref{prob:main} is to add ``group fairness constraints'' to \prog{prog:from_ashudeep} (to get \prog{prog:mod_of_ashudeep}) and generalize \cite{AshudeepUncertainty2021}'s algorithm as follows:
            \begin{enumerate} %
                \item Find a solution $\tD\in [0,1]^{m\times n}$ of \prog{prog:mod_of_ashudeep}
                \item Given $\tD$, output a distribution $\wt{\cD}$ such that $\tD$ is $\wt{\cD}$'s marginal and each $R$ sampled from $\wt{\cD}$ is in $\rgf$, i.e., $\Pr_{R\sim \wt{\cD}}[R\in \rgf{}]=1$
            \end{enumerate}
            \vspace{-3mm}

            \begin{align*}
                &\argmax\nolimits_{D\in [0,1]^{m\times n}} \ \ \rho^\top D v,
                \yesnum\label{prog:mod_of_ashudeep}\\
                &\quad \st,\qquad\qquad\quad\ \ D \text{ satisfies \cref{eq:prog_from_ashudeep_eq_1,eq:prog_from_ashudeep_eq_2}},
                \yesnum\label{eq:mod_of_ashudeep_eq_1}\\
                & \forall j\in [q],\ \forall \ell\in [p],\ \
                    L_{j\ell}\leq \sum\nolimits_{i\in G_\ell}\sum\nolimits_{t\in B_j} D_{ij} \leq U_{j\ell}.\yesnum\label{eq:mod_of_ashudeep_eq_2}
            \end{align*}
            \noindent Unfortunately, in general, the marginal of $\cD^\star$, say $D^\star$, is not  a  solution of \prog{prog:mod_of_ashudeep}, hence, in general, the output distribution $\wt{\cD}$ is different from the solution $\cD^\star$.
            In fact, it is possible that there is no distribution $\wt{\cD}$ supported over $\rgf$ such that $\tD$ is the marginal of $\wt{\cD}$--making it impossible to implement Step 2.
            One can explore different relaxations of Step 2.
            A relaxation is to output $\wt{\cD}$ that maximizes $\Pr_{R\sim \wt{\cD}}[R\in \rgf{}]=1$ subject to ensuring that $\tD$ is the marginal of $\wt{\cD}$.
            This, however, turns out to be \np-hard (\cref{thm:hardness_main}). %

            Let $\cS$ be the set of all matrices $D$ that are a marginal of some distribution $\cD$ that: (1) is $C$-individually fair and (2) is supported over rankings in $\rgf$.
            The key reason for these difficulties is that there are feasible solutions of  \prog{prog:mod_of_ashudeep} that are not in $\cS$.
            Using the definition of the marginal and $\cS$, one can show that $D^\star$ is  an  optimal solution to
            $$\argmax\nolimits_{D\in \cS}\rho^\top D v.$$
            However, it is unclear how to solve this program as it is not obvious how to even check if a matrix is in $\cS$.
            Thus, solving \cref{prob:main} requires new ideas.

\section{Theoretical Results}\label{sec:theoretical_results}
        In this section, we give our main algorithmic and hardness results.

        \subsection{Main Algorithmic Result}

        \paragraph{\bf Our approach.}
            The key idea is to consider a family of ``coarse rankings'' or \textit{matchings}: Each matching places $\abs{B_j}$ items in block $B_j$ (for each $1\leq j\leq m$), but it does not specify which items are placed at which positions inside $B_j$.
            We define natural analogs of the group fairness and the individual fairness constraints for matchings--leading to an analog of \cref{prob:main} for matchings.
             At a high level, our algorithm (\cref{alg:main}) first solves this analogue of \cref{prob:main} to get a distribution $\cD^{(\cM)}$ over matchings and then maps $\cD^{(\cM)}$ to a distribution $\cD=f(\cD^{(\cM)})$ over rankings; for an appropriate function $f$. %

            The fairness guarantee follows because of the facts that: (1) a matching $M$ is $(L,U)$-group-fair if and only if the corresponding ranking $R=f(M)$ is $(L, U)$-group-fair and (2) a distribution $\cD^{(\cM)}$ over matchings is $(C,A)$-individually-fair if and only if the corresponding distribution $\cD=f(\cD^{(\cM)})$ over rankings is $(C,A)$-individually-fair.
            This is where we use the fact that the blocks are disjoint.
            The utility guarantee follows because if $R=f(M)$ then the utility of $R$ is at least $\alpha$-times the utility of $M$ (see \cref{sec:algorithm_overview} for a definition of $f$). %

            Crucially, we are able to efficiently solve the analog of \cref{prob:main} for matchings because the linear inequalities capturing the group fairness constraints for matchings form a polytope such that all of its vertices are integral.
            The analogous statement is not true for rankings (see \cref{sec:fractional_vertex}); this is why all optimal solutions of \prog{prog:mod_of_ashudeep} can be different from $D^\star$.

        \paragraph{Main algorithmic result.}
        Next, we state our main algorithmic result, whose proof appears in \cref{sec:proofof:thm:algo_main}.
        This result holds for blocks of different sizes and any position discounts, but we also give a simpler expression for the utility when each block has size $k$ and the position discounts $v$ satisfy the following condition:
            \begin{align*}
                \text{
                    $\forall r\geq 0$,  $\frac{v_{t+r}}{v_t}$ is a non-decreasing in $1\leq t \leq n-r$.
                }
                \yesnum\label{eq:assumption_on_v}
            \end{align*}
            Standard position discounts such as those in DCG \cite{DCG} satisfy this assumption.

        \begin{restatable}[\textbf{Main algorithmic result}]{theorem}{thmAlgoMain}
            \label{thm:algo_main}
            There is a polynomial time randomized algorithm (\cref{alg:main}) that given matrices $L,U\in \Z^{q\times p}$, and $A,C\in \R^{m\times q}$, and vectors $\rho\in \R^m$ and $v\in \R^n$, outputs a ranking $R$ sampled from a distribution $\cD$ such that:
            \begin{itemize}[itemsep=-1pt]
                \item $\cD$ satisfies $(C,A)$-individual fairness constraint, and
                \item $R$ satisfies $(L,U)$-group fairness constraint.
            \end{itemize}
            The expected utility of $R$ is at least $\alpha$ times the expected utility of a ranking sampled from $\cD^\star$.
            If all blocks have size $k$ and \cref{eq:assumption_on_v} holds, then %
            \begin{align*}
                \alpha\geq \frac{v_1+v_2+\dots+v_k}{k \cdot v_1}.
                \yesnum\label{eq:lowerbound_on_alpha}
            \end{align*}
            Furthermore, regardless of block-sizes and \cref{eq:assumption_on_v}, it holds that $\alpha\geq \min_{1\leq j\leq q} \frac{\sum_{s\in B_j}  v_s}{\abs{B_j} \cdot v_{s(j)}},$ where $s(j)$ is the first position in $B_j$.
        \end{restatable}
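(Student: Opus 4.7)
The plan is to reduce \cref{prob:main} to an analogous problem over matchings, solve that exactly, and lift matchings to rankings. Define a matching $M \in \zo^{m \times q}$ by $\sum_j M_{ij} \le 1$ and $\sum_i M_{ij} = \abs{B_j}$, with $M_{ij}=1$ meaning item $i$ is placed somewhere in block $B_j$. The group- and individual-fairness constraints translate to matchings directly, using the same matrices $L, U, C, A$. Let $f$ be the deterministic map that, given $M$, produces a ranking $R$ by placing the items assigned to each $B_j$ into the positions of $B_j$ in decreasing order of utility $\rho$. Because the blocks are disjoint, $M$ is group-fair iff $R = f(M)$ is, and a distribution $\cD^{(\cM)}$ over matchings is $(C,A)$-individually fair iff the distribution $\cD$ obtained by sampling $M \sim \cD^{(\cM)}$ and outputting $f(M)$ is.

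\paragraph{LP relaxation and rounding.} The algorithm first solves the LP maximizing $\bar u(Y) \coloneqq \sum_j \bar v_j \sum_i \rho_i Y_{ij}$, where $\bar v_j \coloneqq \abs{B_j}^{-1} \sum_{s \in B_j} v_s$, over $Y \in [0,1]^{m \times q}$ subject to the matching constraints, the group-fairness constraints, and the individual-fairness box constraints $C_{ij} \le Y_{ij} \le A_{ij}$. Call the optimum $\widetilde Y$. The key structural fact (to be proved in \cref{sec:overview} for laminar groups, analogous to a result in \cite{pln_2022}) is that the polytope cut out by the matching and group-fairness constraints alone has integral vertices. By Carath\'eodory's theorem, $\widetilde Y$ then decomposes as a convex combination $\sum_k \lambda_k M_k$ of integral group-fair matchings, and the decomposition can be computed efficiently via a generalization of Birkhoff--von Neumann. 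Sampling $M$ with probabilities $(\lambda_k)$ and returning $R = f(M)$ yields an output that is always group-fair (every $M_k$, hence $R$, is) and whose matching marginal equals $\widetilde Y$ (so individual fairness holds in expectation).

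\paragraph{Utility analysis.} To bound $\alpha$ I would chain three inequalities. First, by the rearrangement inequality, placing items within a block in utility-sorted order gives utility at least as large as the average over all within-block permutations, and that average equals $\bar u(M) = \sum_j \bar v_j \sum_i \rho_i M_{ij}$; hence $\rho^\top f(M) v \ge \bar u(M)$ pointwise and $\Ex[\rho^\top R v] \ge \bar u(\widetilde Y)$. Second, the coarsening $Y^\star_{ij} \coloneqq \Pr_{R^\star \sim \cD^\star}[\sum_{s \in B_j} R^\star_{is} = 1]$ satisfies every LP constraint (the matching and individual-fairness constraints are direct; group fairness holds because \emph{every} $R^\star$ in $\supp \cD^\star$ is group-fair), so $\bar u(\widetilde Y) \ge \bar u(Y^\star)$. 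Third, since $v$ is non-increasing within each block, $\Ex[\rho^\top R^\star v] \le \sum_j v_{s(j)} \sum_i \rho_i Y^\star_{ij} \le \max_j \frac{v_{s(j)}}{\bar v_j}\, \bar u(Y^\star)$, where the last step uses $\rho_i, Y^\star_{ij} \ge 0$. Composing yields $\alpha \ge \min_j \bar v_j / v_{s(j)}$, exactly the general bound in the theorem.

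\paragraph{Specialization and main obstacle.} For uniform block size $k$, $\bar v_j / v_{s(j)} = \tfrac{1}{k}\sum_{r=0}^{k-1} v_{s(j)+r}/v_{s(j)}$, and \cref{eq:assumption_on_v} makes each ratio $v_{s(j)+r}/v_{s(j)}$ non-decreasing in $s(j)$, so the minimum over blocks is attained at $s(j)=1$, producing $\alpha \ge (v_1 + \cdots + v_k)/(k v_1)$. The main obstacle in executing this plan is the integrality claim used in the rounding step: the analogous polytope for \emph{rankings} has fractional vertices (\cref{sec:fractional_vertex}), which is exactly why the detour through matchings is necessary, and establishing integrality for the matching polytope relies on the laminarity of the groups and mirrors the matching result of \cite{pln_2022}. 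Once the rounding lemma is in hand, the remaining ingredients---LP solvability, rearrangement, and the coarsening argument---are essentially routine.
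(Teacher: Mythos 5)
Your proposal is correct, and it follows the paper's blueprint in all essential respects: the detour through block-level matchings, the integrality of the group-fair matching polytope (via the result of \cite{pln_2022}), a Carath\'eodory/Birkhoff--von Neumann-style decomposition into integral group-fair matchings, within-block refinement in non-increasing order of utility justified by the Chebyshev/rearrangement inequality, and the same final bound $\alpha\geq \min_j \frac{\sum_{s\in B_j} v_s}{\abs{B_j}\, v_{s(j)}}$ specialized under \cref{eq:assumption_on_v}. The one genuine difference is where the linear program lives. The paper's \cref{alg:main} solves \prog{prog:mod_of_ashudeep} in the \emph{ranking} space (variables $D\in[0,1]^{m\times n}$, objective $\rho^\top D v$), projects the optimum $\hD$ to the matching space via $g$, and then compares utilities through \cref{lem:utility_guarantee_1,lem:utility_guarantee_2}, using that $D^\star$ is feasible for \prog{prog:mod_of_ashudeep} and that the output marginal and $\hD$ have identical block-wise sums. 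You instead solve a smaller LP directly over the matching marginals $Y\in[0,1]^{m\times q}$ with block-averaged discounts $\overline{v}_j$, and you upper-bound the optimum by showing the coarsened marginal $Y^\star$ of $\cD^\star$ is feasible for this matching LP. Both routes yield exactly the same approximation factor; yours is slightly more economical (fewer LP variables, no projection step, and the utility chain collapses the paper's two utility lemmas into one), while the paper's formulation keeps the LP identical to the one used by the prior work it builds on (and uses as a baseline), which makes the comparison to that work and the discussion of its failure modes more direct. One small point to tidy up when executing the plan: your matching polytope uses the inequality $\sum_j M_{ij}\leq 1$, whereas the paper pads with dummy positions to make $n=m$ and work with equalities before invoking the integrality lemma of \cite{pln_2022}; the quoted lemma covers your inequality version as well (take $L^\circ_{i[q]}=0$, $U^\circ_{i[q]}=1$), so this is a presentational rather than a substantive gap.
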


            \noindent Thus, \cref{alg:main} is an $\alpha$-approximation algorithm for \cref{prob:main}.
            Here, $\alpha$ is a value that approaches 1 as the range of position discounts shrinks.
            In the worst case, when $v_2=v_3=\dots=v_k=0$, the RHS in \cref{eq:lowerbound_on_alpha} is $\frac{1}{k}$.
            For common position discounts the RHS of \cref{eq:lowerbound_on_alpha} is closer to 1: for instance, for DCG \cite{DCG} with $k=2,3,4$ it is at least $0.81,0.71,0.64$ respectively.
            These lower bounds are tight in some examples where a few items have a very large utility.
            If, however, items' utilities lie in a bounded interval, then this lower bound can be improved.
            To see concrete bounds, suppose $\frac{\max_i \rho_i}{\min_i \rho_i} \leq 1+\Delta$.
            One can show that
            \begin{align*}
                 \alpha
                &\geq \inparen{ 1 + \Delta}\inparen{  1+\frac{k  v_1\Delta}{  v_1+v_2+\dots+v_k}}^{-1}.
                \yesnum\label{eq:strongerInequality}
            \end{align*}
            Thus, $\alpha$ approaches 1 as $\Delta$ approaches 0, i.e., as the range of item utilities shrinks.
            One can show that, for any $\Delta\geq 0$, the above bound is at least as large than the RHS in \cref{eq:lowerbound_on_alpha} (\cref{sec:verify2}).
            The proof of \cref{eq:strongerInequality} appears in \cref{sec:verify}.
            We present further utility-dependent approximation guarantees in \cref{sec:utilityDependentGuarantee}.
            As for the running time, \cref{alg:main} solves a linear program in $O(nm)$ variables with $O(np+m)$ constraints and performs $O(n^2m(p+m))$ additional arithmetic operations (\cref{sec:complete_proof_algo_main}).
            \cref{thm:algo_main} also holds, without change, for any set of protected groups that form a laminar family, i.e., for any set of groups such that either $G_\ell\subseteq G_k$ or $G_k\subseteq G_\ell$ for each $1\leq \ell,k\leq p$  (\cref{sec:overview}).
            {Laminar groups can be relevant in contexts where (some notion of) group fairness for intersectional groups is desired: as a concrete example, if one defines (1) $G_1$ to be the group of all non-women, (2) $G_2$ to be the group of all women, and (3) $G_3$ to be the intersectional group of all Black women (within the group of all women), then \cref{alg:main} ensures that (the specified notion of) group fairness is also satisfied for the intersectional group of all Black women.}
            {Finally, one can verify that the more general bound in \cref{thm:algo_main} (i.e., $\alpha\geq \min_{1\leq j\leq q} \frac{\sum_{s\in B_j}  v_s}{\abs{B_j} \cdot v_{s(j)}}$) reduces to the one in \cref{eq:lowerbound_on_alpha} when all blocs have size $k$ and \cref{eq:assumption_on_v} holds} (see \cref{eq:verifying_upperbound}).

    \subsection{Better Approximation Guarantees With Distributional Assumptions}\label{sec:utilityDependentGuarantee}%
        Next, we present utility-dependent approximation guarantees of \cref{alg:main} under generative models where each item $i$'s utility $\rho_i$ has uncertainty and is only ``probabilistically known.''
        For the sake of concreteness, we begin with the generative model where, the ``true'' utility, $\rho_i$, of each item $1\leq i\leq m$ is drawn from the normal distribution $\mathcal{N}(\mu_i,\sigma_i^2)$ independent of all other items where $\mu_i\in \R$ and $\sigma_i\geq 0$ are parameters that are known to the algorithm.

        Here, we choose the normal distribution for the sake of simplicity: more generally, $\rho_i$ can be drawn from any (possibly nonsymmetric) sub-gaussian distribution with mean $\mu_i$ and variance $\sigma_i^2$.
        In particular, the specific sub-gaussian distribution can be different for different items.
        If we choose the normal distribution for each item, then the resulting utility model is identical to the implicit variance model of \citet{emelianov2020implicit}, who claim that such uncertainties in the utilities can arise in the real world.

        Uncertainties in utilities arise from various sources (from measurement errors, uncertainties in prediction, to errors in data) in practice and are one of the motivations to consider individual fairness constraints \cite{AshudeepUncertainty2021}.
        When utilities are only probabilistically known, a natural family of individual fairness constraints (which is also proposed by \citet{AshudeepUncertainty2021}) is in \cref{eq:example_indv_const}.
        Under these individual fairness constraints, when the parameters $\mu_1,\mu_2,\dots,\mu_m$ are i.i.d. from the uniform distribution on $[0, S]$ (for some constant $S>0$), we have the following approximation guarantee whose proof appears in \cref{sec:proofof:thm:approxStochasUtil}.

        \begin{restatable}[]{theorem}{thmApproxStochasUtil}
            \label{thm:approxStochasUtil}
            Suppose $\mu_1,\mu_2,\dots,\mu_m$ are i.i.d. from the uniform distribution on $[0,S]$, $\rho_i$ follow the above generative model, $C\in \R^{m\times q}$ is as specified in \cref{eq:example_indv_const},  $A=[1]_{m\times q}$, and $nm^{-1}$ is bounded away from 1.
            \cref{alg:main}, given means of the utilities $\mu\in \R^m$ and other parameters $(L,U,A,C,v)$,
            outputs a ranking $R$ sampled from a distribution $\cD$ that satisfies the fairness constraints in \cref{thm:algo_main} and has an expected utility at least $\alpha$ times the expected utility of a ranking sampled from $\cD^\star$, where
            \[
                \alpha \geq 1 - \wt{O}\inparen{\frac{\sigma_{\rm \max}}{S}\cdot \sqrt{\log{m}}} - O\inparen{m^{-\frac{1}{4}}}\quad \text{and}\quad \sigma_{\max}\coloneqq\max\inbrace{\sigma_1,\sigma_2,\dots,\sigma_m}.
            \]
        \end{restatable}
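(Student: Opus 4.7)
The plan is to derive \cref{thm:approxStochasUtil} by combining the utility-dependent approximation bound in \cref{eq:strongerInequality} with concentration estimates under the generative model, applied block-by-block rather than globally.

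First, I would adapt the derivation of \cref{eq:strongerInequality} to a block-wise statement: the loss of \cref{alg:main} relative to $\cD^\star$ inside block $B_j$ is governed by the local ratio $\Delta_j := \max_{i\text{ placed in }B_j}\rho_i/\min_{i\text{ placed in }B_j}\rho_i - 1$, not the global max-to-min ratio. Rerunning the argument behind \cref{eq:strongerInequality} per block and summing across blocks yields $\alpha \geq 1 - O(\max_j \Delta_j)$ for standard position discounts, so the whole task reduces to controlling $\Delta_j$ under the generative model.

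Second, I would bound $\Delta_j$ using two sources of fluctuation. The choice of $C$ in \cref{eq:example_indv_const} forces item $i$ to appear in $B_j$ essentially only when the rank of $\rho_i$ among all utilities lies in $B_j$ with non-negligible probability, so items placed in $B_j$ have $\mu_i$ close to the corresponding quantile of the $\mu$-order statistics of $U[0,S]$. Standard uniform order-statistic concentration then gives that the spread of $\mu_i$ within a block is $\tO(S/\sqrt{m})$ with high probability, and the Gaussian noise $\rho_i - \mu_i$ contributes an additional $O(\sigma_{\max}\sqrt{\log m})$ via a Gaussian maximal inequality over $m$ samples. Because $n m^{-1}$ is bounded away from $1$, the individual-fairness constraints force the selected items in the ``bulk'' blocks to have $\mu_i = \Omega(S)$, giving
\[
    \Delta_j \;=\; \tO\!\left(\tfrac{\sigma_{\max}\sqrt{\log m}}{S}\right) + O\!\left(m^{-1/2}\right)
\]
with high probability for such blocks.

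Third, I would handle the bottom blocks --- those intersecting the small $\mu$-quantiles --- separately. There $\Delta_j$ can be of constant order, but both the position discounts at late positions and the typical utilities are small, so those blocks' contribution to OPT and to the algorithm's output are small. Optimizing the threshold between ``bulk'' and ``tail'' blocks (trading off the block-wise $\Delta_j$ against the fraction of OPT supported on those blocks) yields the stated $O(m^{-1/4})$ term. The main obstacle I anticipate is controlling the coupling between (i) the random utilities $\rho$, (ii) the constraint matrix $C$, which is itself a function of the distribution of $\rho$, and (iii) the fractional placements in $\cD^\star$: one must show that the event ``a large-$\mu$ item lands deep in the ranking, or a small-$\mu$ item lands early'' has negligible mass under $\cD^\star$ uniformly over the randomness of $\mu$. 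I would rely on monotonicity of the marginals $C_{ij}$ in $\mu_i$ (items with larger $\mu_i$ are more likely to sit in earlier blocks) together with a union bound over items and blocks, combined with the concentration bounds above.
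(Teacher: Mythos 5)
Your overall strategy is the same one the paper uses -- concentration of the uniform order statistics plus Gaussian concentration, combined with the observation that the constraint matrix $C$ of \cref{eq:example_indv_const} pins which items can sit in which block -- but you package it as ``\cref{alg:main} vs.\ the LP optimum of \cref{prog:mod_of_ashudeep} via shared block marginals and a per-block spread $\Delta_j$,'' whereas the paper sandwiches the per-block utilities of \emph{both} the algorithm's ranking and $R^\star\sim\cD^\star$ directly around the quantile $S\inparen{1-jk/m}$ and never needs a block-wise analogue of \cref{eq:strongerInequality}. That reorganization is fine in principle, and your identification of the two error sources ($\sigma_{\max}\sqrt{\log m}/S$ from the noise, a polynomially small term from order-statistic fluctuations) matches the paper.

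There are, however, two gaps. First, the step ``the choice of $C$ forces item $i$ to appear in $B_j$ essentially only when its rank lies in $B_j$ with non-negligible probability'' is asserted but not justified, and it is the crux: the individual-fairness constraints are only \emph{lower} bounds and $A\equiv 1$ is vacuous, so nothing you state prevents a feasible distribution (or the LP optimum, e.g.\ under pressure from the group-fairness constraints) from placing extra mass of an item in a block far from its rank. What rescues this is the normalization $\sum_j C_{ij}=1$ together with $\sum_j\Pr[i\in B_j]\le 1$, which forces every feasible solution to satisfy the block-membership constraints with \emph{equality}; the paper states and uses exactly this. Moreover, even granting that, your reduction ``$\alpha\ge 1-O(\max_j\Delta_j)$'' is too crude as stated: an item whose rank typically lies in block $1$ still has a tiny but positive $C_{ij}$ for a late block $j$, so it carries positive mass there and makes $\Delta_j$ of constant order. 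You must first excise item--block pairs with $C_{ij}\le \delta/(mq)$ (the paper's $P_{\rm small}$ and event $\evG$), pay an additive $O(\delta)$ in utility for them, and only then is the within-block spread small; optimizing $\delta\approx\sigma_{\max}/S$ is where the first error term actually comes from. Your ``main obstacle'' paragraph gestures at this but the union bound needs this thresholding to be meaningful. Second, your Step 3 addresses blocks ``intersecting the small $\mu$-quantiles,'' but under the hypothesis that $nm^{-1}$ is bounded away from $1$ no such block exists: every ranked position corresponds to a quantile $\mu_{jk}\approx S\inparen{1-jk/m}=\Omega(S)$. So the bulk/tail threshold optimization you invoke is vacuous, and the $O\inparen{m^{-1/4}}$ term does not arise from it; in the paper it comes from the failure probability of the Chebyshev-based order-statistic concentration (\cref{lem:unifConc}) together with the excluded small-$C_{ij}$ mass. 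This does not break your proof (a stronger order-statistic bound would only improve the exponent), but the claimed provenance of that term is wrong and the step should be dropped rather than ``optimized.''
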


        \noindent
        Hence, the above theorem shows that if the variance of the uncertainty in items' utilities ($\sigma_{\max}^2$) is ``small'' compared to the range of their utilities ($S$), then with high probability \cref{alg:main} has a near-optimal approximation guarantee.
        As for the assumption about the distribution of the means $\mu_1, \mu_2,\dots, \mu_m$,
        note that if the utilities denote the percentiles of items, then one expects $\mu_1,\mu_2,\dots,\mu_m$ to be uniformly distributed in $[0,100]$ \cite{KleinbergR18}.
        In this case, $S=100$ and the approximation guarantee is of the order of $1-S^{-1}\cdot \sqrt{\log{m}}\geq 0.95,$ for any $m\leq 10^{10}$.

        The proof of the above result only uses the concentration property of the Gaussian distribution.
        This is why the result extends to (possibly non-symmetric) sub-Gaussian distribution (which can be different for different items).
        {Note that since $\rho_i$ is drawn from the normal distribution $\mathcal{N}(\mu_i,\sigma_i^2)$, it can take negative values.
        To avoid this, one can consider an appropriately truncated version of the normal distribution.
        Since any truncation of the normal distribution is sub-gaussian, a bound of the same form (with an appropriate constant) continues to hold for $\alpha$.}

        Moreover, if $\mu_1,\mu_2,\dots,\mu_m$ are arbitrary deterministic values, then the following approximation guarantee for \cref{alg:main} is implicit in the proof of \cref{thm:approxStochasUtil} (see \cref{eq:proofForArbitraryMu})
        \[
            \alpha
            \geq
                1
                - \wt{O}\inparen{
                        \frac{\sigma_{\max}}{\mu_n}
                        \cdot
                        \sqrt{\log{m}}
                    }
                - O\inparen{m^{-\frac{1}{4}}},
            \yesnum
        \]
        where  $\mu_{(n)}$ is the $n$-th largest value in $\mu_1, \mu_2,\dots,\mu_m$.

    \subsection{Overview of the Algorithm}\label{sec:algorithm_overview}
        \cref{alg:main} encodes a matching by an $m\times q$ matrix $M\in \inbrace{0,1}^{m\times q}$ where $M_{ij}=1$ if item $i$ is in the block $B_j$ and $M_{ij}=0$ otherwise.
        Let $\cM$ be the set of matrices encoding a matching.
        \cref{alg:main} uses two functions $f$ and $g$.
        For any ranking $R\in [0,1]^{m\times n}$, $g(R)\in [0,1]^{m\times q}$ is the matching such that $g(R)_{ij} \coloneqq \sum\nolimits_{t\in B_j} R_{it},$ for each $i\in [m]$ and $j\in [q]$.
        Intuitively, for a ranking $R$, $g(R)$ is the unique matching that matches item $i$ to block $B_j$ if and only if item $i$ appears in $B_j$ in $R$.
        For any matching $M$, $f(M)$ is the unique ranking that satisfies: (1) $g(f(M))=M$ and (2) for each $j$, items in block $B_j$ appear in non-increasing order of their utility in $f(M)$.
        Concretely, our algorithm is as follows.

        \begin{algorithm}[h!] %
            \caption{Pseudo-code for the algorithm in \cref{thm:algo_main}} \label{alg:main}
            \begin{algorithmic}[1]
                \Require Matrices $L,U\in  {\R}^{q\times p}$ and $A,C\in  {\R}^{m\times q}$, and vectors $\rho\in  \R^m$ and $v\in \R^n$, and sets $B_1,B_2,\dots,B_q\subseteq[n]$
                \Ensure A ranking $R\in \cR$
                \State \textit{(Solve)} Compute an optimal solution $\hD$ of Program~\eqref{prog:mod_of_ashudeep}
                \item[]\hfill $\triangleright$ {\gray{$\poly(n,m)$ time as Program~\eqref{prog:mod_of_ashudeep} has $\poly(n,m)$ variables and constraints}}
                \State \textit{(Project)} Compute the projection $\hM \coloneqq g(\hD)$ \Comment{\gray{$O(mn)$ time}}
                \State \textit{(Decompose)} Compute $M_1,M_2,\dots,M_T$  and $\alpha_1,\alpha_2,\dots,\alpha_T$ s.t. $$\hM=\sum\nolimits_{t\in [T]} \alpha_t M_t,$$
                \item[] where $T=O(n^2m^2)$
                \item[]\hfill $\triangleright$ {\gray{$(M_t,\alpha_t)_{t=1}^T$ can be computed in {$O(n^2m^2)$ time,} see Lemma~\ref{lem:decomposition}}}
                \State \textit{(Refine)} \textbf{For each} $t\in [T]$ \textbf{do:} Set $R_t\coloneqq f(M_t)$
                \Comment{\gray{$\wt{O}(Tmn)$ time}}
                \State \Return $R_t$ with probability $\propto \alpha_t$ for each $t\in [T]$
            \end{algorithmic}
        \end{algorithm}

\section{Empirical Results}\label{sec:empirical}
In this section, we show the performance of our algorithm on synthetic and real-world datasets. We explore two research questions: $(i)$
\textit{How likely is it for a fair ranking baseline to sample a ranking that violates group fairness constraints?}
$(ii)$ \textit{Does \cref{alg:main} achieve a similar utility as baselines?}
We start by describing our experimental setup before diving into the results of our experiments.

\subsection{Setup, Baselines, and Metrics}

    Recall the ranking problem we consider -- given $m$ items, the output should be an ordered list or ranking of $n$ items that maximize the utility.
    The utility generated by item $i$ in position $j$ is $\rho_i\cdot v_j$, where $\rho_i$ is an item-specific utility and $v_j$ is the position discount.
    The choices of $n$, $m$, $k$, and $\rho$ are data and application dependent; we specify our choices of these parameters in \cref{tab:hyperparams} and discuss the choice of all of $n$, $m$, $k$ and $\rho$ further with each dataset.
    Across all datasets, we set $v_j\coloneqq \frac{1}{\log(j+1)}$ for each $1\leq j\leq n$, corresponding to the popular discounted cumulative gain (DCG) measure \cite{DCG}.

\paragraph{{Fairness constraints.}}
    The choice of the right fairness constraints is context-dependent.
    For illustration, we choose generalizations of the equal representation constraint (which is, perhaps, the most common group fairness constraint considered in the literature \cite{fair_ranking_survey1,fair_ranking_survey2}).
    These generalizations are parameterized by $1\leq \phi\leq p$ are specified by blocks $B_1,B_2,\dots,B_q$ of equal size $k\coloneqq \frac{n}{2}$.
    Given a value of $\phi$, the constraint is specified by the upper bounds $U_{j\ell} \coloneqq \ceil{\frac{\phi k}{p}}$ for block $B_j$ and the protected group $G_\ell$ (for each $j$ and $\ell$); the lower bounds of the group fairness constraints are set to be vacuous, i.e., $L_{j\ell}=0$ for all $j$ and $\ell$.
    To gain some intuition about the relevant values of $\phi$, note that when $\phi=p$ the upper bounds are vacuous and when $\phi=1$ the upper bounds require the ranking to contain exactly $\frac{k}{p}$ items in each block.
    As for the individual fairness constraints, we consider a family of individual fairness constraints proposed by \citet{AshudeepUncertainty2021} which, in turn, are motivated by the uncertainties in the item utilities, as are bound to arise in the real world.
    Following the construction in \cite{AshudeepUncertainty2021},
    we assume that the true utility of item $i$ is $\rho_i=\wt{\rho}_i + X_i$, where $\wt{\rho}_i$ is an estimated utility and $X_i$ is a Gaussian random variable with mean $0$ and a data-dependent standard deviation $\sigma$ computed to be the smallest value such that there are at least $\frac{k}{2}$ items with estimated utility within $\wt{\rho}_i\pm \sigma$, on an average.
    Given these, the matrix $C$ specifying the individual fairness constraints is specified as $C_{ij} = \gamma\cdot \Pr[\text{$\exists t\in B_j$, $\rho_i$ is the $t$-th largest value in $\inbrace{\rho_1,\dots,\rho_m}$}]$
    where $0 \leq \gamma\leq 1$ is a relaxation factor.
    We set the upper bounds of the individual fairness constraints to be vacuous,i.e., $A_{ij}=1$ for all $i$ and $j$.
    Note that when $\gamma=1$, $C_{ij}$ is equal to the probability that item $i$ appears in the $j$-th block when items are ordered in decreasing order of true utility.
    Note that the ``strength'' of our group fairness constraint is specified by the parameter $1\leq \phi\leq p$ (where the closer $\phi$ is to 1 the closer the constraint is to equal representation) and the strength of our individual fairness constraint is specified by $0\leq \gamma\leq 1$ (where the closer $\gamma=1$ the ``stronger'' the individual fairness requirement is).

\paragraph{Baselines.} We compare our algorithm to both baselines that output a deterministic ranking and those that sample a ranking from a distribution.%
The following baselines output a deterministic ranking:
\begin{enumerate}
    \item \textbf{Unconstrained}, which is a baseline that outputs  a  ranking that maximizes the utility (without consideration for fairness constraints); and
    \item \textbf{CSV18 (Greedy)} \cite{celis2018ranking}, which is an algorithm that greedily ranks the item and is guaranteed to satisfy the specified group fairness constraints, but does not consider individual fairness constraints.
\end{enumerate}
We also consider baselines that are closer to \cref{alg:main}, in the sense that, they sample a ranking from an underlying distribution such that the output is guaranteed to satisfy the specified individual fairness constraints:

\begin{enumerate}
    \item \textbf{SJK21 (IF)}, which is the algorithm of \citet{AshudeepUncertainty2021} specialized to the individual fairness constraints considered in our simulations;\footnote{This algorithm first solves \prog{prog:from_ashudeep} to compute a marginal $D$, decomposes it as $D=\sum_t \alpha_t R_t$ (using Birkhoff von Neumann decomposition), and outputs $R_t$ with probability $\propto \alpha_t$.}
    \item \textbf{SJK21 (GF and IF)}, which is the algorithm of \citet{AshudeepUncertainty2021} specialized to satisfy both the individual fairness constraints and (in aggregate) the group fairness constraints considered in our simulations.\footnote{This algorithm first solve \cref{prog:mod_of_ashudeep} to compute a marginal $D$, decomposes it as $D=\sum_t \alpha_t R_t$ (using Birkhoff von Neumann decomposition), and outputs $R_t$ with probability $\propto \alpha_t$.}
\end{enumerate}

\paragraph{Metrics.}
    We evaluate the rankings output by each algorithm using three metrics:
    the probability with which the output ranking violates the group fairness constraints $\gfviol$,
    a measure of the amount of violation of the individual fairness constraints $\ifviol$ (see below), and
    the output ranking $R$'s normalized output utility $$\util =\frac{1}{\util_{\max}}\cdot \Ex[\rho^\top R v]$$ (where the expectation is over any randomness in $R$ and $\util_{\max}$ is a normalization constraint that ensures that $\util$ has range from 0 to 1).
    It remains to define $\ifviol$: let $P_{ij}$ be the probability with which item $i$ appears in block $B_j$ in the output ranking $R$, the individual fairness violation of the corresponding algorithm is defined as $$\ifviol \coloneqq \frac{1}{m}\sum_{i \in [m]}\frac{1}{q}\sum_{j \in [q]}\max\inbrace{1 - \frac{P_{ij}}{C_{ij}}, 0}.$$
    Note here that both $\gfviol$ and $\ifviol$ have a range from $0$ to 1, where a smaller value implies a smaller violation.

\begin{figure*}[t!]
    \centering
    \begin{subfigure}
        \centering
        \includegraphics[scale=0.5]{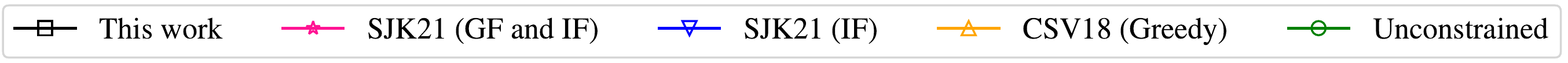}
    \end{subfigure}

    \begin{subfigure}
        \centering
        \includegraphics[scale=0.35]{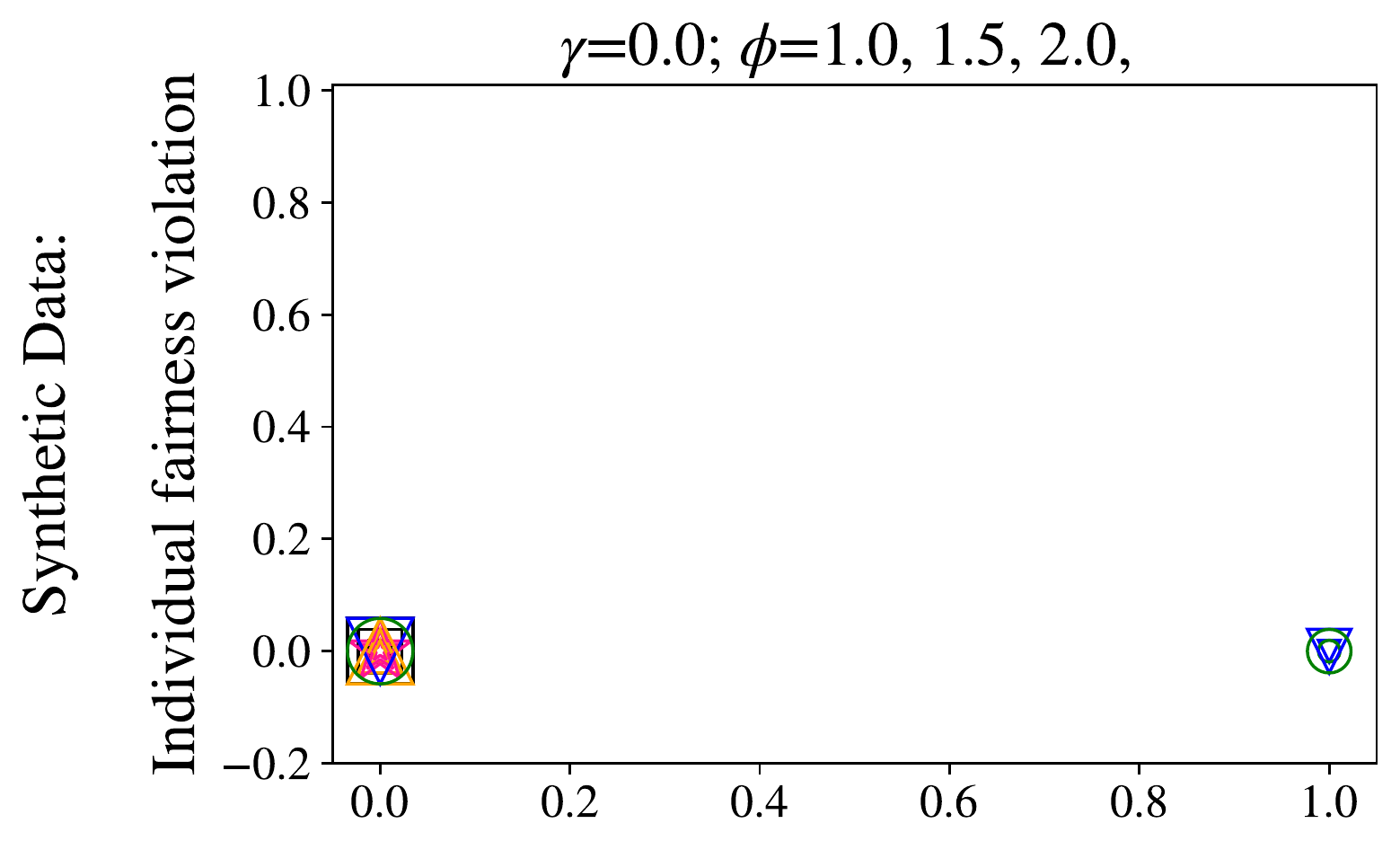}
    \end{subfigure}
    \begin{subfigure}
        \centering
        \includegraphics[scale=0.35]{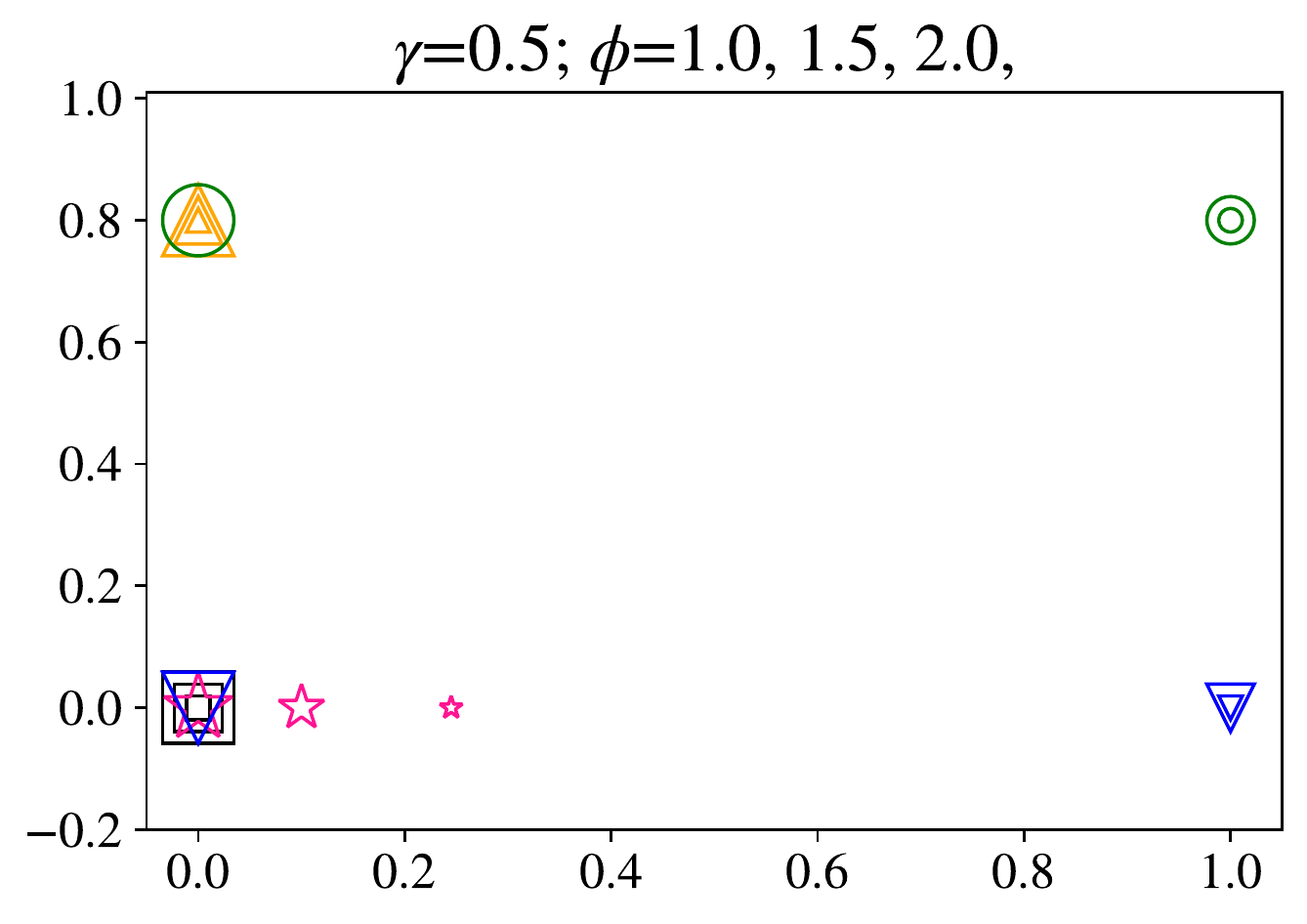}
    \end{subfigure}
    \begin{subfigure}
        \centering
        \includegraphics[scale=0.35]{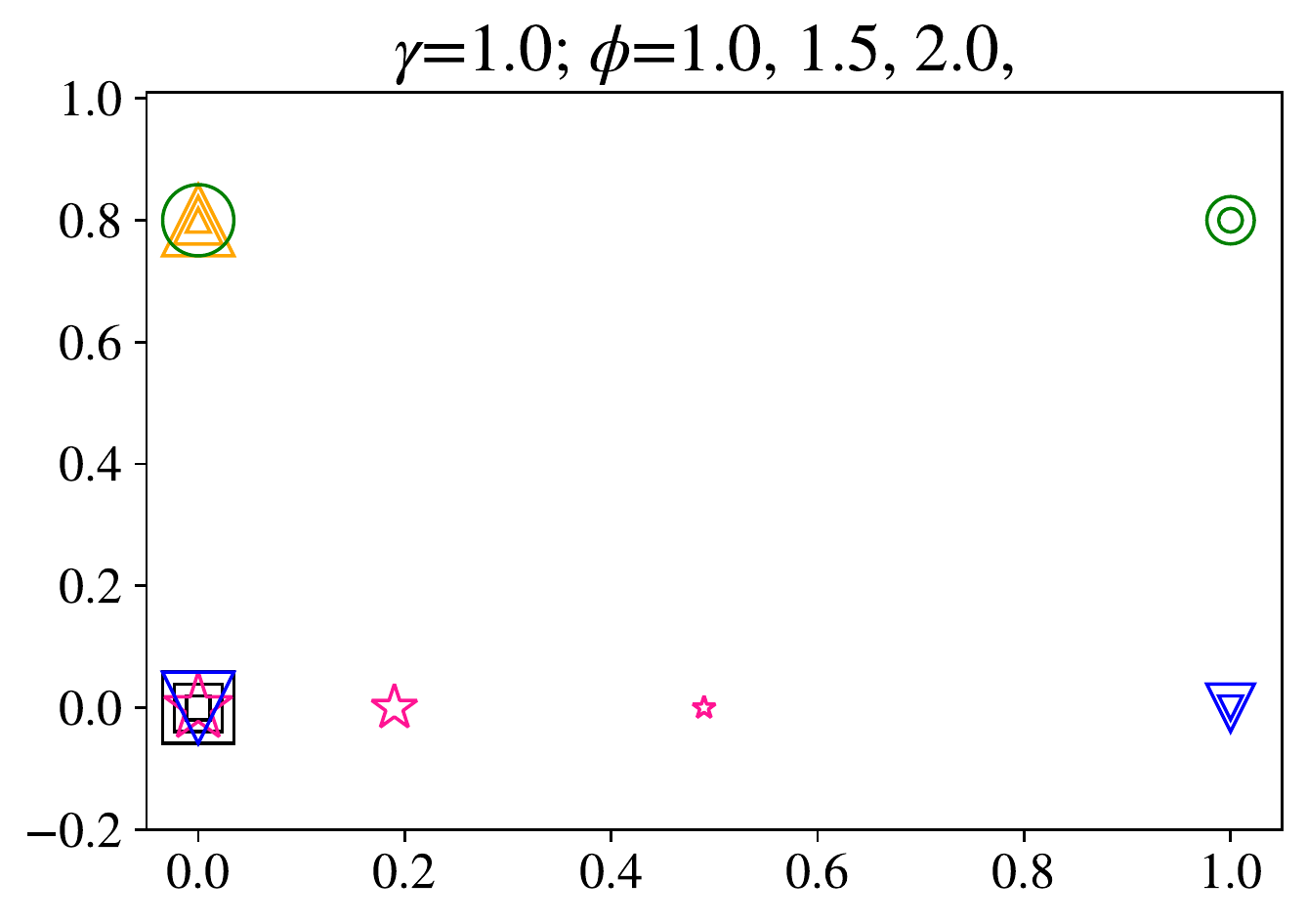}
    \end{subfigure}

    \begin{subfigure}
        \centering
        \includegraphics[scale=0.35]{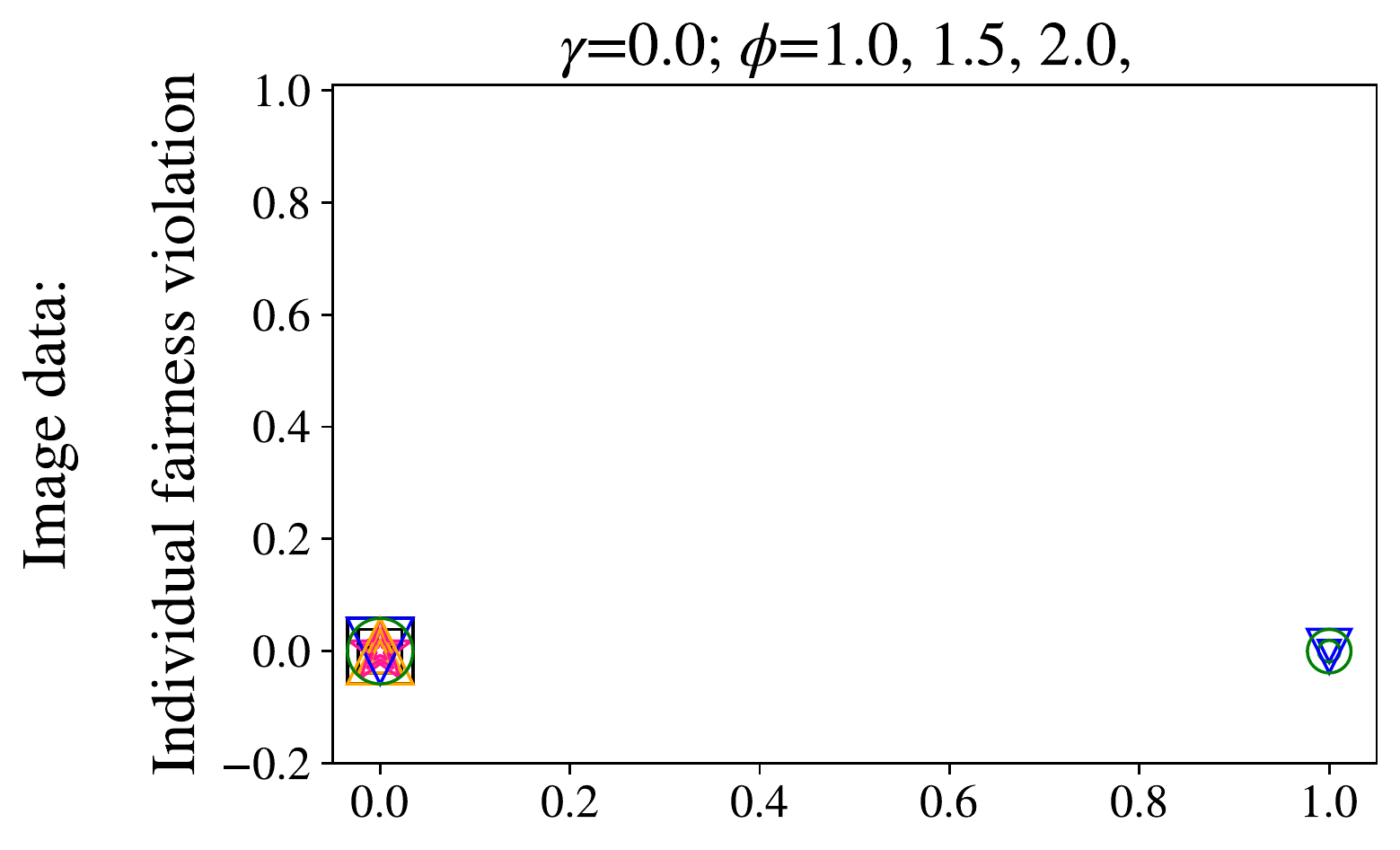}
    \end{subfigure}
    \begin{subfigure}
        \centering
        \includegraphics[scale=0.35]{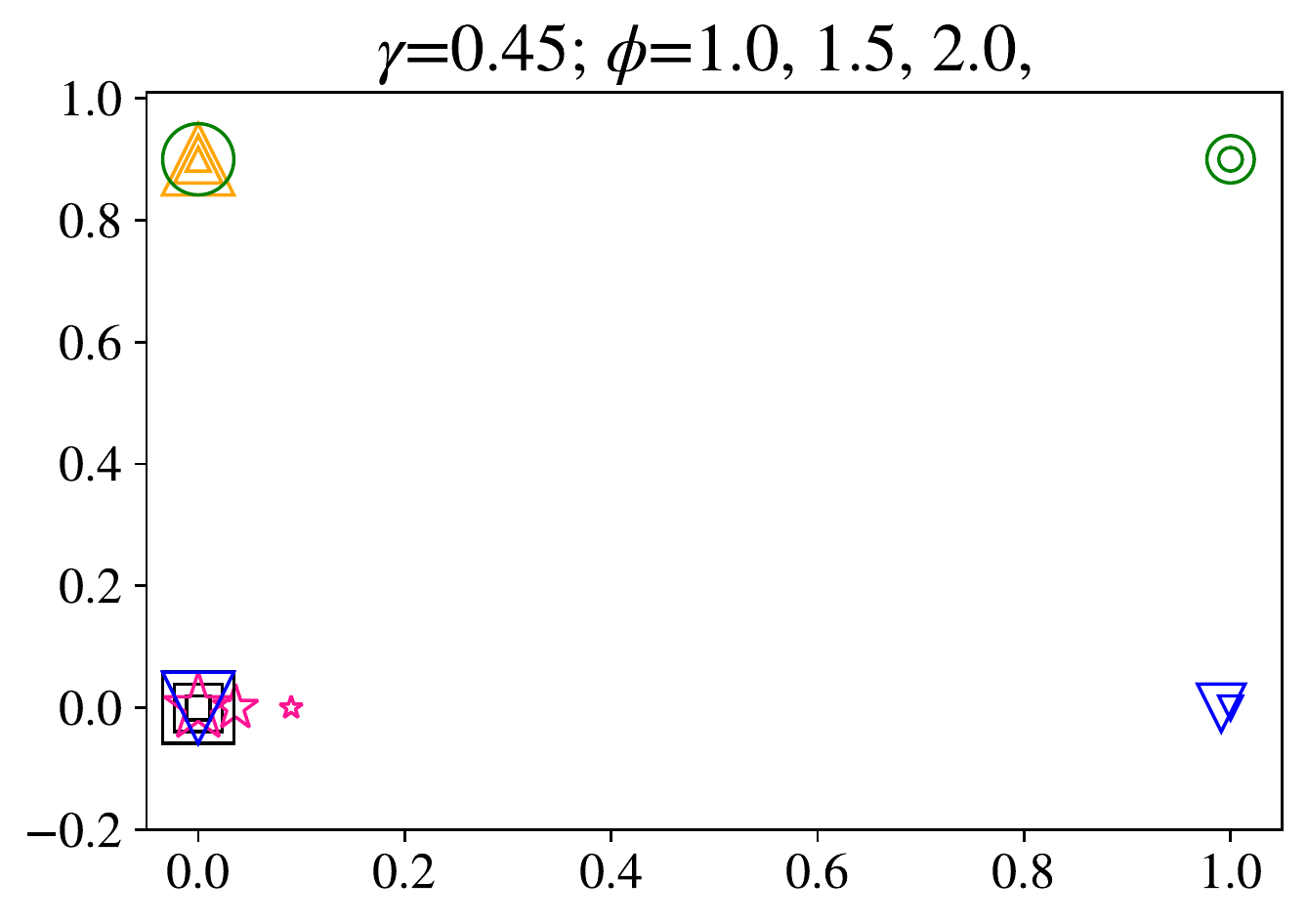}
    \end{subfigure}
    \begin{subfigure}
        \centering
        \includegraphics[scale=0.35]{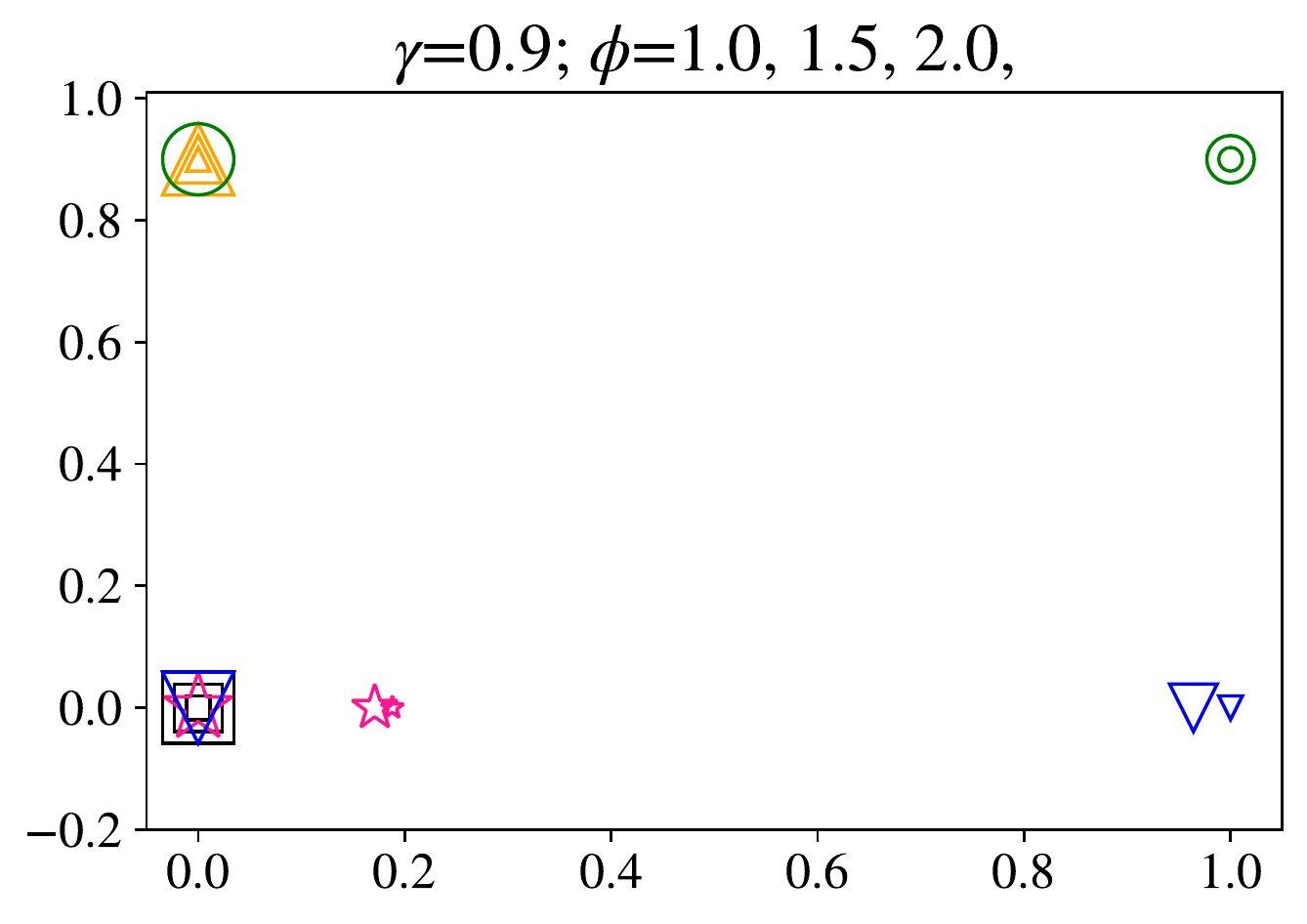}
    \end{subfigure}

    \begin{subfigure}
        \centering
        \includegraphics[scale=0.35]{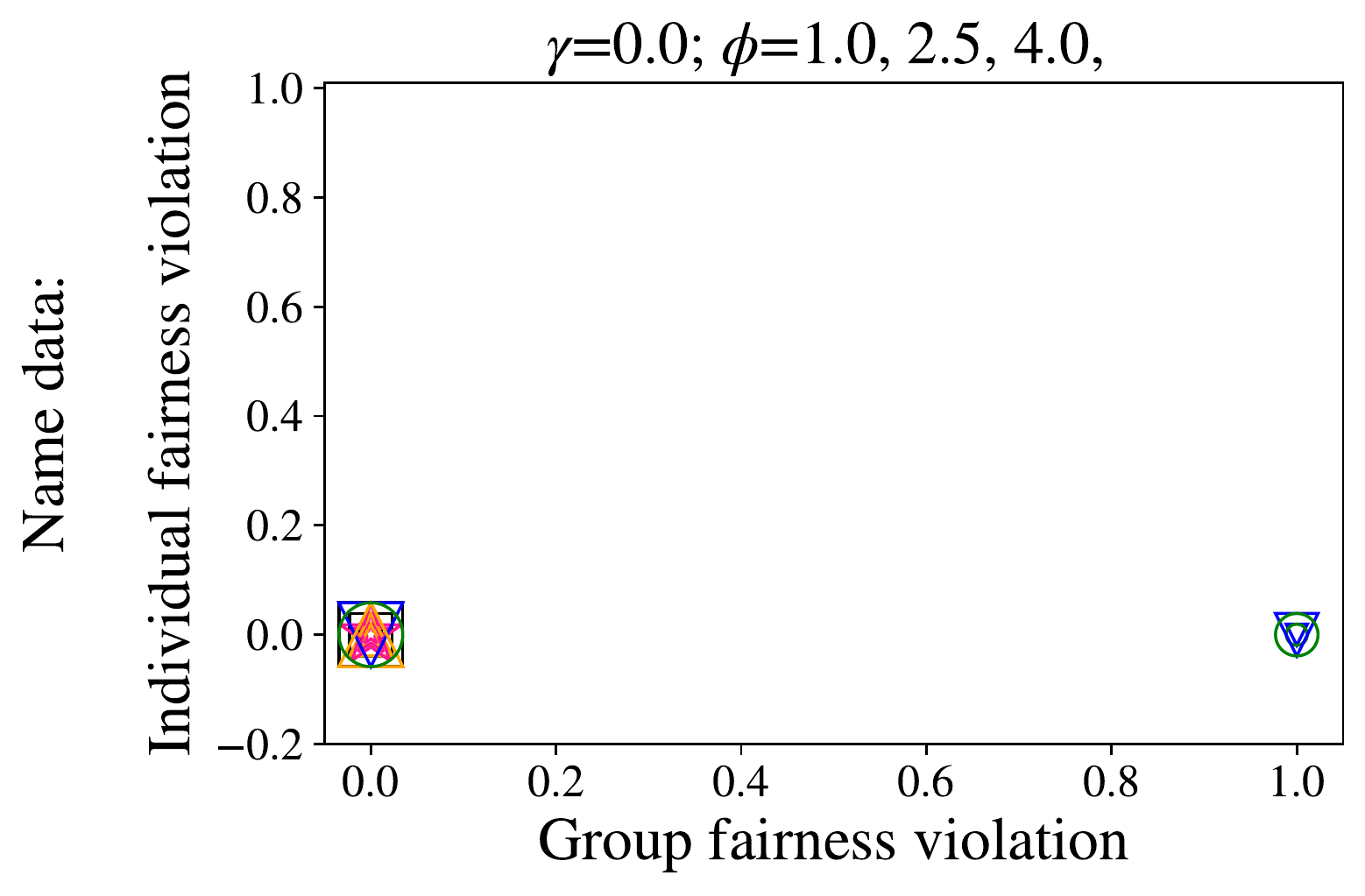}
    \end{subfigure}
    \begin{subfigure}
        \centering
        \includegraphics[scale=0.35]{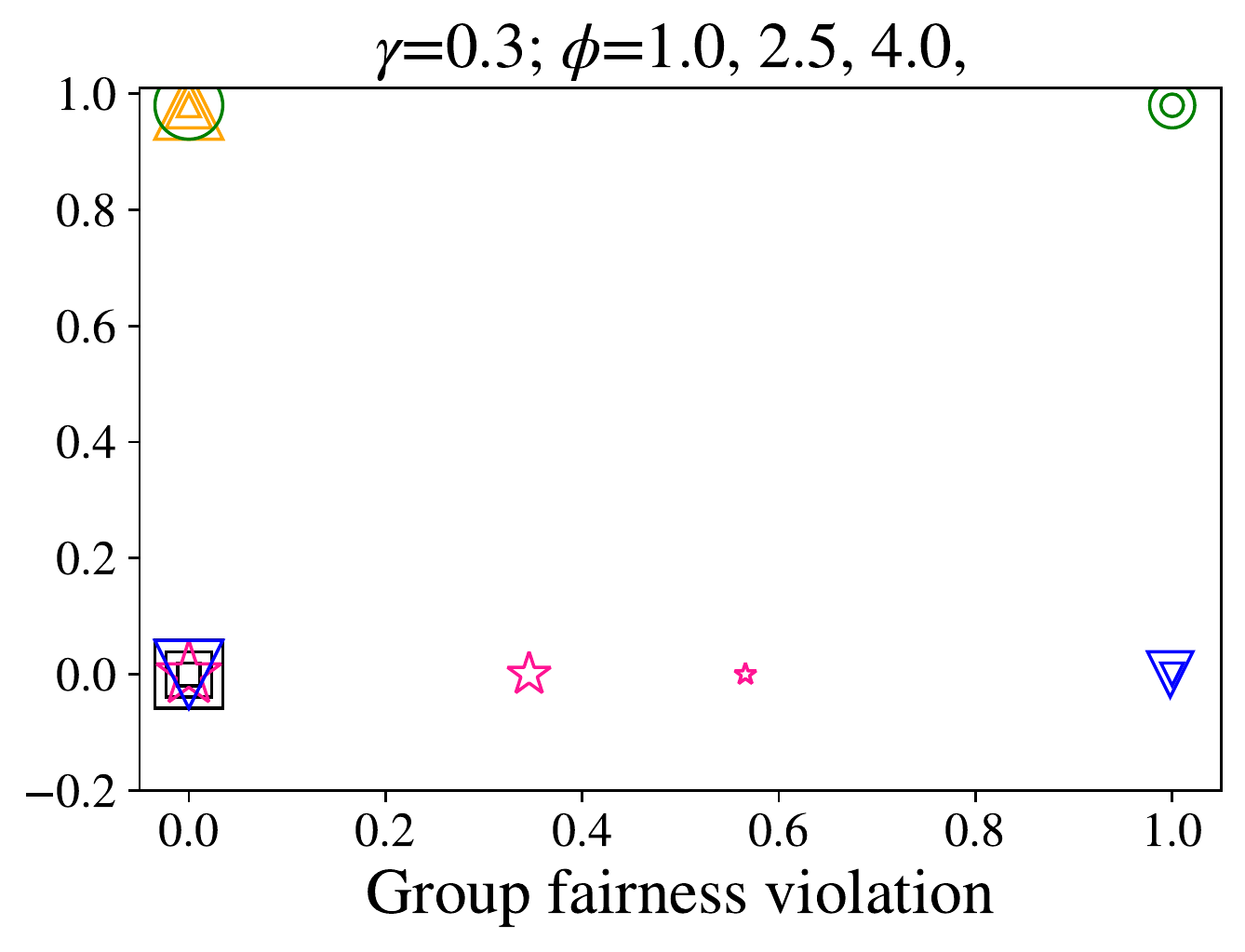}
    \end{subfigure}
    \begin{subfigure}
        \centering
        \includegraphics[scale=0.35]{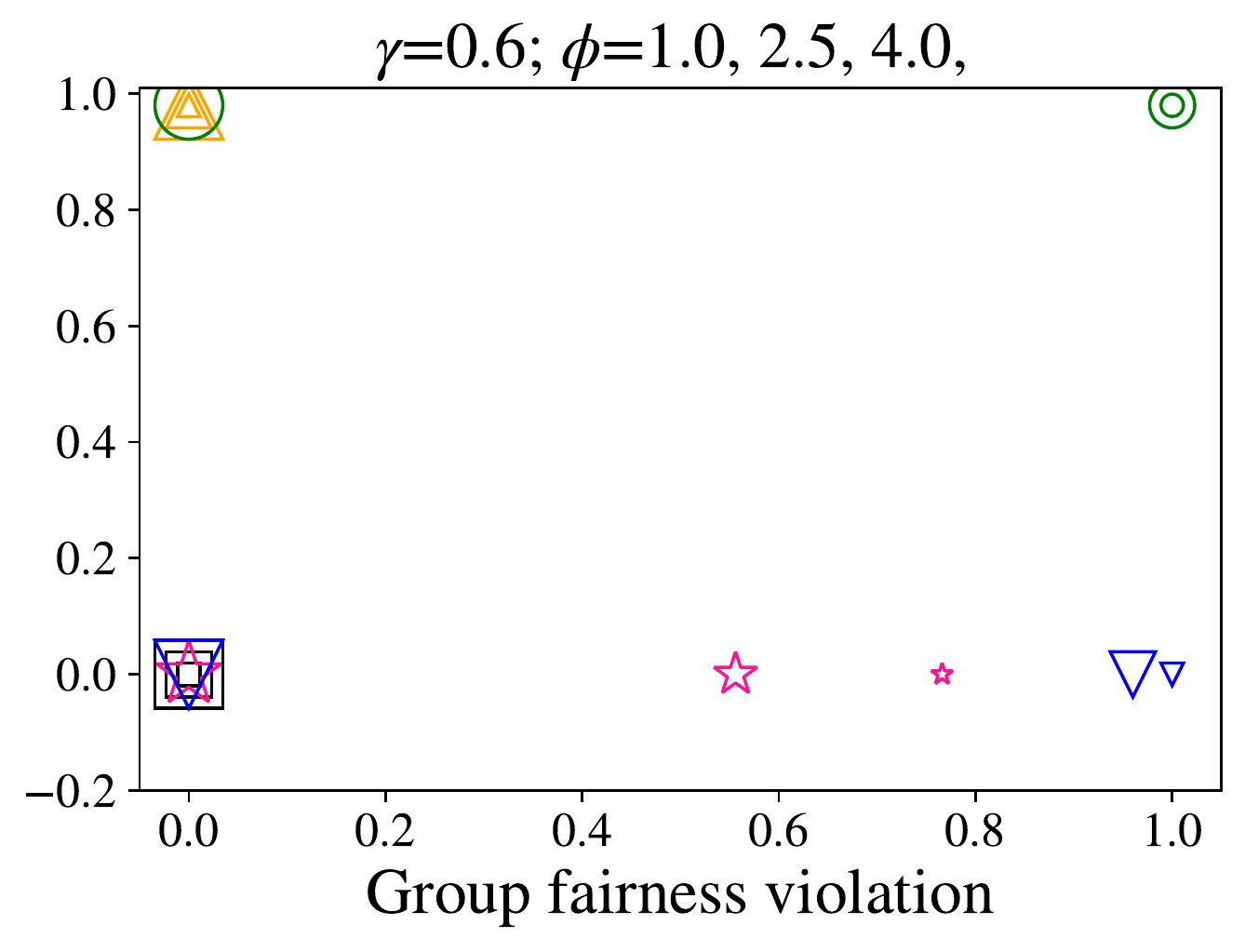}
    \end{subfigure}
    \caption{\textbf{Individual fairness violation vs. Group fairness violation:} In the plots, the parameter $\gamma$ controls individual fairness constraints and the parameter $\phi$ defines block-wise representation constraints. The size of the marker for each algorithm in each plot is proportional to the value of $\phi$. Lower the value of $\phi$, the stronger the group fairness constraints. In contrast, the lower the value of $\gamma$, the weaker the individual fairness constraints.}
    \label{fig:fairness}
\end{figure*}

    \begin{figure*}[t!]
        \centering
        \begin{subfigure}
            \centering
            \includegraphics[scale=0.5]{figures/legend.pdf}
        \end{subfigure}
        \begin{subfigure}
            \centering
            \includegraphics[scale=0.35]{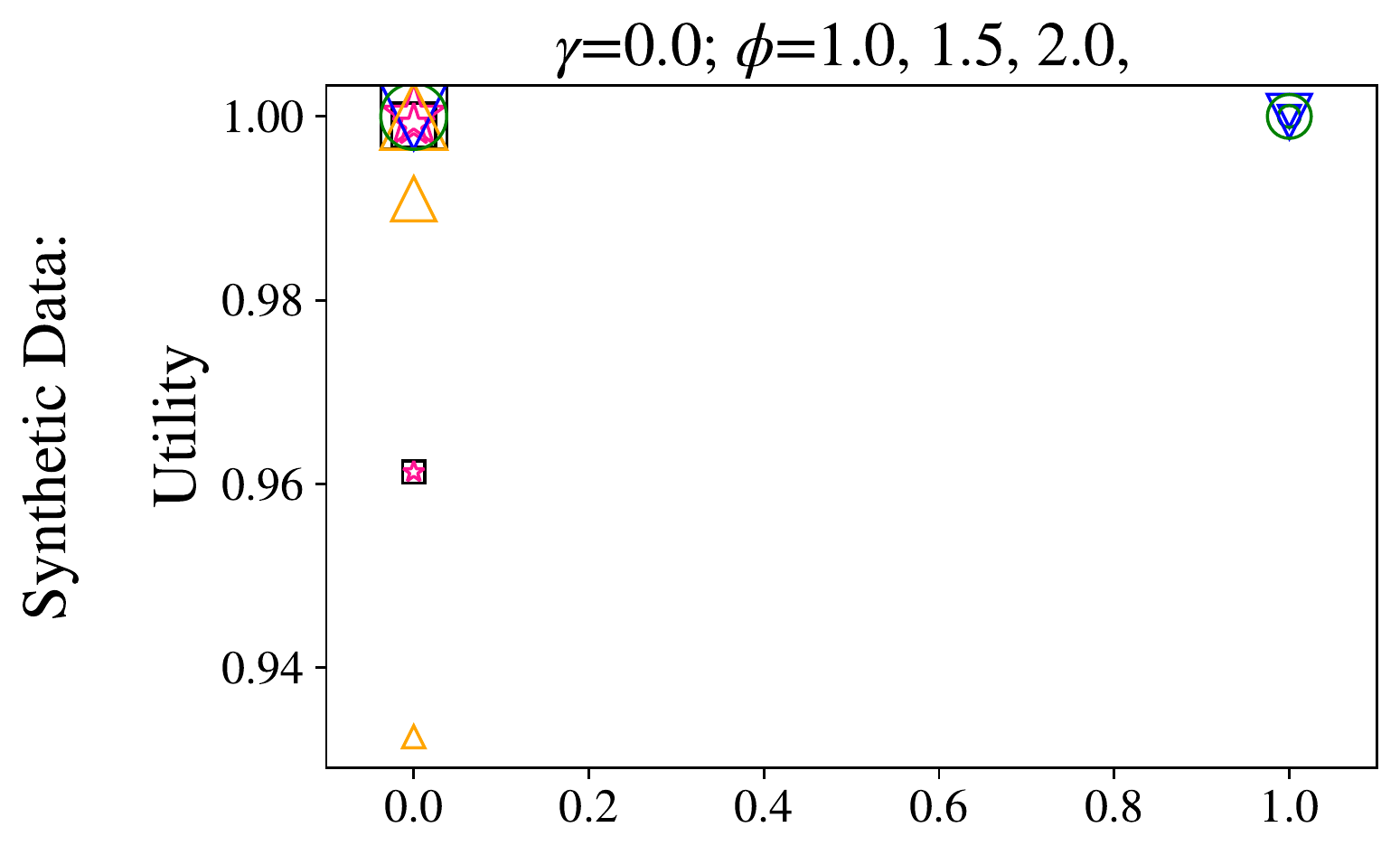}
        \end{subfigure}
        \begin{subfigure}
            \centering
            \includegraphics[scale=0.35]{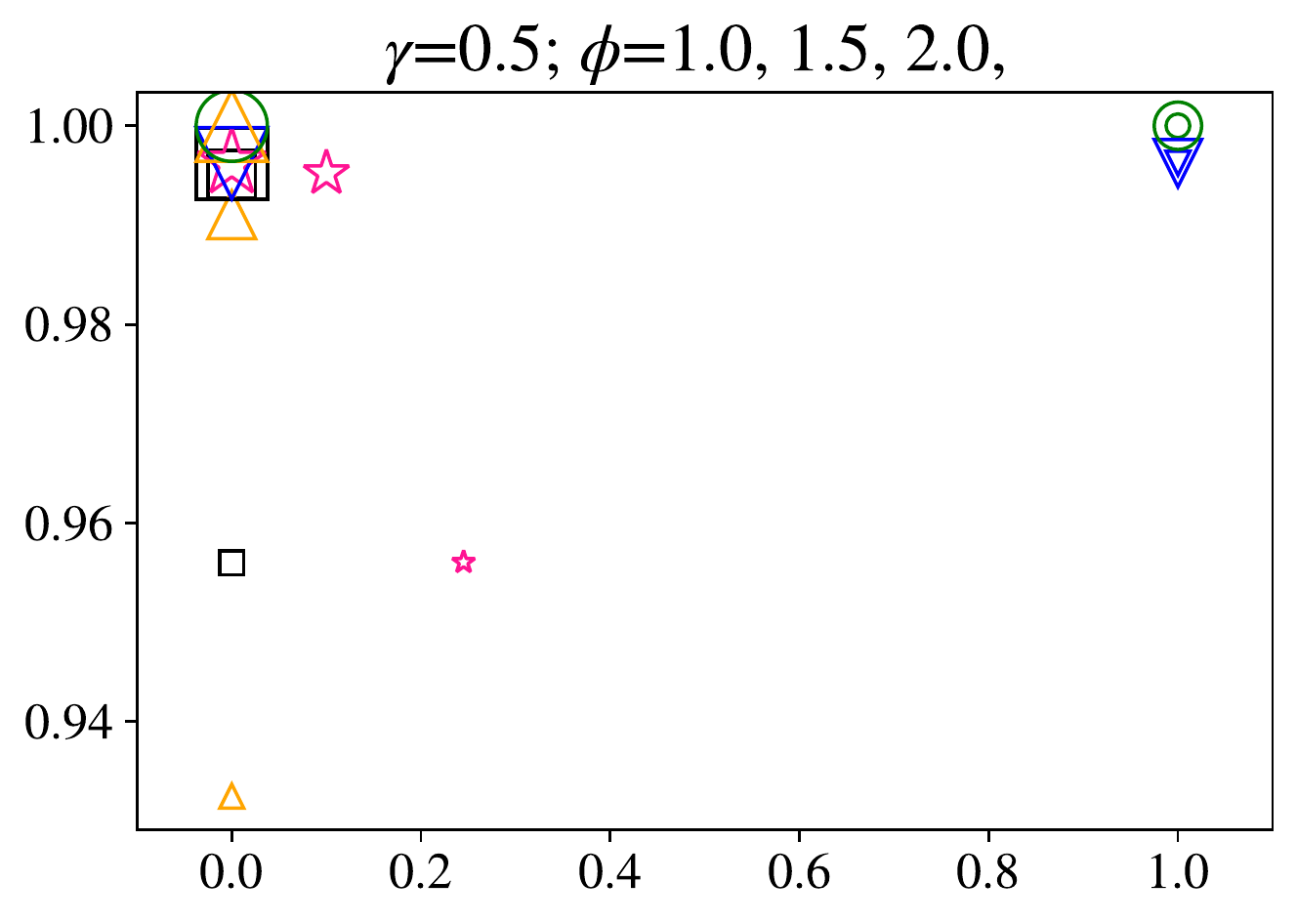}
        \end{subfigure}
        \begin{subfigure}
            \centering
            \includegraphics[scale=0.35]{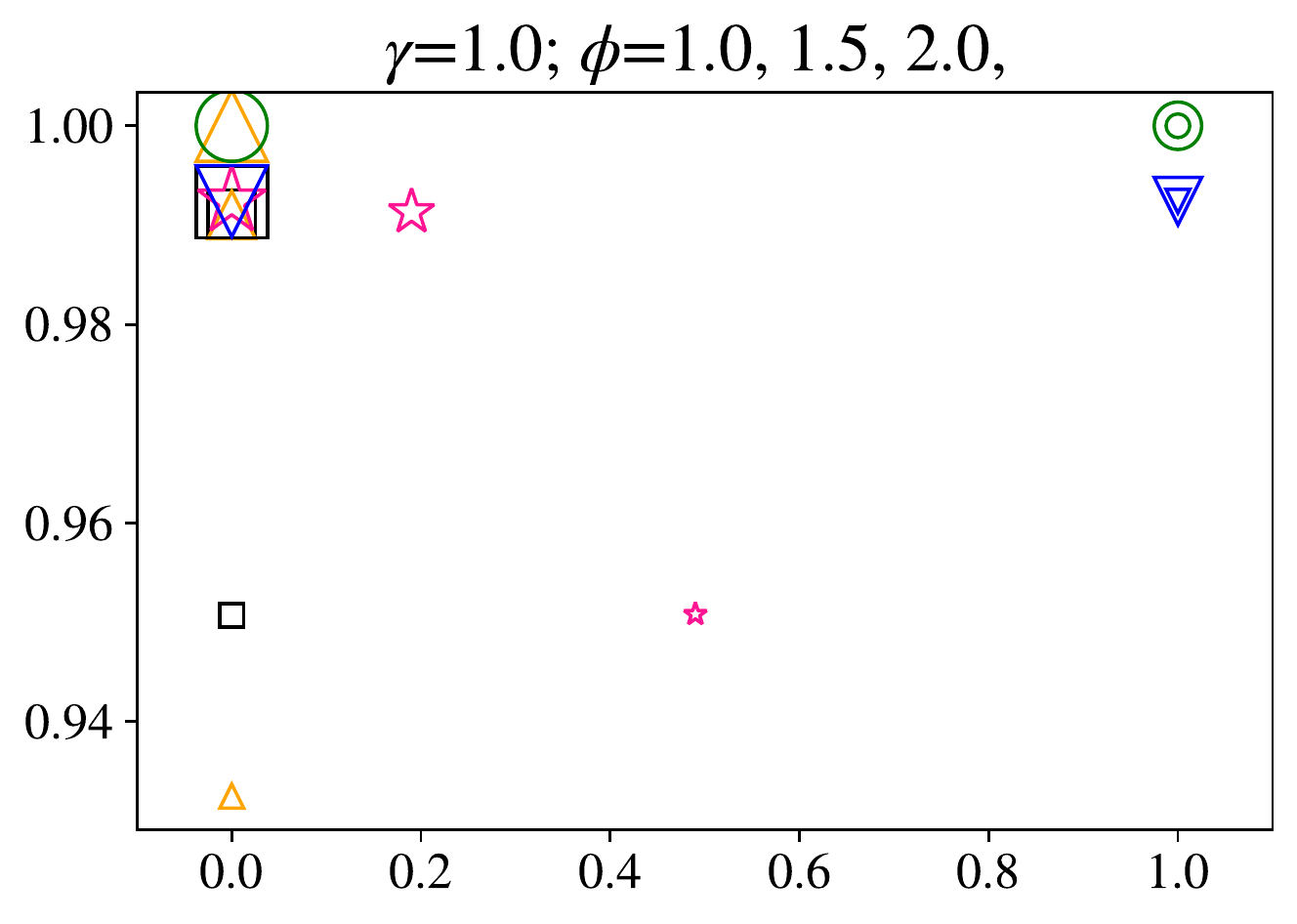}
        \end{subfigure}
        \begin{subfigure}
            \centering
            \includegraphics[scale=0.35]{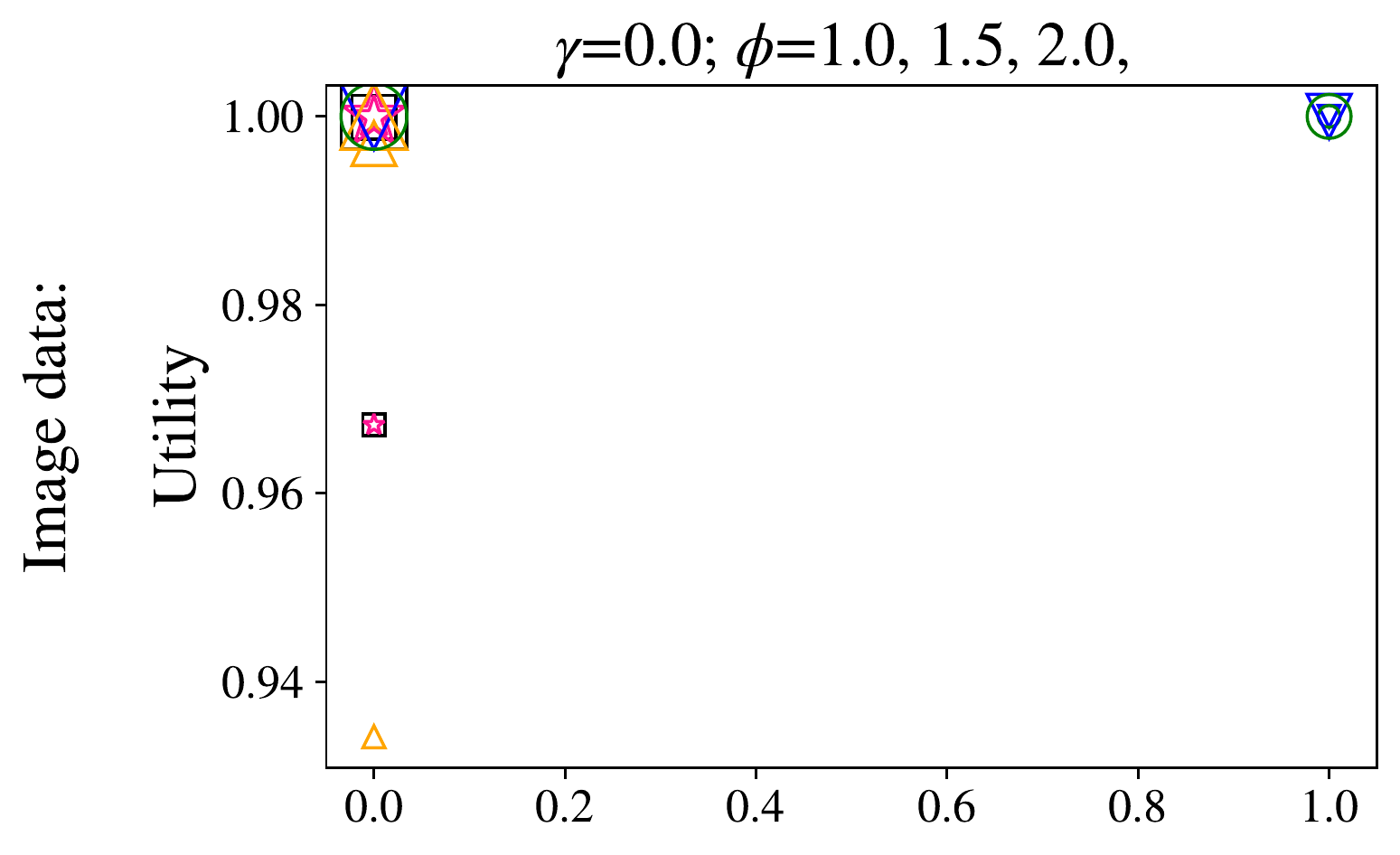}
        \end{subfigure}
        \begin{subfigure}
            \centering
            \includegraphics[scale=0.35]{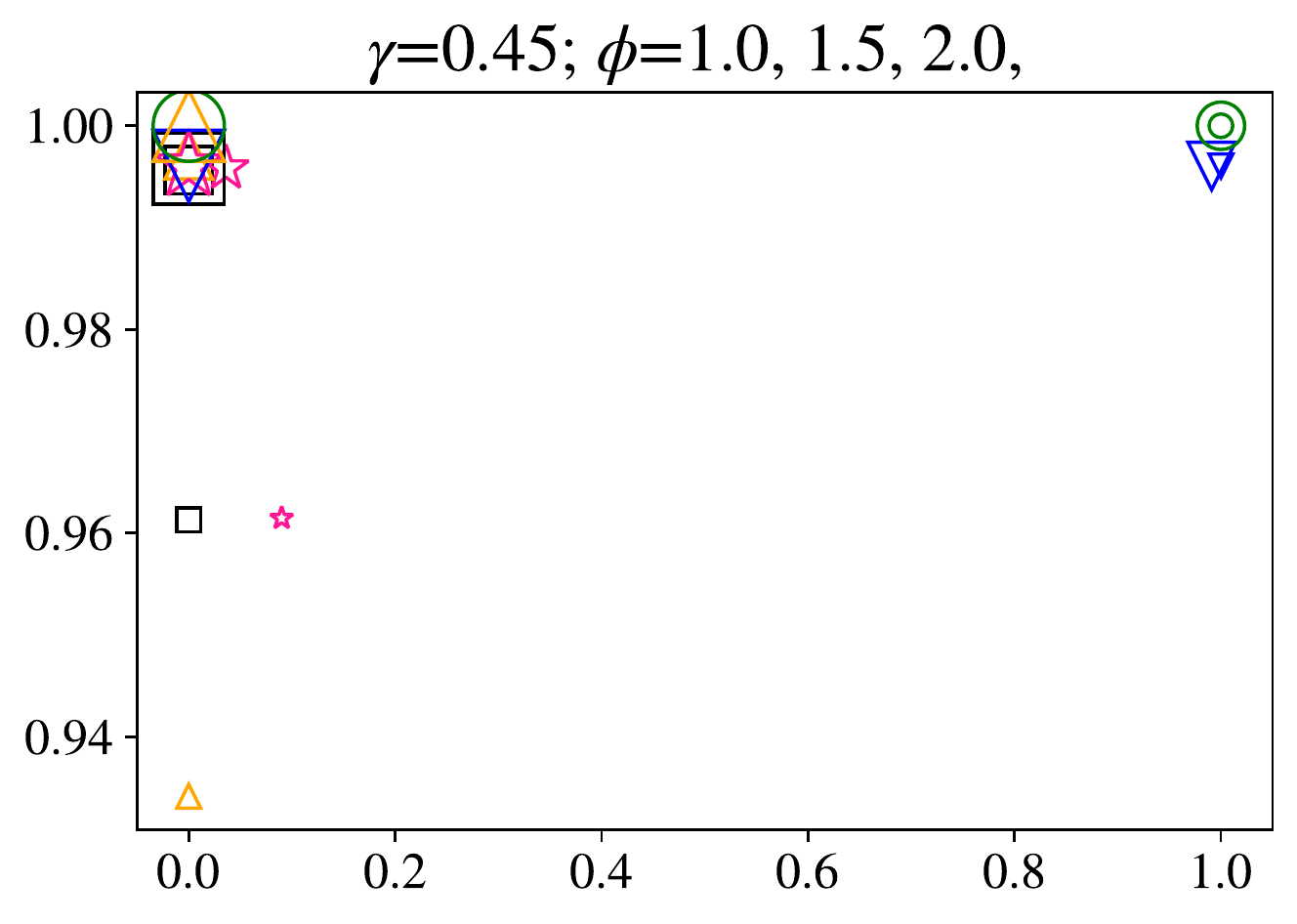}
        \end{subfigure}
        \begin{subfigure}
            \centering
            \includegraphics[scale=0.35]{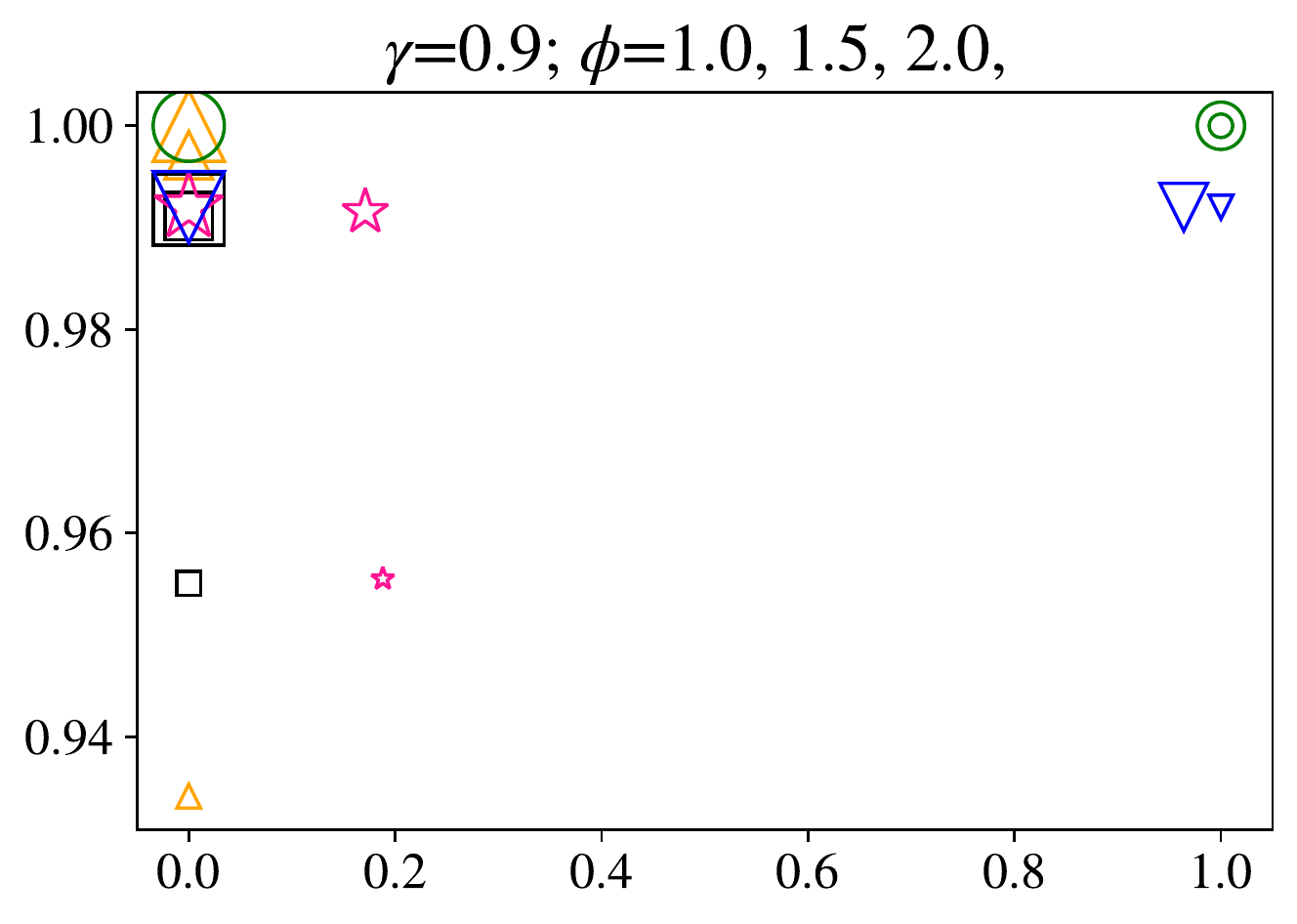}
        \end{subfigure}

        \begin{subfigure}
            \centering
            \includegraphics[scale=0.35]{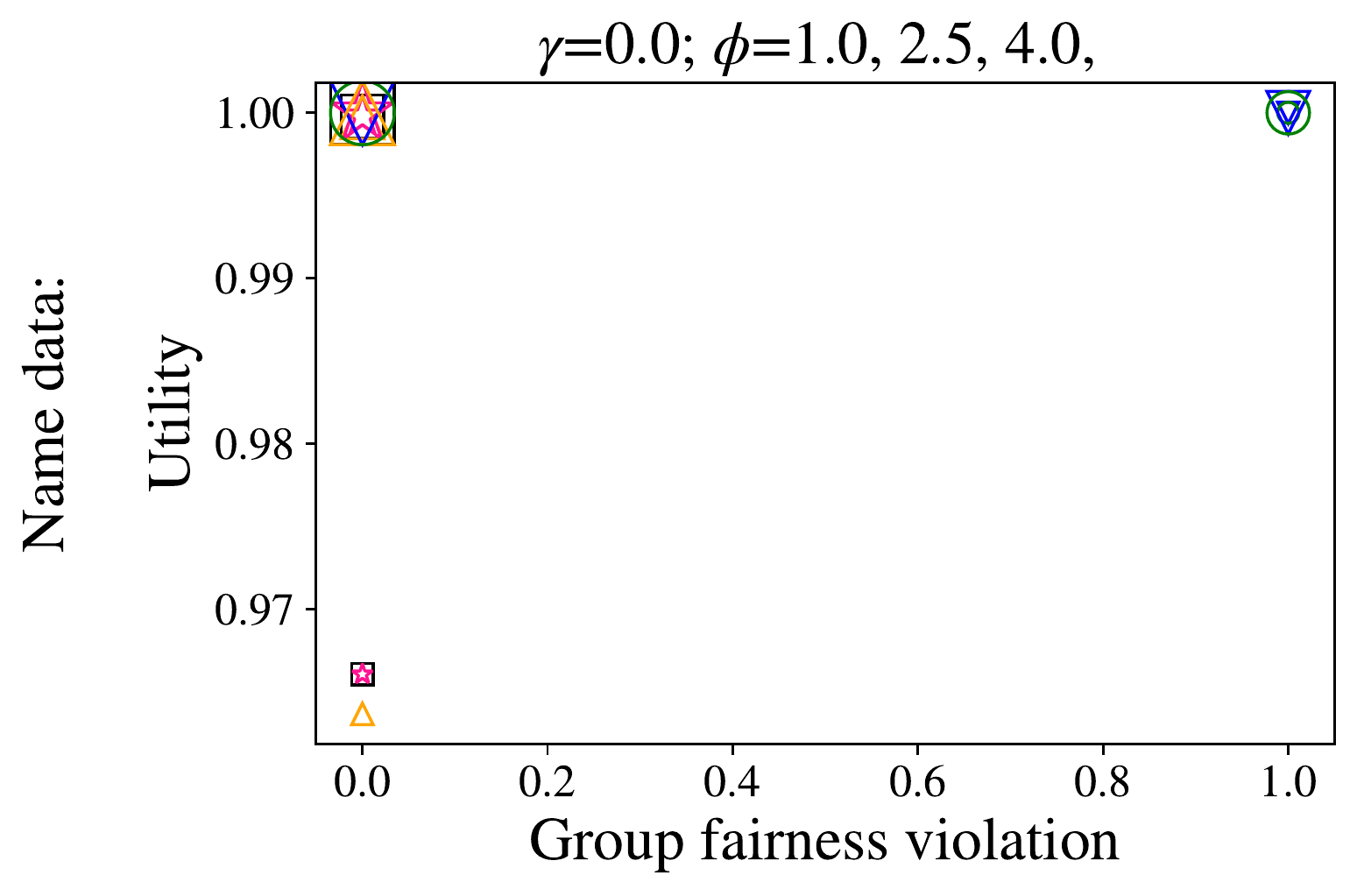}
        \end{subfigure}
        \begin{subfigure}
            \centering
            \includegraphics[scale=0.35]{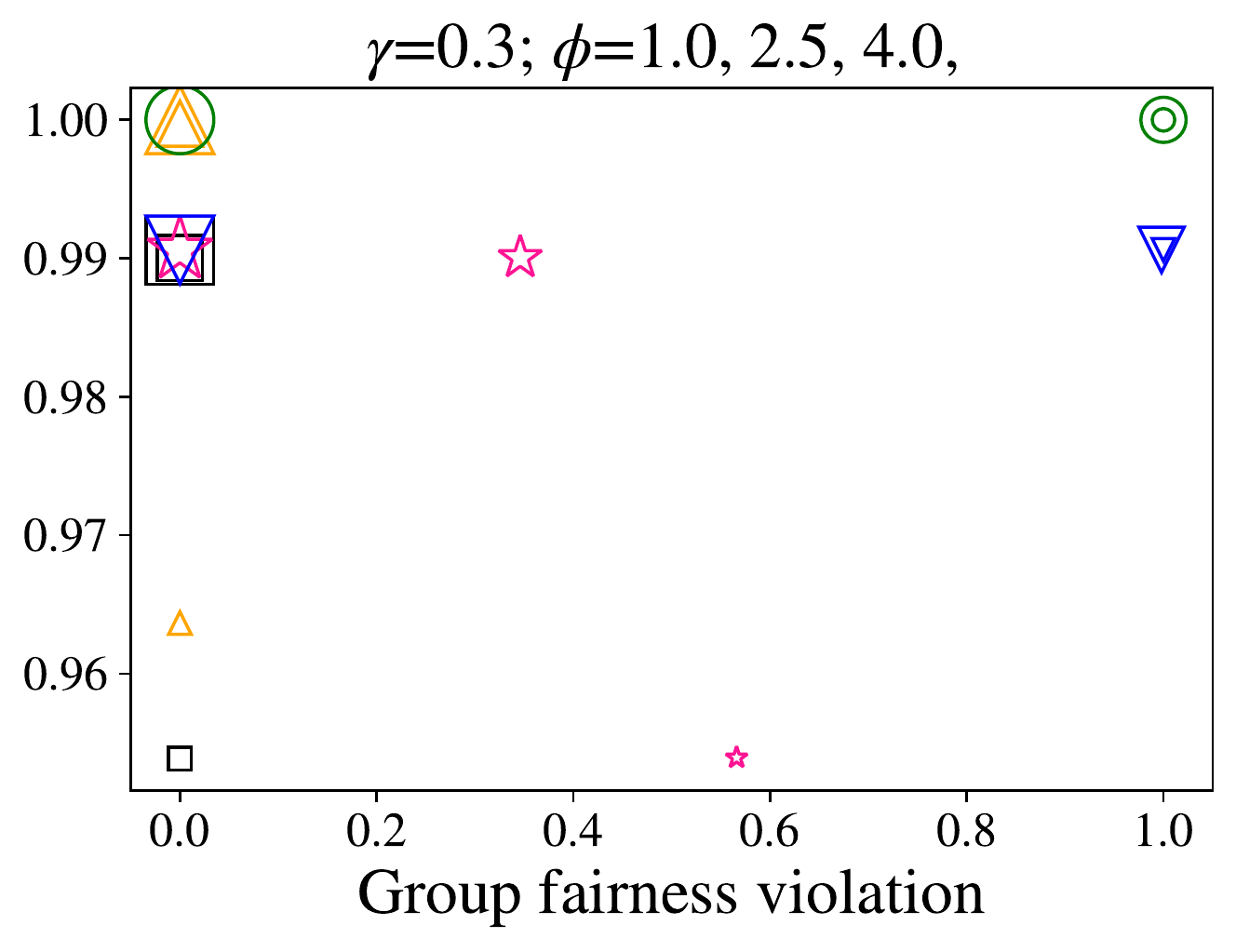}
        \end{subfigure}
        \begin{subfigure}
            \centering
            \includegraphics[scale=0.35]{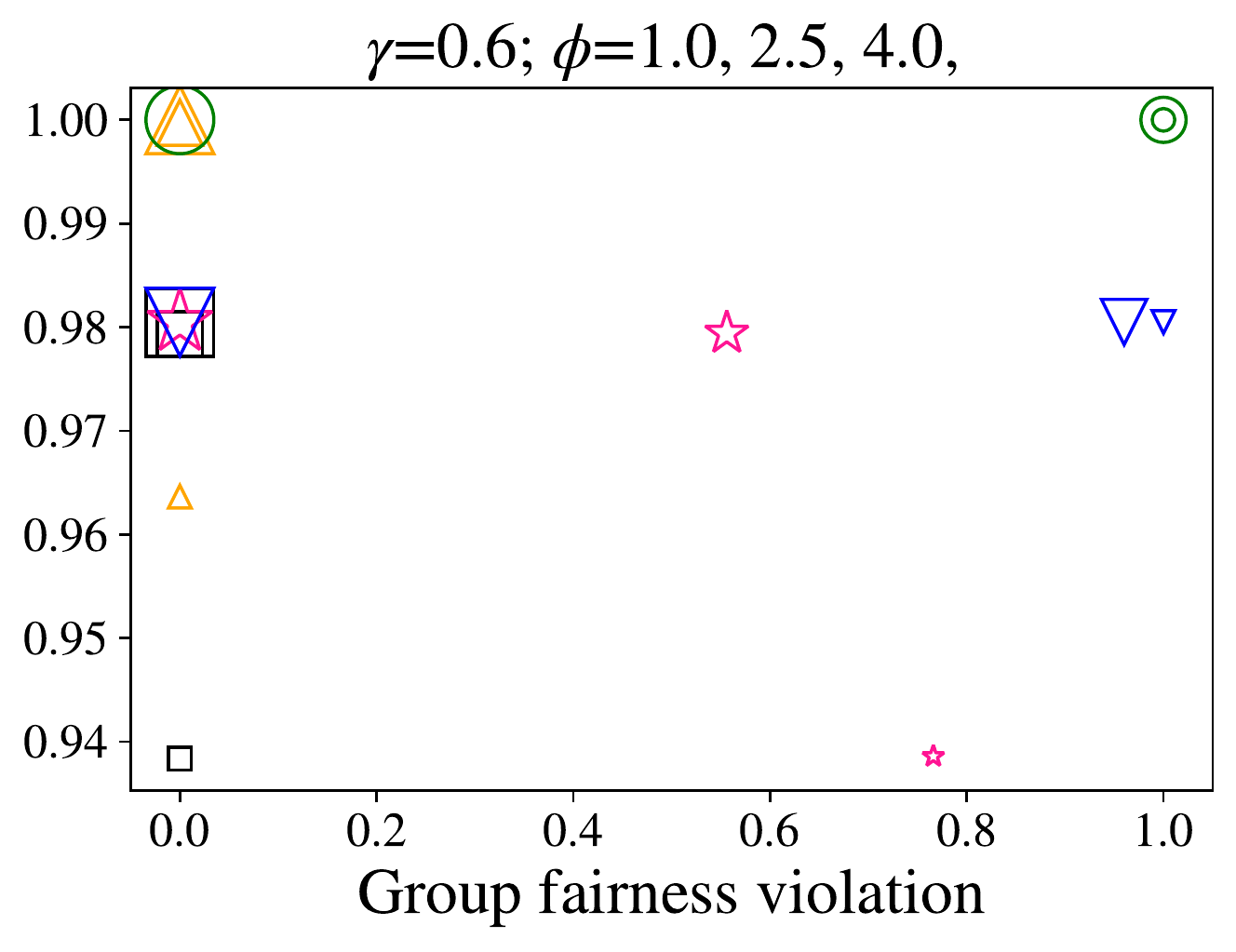}
        \end{subfigure}

        \caption{\textbf{Utility vs. Group fairness violation:} In the plots, the parameter $\gamma$ controls individual fairness constraints and the parameter $\phi$ defines block-wise representation constraints. The size of the marker for each algorithm in each plot is proportional to the value of $\phi$. Lower the value of $\phi$, the stronger the group fairness constraints. In contrast, the lower the value of $\gamma$, the weaker the individual fairness constraints.}
        \label{fig:utility}
    \end{figure*}

\begin{table}[t]
    \centering
    \begin{tabular}{|l|c|c|c|c|}
        \hline
        \textbf{Dataset} & \textbf{$m$} & \textbf{$n$} & \textbf{$k$} & \textbf{$p$} \\
        \hline
        Synthetic \cite{Mehrotra2021MitigatingBI} & 100 & 40 & 20 & 2\\
        Real-world Images \cite{mehrotra2022fair} & 100 & 20 & 10 & 2 \\
        Real-world Names \cite{celiskeswani20} & 400 & 16 & 8 & 4 \\
        \hline
    \end{tabular}
    \vspace{2mm}
    \caption{Parameter choices for each dataset. Experiments with additional parameter choices are presented in \Cref{fig:syn_supp}.}
    \label{tab:hyperparams}
\end{table}

\subsection{Datasets} We perform simulations with three datasets.

\paragraph{Synthetic dataset.}
    We use the synthetic dataset generated by the code provided by recent work on fair ranking \cite{mehrotra2022fair}:
    this dataset consists of two protected groups $G_1$ and $G_2$, where $G_1$ comprises $60\%$ of the total items, and the utilities of items in the minority group are systematically lower (with mean $0.35$) compared to the utilities of items in the majority group (with mean 0.7).

    \smallskip\paragraph{Real-world image dataset.} This dataset, also known as the Occupations dataset, consists of the top $100$ Google Image results for $96$ queries \cite{celiskeswani20}.
    For each image, the dataset provides the rank of the image in the search result and the (apparent) gender of the person in the picture (encoded as {binary} labels collected via MTurk) \cite{celiskeswani20}.
    We use the same preprocessing as \citet{mehrotra2022fair}:
    let an occupation be stereotypical if more than $80\%$ of the images in the corresponding search result are labeled to be of a specific gender.
    This results in $41/96$ stereotypical occupations, with 4,100 images.
    Each of the 4,100 images, is labeled as stereotypical if the image's gender label corresponds to the majority gender label in the corresponding occupation and is otherwise labeled as unstereotypical.
    We consider the set of stereotypical and unstereotypical images as protected groups and fix $\rho_i = \frac{1}{\log(1+r_i)},$ for all $1\leq i\leq m$ (as in \cite{mehrotra2022fair}).

\paragraph{Real-world names dataset.}
    This dataset, known as the chess ranking data, consists of the which consists of the FIDE rating of 3,251 chess players across the world \cite{ghosh2021uncertain}.
    For each player, the dataset consists of the players' self-identified gender (encoded as binary: male or female) and their self-identified race (encoded as Asian, Black, Hispanic, or White).
    For the simulation with this dataset, we consider the following four intersectional groups: White Male, White Female, Non-White Male, and Non-White Female.

    \paragraph{Data-specific parameters and setup.} \Cref{tab:hyperparams} lists the parameters $n$, $m$, $k$, and $p$ for each dataset.
    For the specified values of $n$ and $m$, in each simulation, we sample a subset of each dataset to select $m$ items from the data uniformly without replacement -- where $m$ is chosen to be the smallest value (up to a multiple of 100) so that in each draw there are at least $n/p$ items from each group (to ensure that the equal representation constraints are satisfiable).

\subsection{Observations and Discussion}\label{sec:results}

We now summarize our experimental observations and answer the research questions raised at the beginning of this section.

\paragraph{Pareto-optimality for individual and group fairness.}
\Cref{fig:fairness} presents the results that compare the individual and group fairness violations achieved by all the algorithms.
In \Cref{fig:fairness}, each row corresponds to results over the three respective datasets.
The sub-figures in the first column show group fairness violation against individual fairness violation by the algorithms.
Obviously, when no individual fairness constraints are enforced, i.e., $\gamma = 0$, all the algorithms achieve $0$ individual fairness violation.

We observe that our algorithm achieves Pareto-optimality with respect to individual $\ifviol{}$ and group-fairness $\gfviol{}$.
When the value of $\gamma$ is increased (sub-figures on the second and third columns), we see that the baseline \textbf{SJK21 (GF and IF)} violates group fairness constraints for smaller values of $\phi$ ($\phi\in\inbrace{1, 1.5}$ for the synthetic and image dataset and $\phi\in\inbrace{1, 2.5}$ for the Name dataset). This is caused because there are non-group fair rankings in the support of the distribution output by this algorithm. \textbf{SJK21 (IF)} has high group fairness violation ($\gfviol{}\approx 1$ for almost all non-trivial values of $\phi$ and $\gamma$) and no individual fairness violation as expected. Since \textbf{CSV18 (Greedy)} outputs a deterministic ranking, it has high individual fairness violation ($\ifviol{}\ge 0.8$ for all values of $\phi$ and $\gamma$) but has no group fairness violation. We note here that any deterministic ranking can have either $0$ or $1$ group fairness violation. Finally, \textbf{Unconstrained} does the worst of all the algorithms with $\gfviol{}=1$ and $\ifviol{}\ge 0.8$ on all the datasets.
In contrast, our algorithm does the best by achieving $0$ individual fairness violation and $0$ group fairness violation, thus indicating Pareto dominance over all other baselines.

\paragraph{Utility vs. Fairness}
\Cref{fig:utility} shows the utility vs.~group fairness violation plot for values of $\phi$ ranging from $1$ to $p$. Our algorithm achieves a very small decrease in the utility; that is, across all the datasets, our algorithm suffers only a maximum of 6\% loss in utility compared to other algorithms when all the algorithms are subject to the same fairness constraints. We also observe similar trends as \Cref{fig:fairness} on the synthetic dataset for other values of the parameters $m$ and $n$ (see \Cref{fig:syn_supp}).

\begin{figure*}[t!]
        \centering
        \begin{subfigure}
            \centering
            \includegraphics[scale=0.5]{figures/legend.pdf}
        \end{subfigure}

        \begin{subfigure}
            \centering
            \includegraphics[scale=0.35]{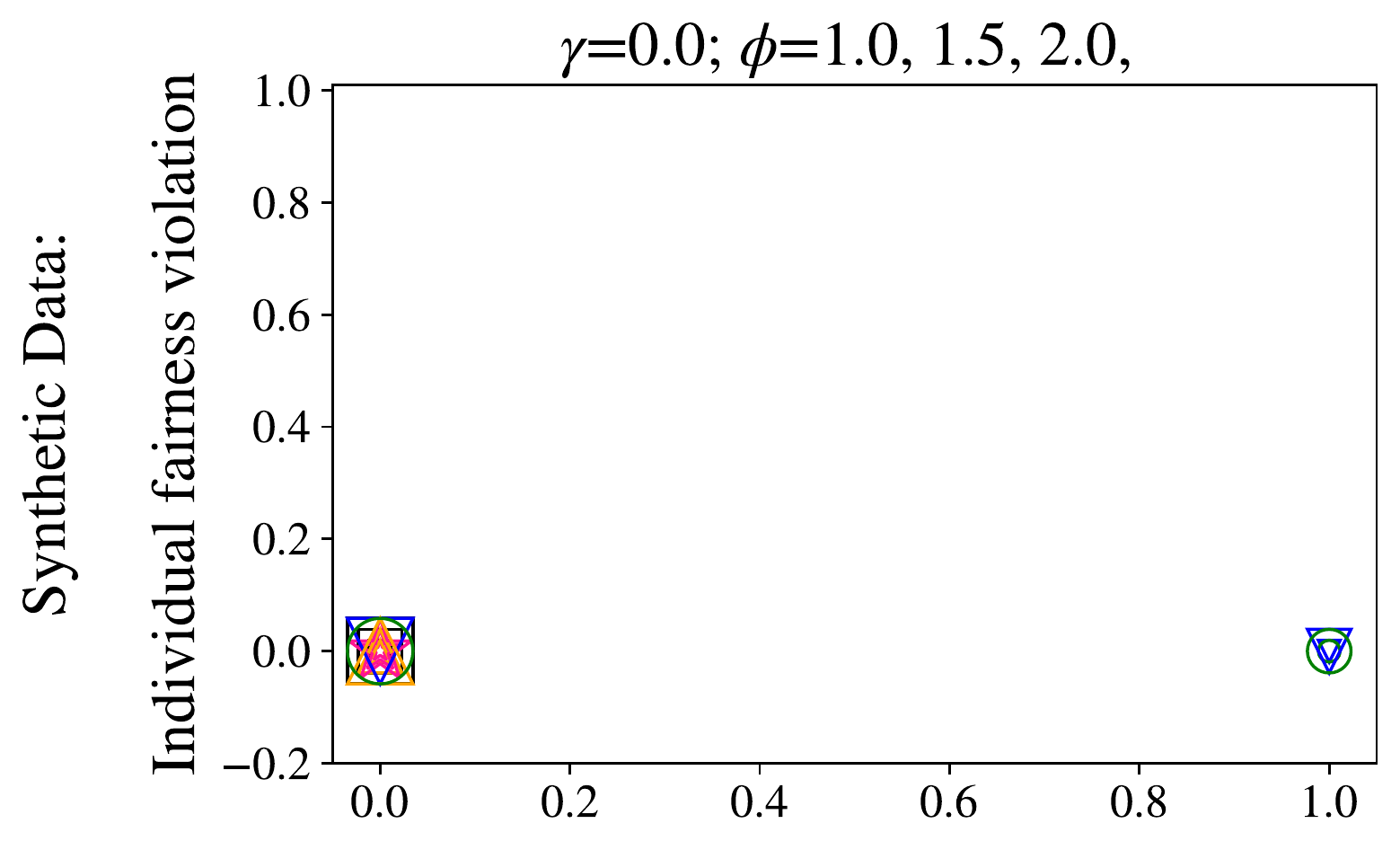}
        \end{subfigure}
        \begin{subfigure}
            \centering
            \includegraphics[scale=0.35]{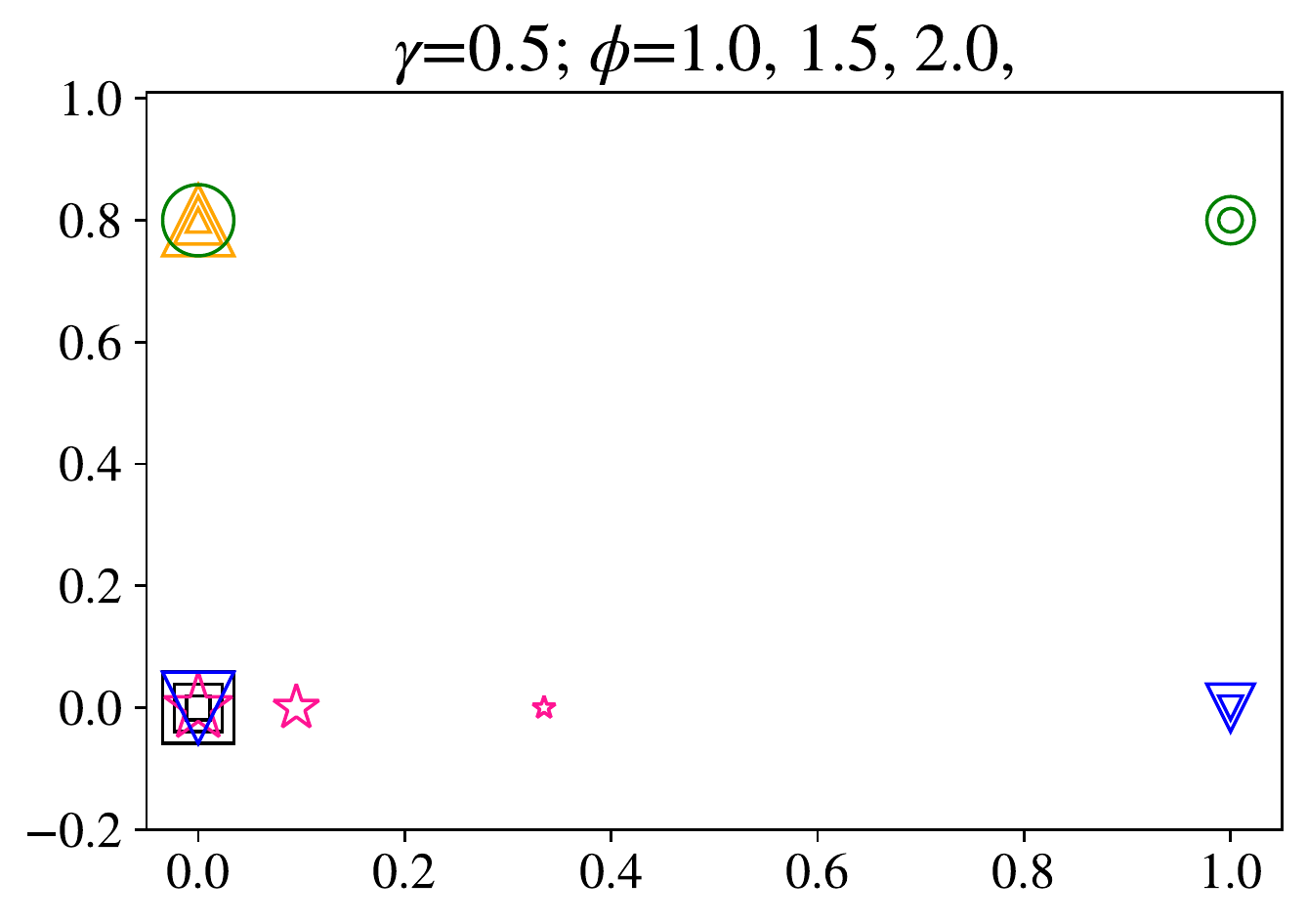}
        \end{subfigure}
        \begin{subfigure}
            \centering
            \includegraphics[scale=0.35]{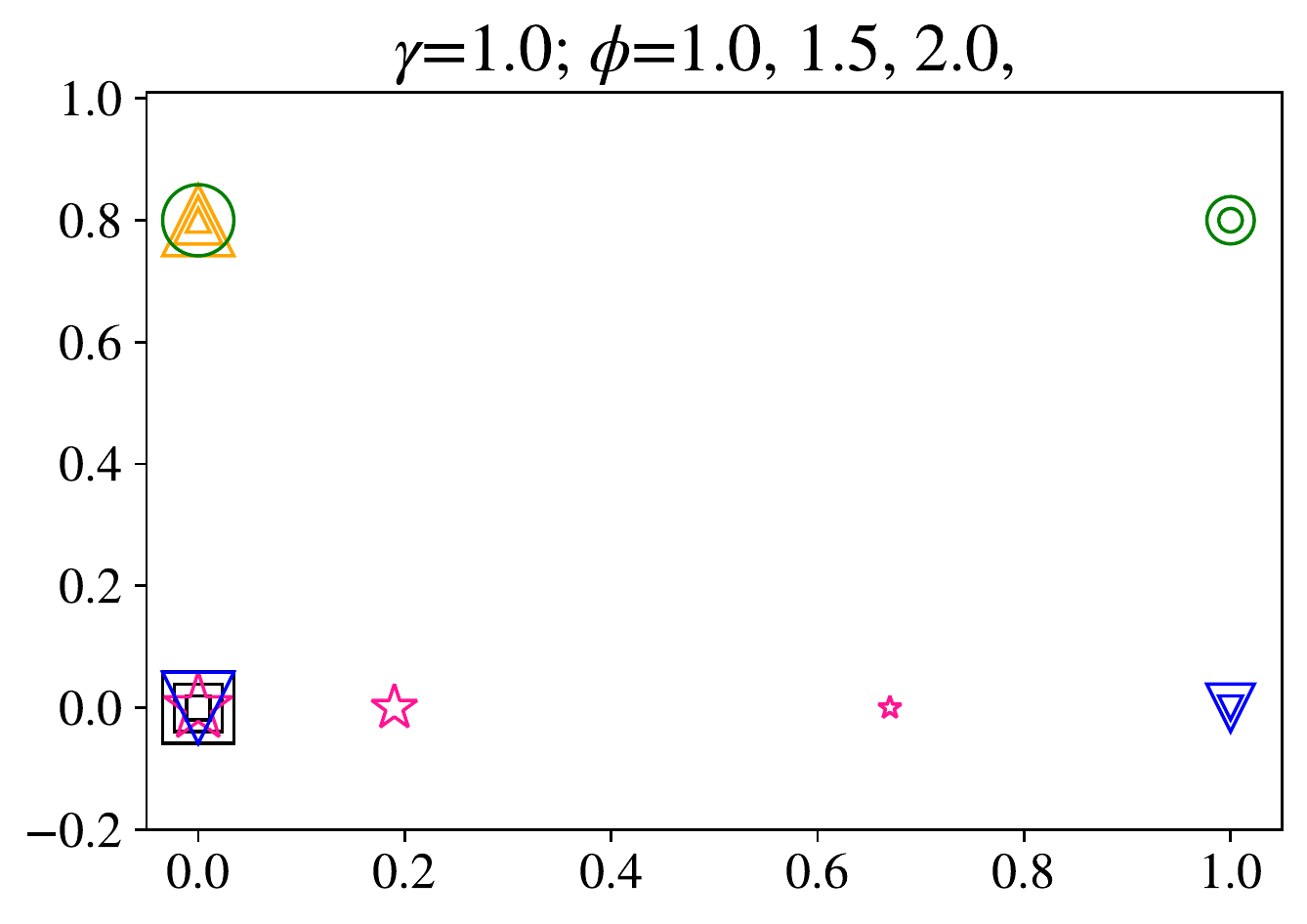}
        \end{subfigure}

        \begin{subfigure}
            \centering
            \includegraphics[scale=0.35]{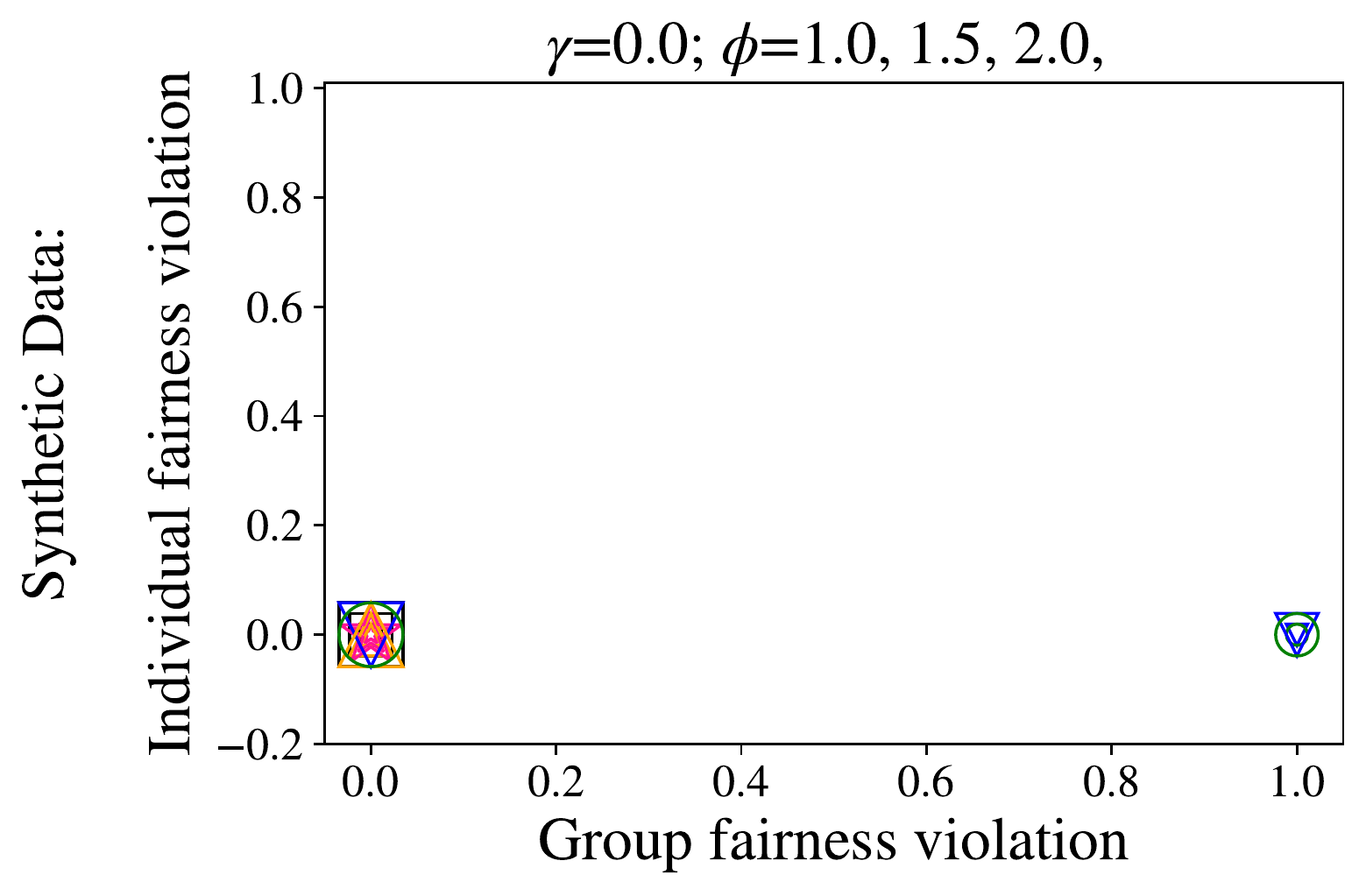}
        \end{subfigure}
        \begin{subfigure}
            \centering
            \includegraphics[scale=0.35]{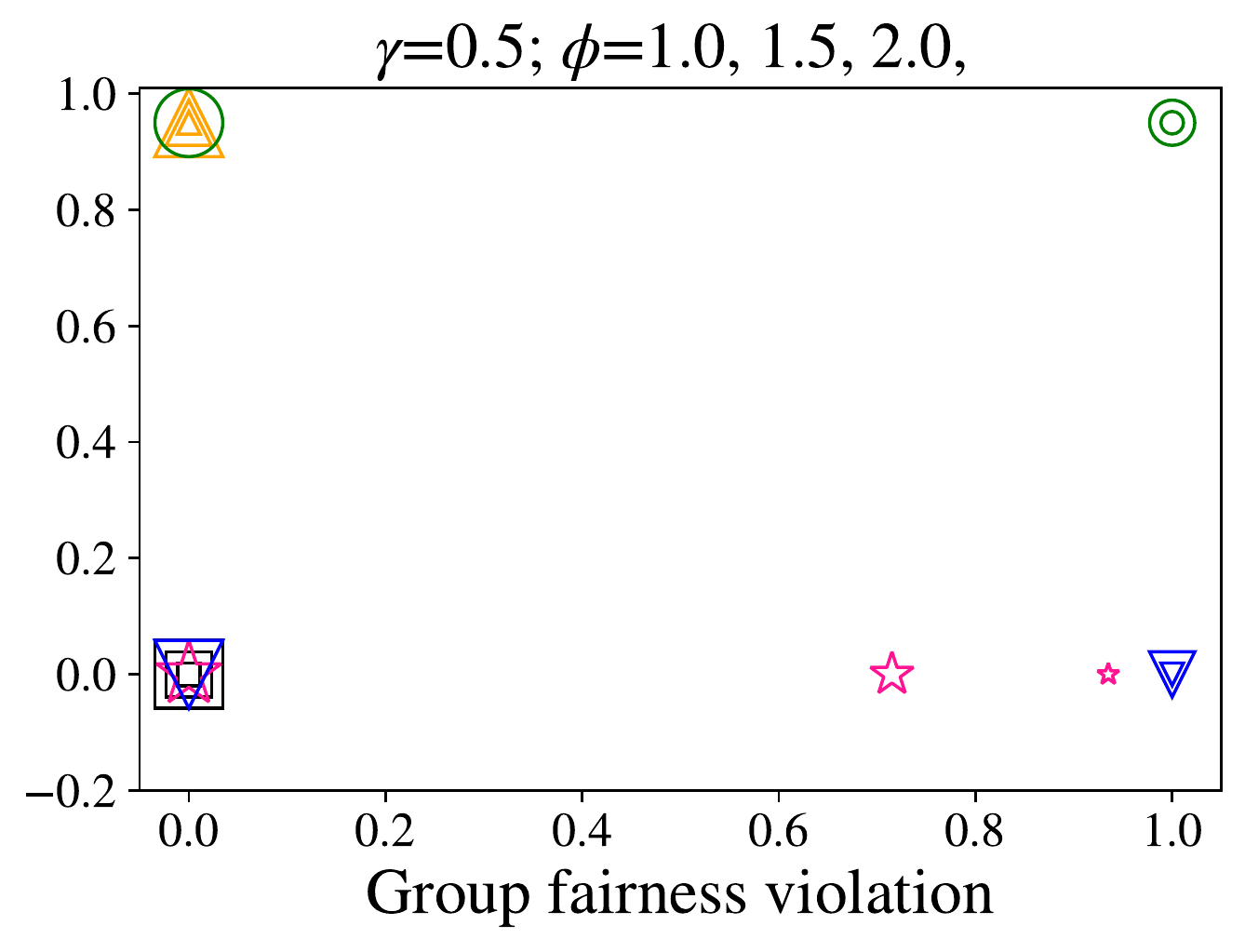}
        \end{subfigure}
        \begin{subfigure}
            \centering
            \includegraphics[scale=0.35]{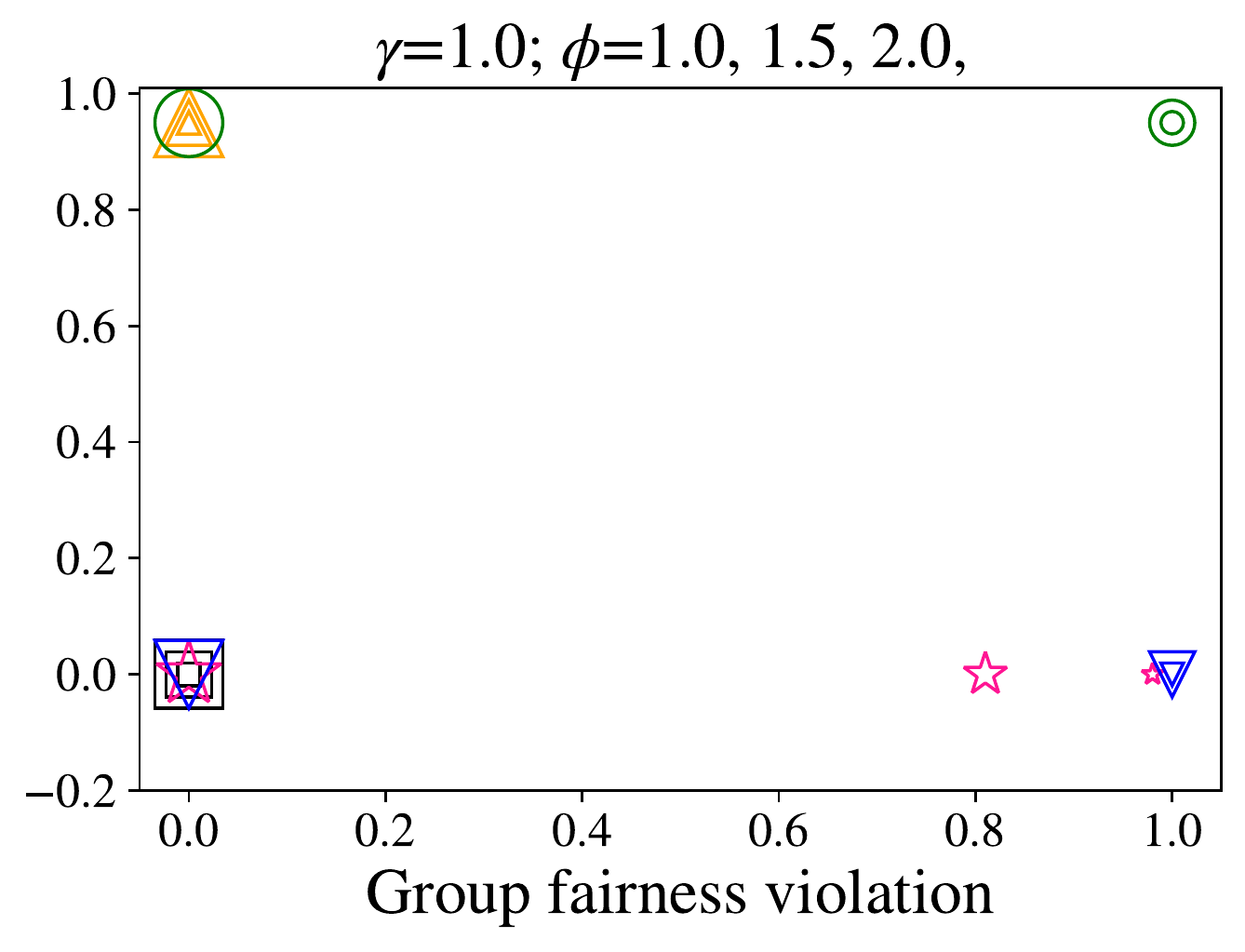}
        \end{subfigure}

        \caption{\textbf{Simulations with higher number of blocks:} Individual fairness violation vs. Group fairness violation on synthetic data where the first row is with $m = 100$, $k=20$,  and $n = 60$, that is top-$3$ blocks and the second row is with $m = 400$, $k=20$, and $n = 80$, that is top-$4$ blocks: In the plots, the parameter $\gamma$ controls individual fairness constraints and the parameter $\phi$ defines block-wise representation constraints. The size of the marker for each algorithm in each plot is proportional to the value of $\phi$. Lower the value of $\phi$, the stronger the group fairness constraints. In contrast, the lower the value of $\gamma$, the weaker the individual fairness constraints.}
        \label{fig:syn_supp}
    \end{figure*}

\section{Theoretical Overview}\label{sec:overview}
    In this section, we explain the key ideas in the proof of our main theoretical result, \cref{thm:algo_main}.
    Recall that \cref{thm:algo_main} proves that there is an algorithm, namely \cref{alg:main}, that given matrices $L, U$ specifying the group-fairness constraints and matrices $A,C$ specifying the individual fairness constraints along with vector $\rho$ specifying item utilities, outputs a ranking $R$ sampled from distribution $\cD$ such that:
    (1) $R$ \textit{always} satisfies the group fairness constraints and (2) $\cD$ satisfies the individual fairness constraints.
    Moreover, the expected utility of $R$ is at least $\alpha$ times the optimal--where $\alpha$ satisfies the lower bound in \cref{eq:lowerbound_on_alpha} (see {\cref{sec:theoretical_results} for tighter bounds on $\alpha$ under additional assumptions).}

    The algorithms in prior work \cite{fairExposureAshudeep,AshudeepUncertainty2021} roughly have the following structure:
        they first solve a linear program (e.g., \cref{prog:from_ashudeep}) to compute a marginal $\hD\in [0,1]^{n\times m}$ of distribution $\wh{\cD}$ from which they want to sample the ranking, and then they decompose $\hD$ into a convex combination of at most $\poly(n,m)$ rankings: $\hD=\sum_t \alpha_t R_t$.
        Let's represent this pictorially by the following two-step process:

        \vspace{-0mm}

        \makebox[\linewidth]{
            \centering
            \white{.}\hspace{5mm}
            \includegraphics[width=0.8\textwidth]{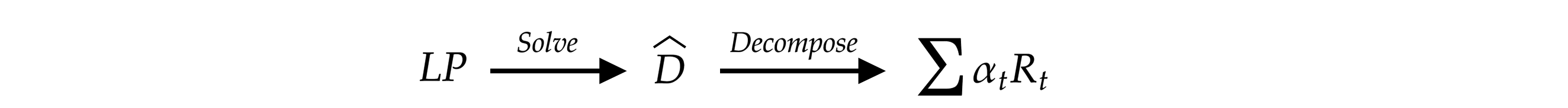}
        }

        \vspace{-0mm}

        \noindent In general, there exist infinitely many decompositions of any matrix $\hD$ and any valid decomposition is suitable for prior work \cite{fairExposureAshudeep,AshudeepUncertainty2021}:
        this is because, for any valid decomposition $\sum_t \alpha_t R_t$, sampling ranking $R_t$ with probability $\propto \alpha_t$ (for all $t$) results in optimal utility and satisfies the fairness constraints (both individual and group) in expectation.
        We, however, require a decomposition where \textit{each} ranking $R_t$ in the decomposition satisfies the group fairness constraints.
        Computing such a decomposition is the key technical difficulty in proving \cref{thm:algo_main}.

        In fact, such a decomposition may not even exist (see \cref{lem:existence_of_extreme_point} for an example).
        Moreover, instances, where the decomposition does not exist, are also not ``isolated'' or avoidable instances.
        Rather, they arise due to a fundamental property of group fairness constraints: that the set of matrices $R$ which satisfy the group fairness constraint and the constraint \cref{eq:prog_from_ashudeep_eq_2} form a polytope that has fractional vertices, which are vertices that (in matrix representation) have fractional entries.
        This is also related to the computational complexity of the problem:
        the Birkhoff von Neumann algorithm is able to efficiently compute a decomposition (when output all rankings are not required to satisfy group fairness constraints) precisely because such fractional vertices do not arise.
        Indeed, a deep result in Combinatorial Optimization is that the set of doubly-stochastic matrices -- which is the set of matrices that satisfy \cref{eq:prog_from_ashudeep_eq_2} (but do not necessarily satisfy the group fairness constraints) -- form a polytope that does not have fractional vertices \cite{schrijver2003combinatorial}.
        Such polytopes are said to be \textit{integral}; see \cite{schrijver2003combinatorial}.
        If such an integrality result was true in our setting, then we could have used a straightforward analog of the Birkhoff von Neumann algorithm.
        However, this is not the case as shown in \cref{sec:fractional_vertex}.

        Our idea is to first compute a ``coarse'' version of the decomposition and then ``refine'' it.
        Pictorially, our algorithm follows the following four-step process.

        \vspace{0mm}

        \makebox[\linewidth]{
            \white{.}\hspace{-10mm}
            \includegraphics[width=0.8\textwidth]{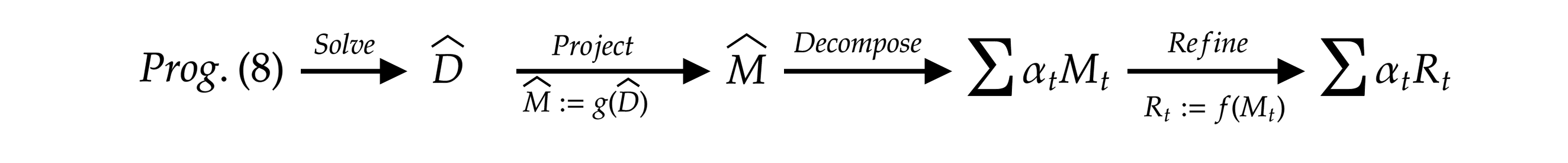}
        }

        \vspace{-02mm}

        \noindent Intuitively, a ``coarse ranking'' or a matching is an assignment of $m$ items to $q$ blocks $B_1,B_2,\dots,B_q$.
        Each item is matched to exactly 1 block and the $j$ the block has exactly $\abs{B_j}$ items.
        We encode a matching by an $m\times q$ matrix $M\in \inbrace{0,1}^{m\times q}$ where $M_{ij}=1$ item $i$ is in the block $B_j$ and $M_{ij}=0$ otherwise.

        Before discussing how to efficiently compute the decomposition $\sum_t\alpha_tM_t$ where each matching $M_t$ satisfies ``group fairness constraints,'' we need to define notions of fairness for matchings.
        \begin{definition}
            Given matrices $L,U\in \Z^{q\times p}$ and $A,C\in [0,1]^{m\times q}$, define the following definitions of fairness:
             \begin{enumerate}
                \item A distribution $\cD^{(\cM)}$ over the set $\cM$ of all matchings satisfies $(C,A)$-individual fairness constraints if for each $i$ and $j$, $C_{ij}\leq \Pr\nolimits_{M\sim \cD^{(\cM)}}\insquare{M_{ij} = 1} \leq A_{ij}.$
                \item A matching $M$ satisfies the $(L,U)$-group fairness constraints if for each $j$ and $\ell$, $L_{j\ell}\leq \sum\nolimits_{i\in G_\ell} M_{ij} \leq U_{j\ell}.$
            \end{enumerate}
        \end{definition}
        \noindent The fairness guarantee of \cref{alg:main} follows because of the following invariance.
        \begin{lemma}\label{lem:invariance}
            Let $f(D)\in [0,1]^{m\times q}$ be the projection of a matrix $D\in [0,1]^{m\times n}$ to the space of matchings.
            For any matrices $L,U\in \Z^{q\times p}$ and $A,C\in [0,1]^{m\times q}$,
            the following holds
            \begin{enumerate}
                \item A matrix $D$ satisfies $(C,A)$-individual fairness constraints if and only if $f(D)$ satisfies $(C,A)$-individual fairness constraints; and
                \item A matrix $D$ satisfies $(L,U)$-group fairness constraints if and only if $f(D)$ satisfies $(L,U)$-group fairness constraints.
            \end{enumerate}
        \end{lemma}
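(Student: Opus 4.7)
The plan is to prove both equivalences by a direct unpacking of definitions; no combinatorial argument is required. The entire content lies in the identity
\[
    f(D)_{ij} \;=\; \sum_{t \in B_j} D_{it},
\]
which is simply the defining formula for the projection (the map called $g$ in the description of \cref{alg:main}, identified with $f$ here). The individual and group fairness constraints on $D$ depend on $D$ only through per-block sums of the form $\sum_{t \in B_j} D_{it}$, so replacing each such sum by $f(D)_{ij}$ turns a ranking-level constraint into the corresponding matching-level one, and vice versa.

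For part (1), I would write ``$D$ satisfies $(C,A)$-individual fairness'' as the matrix inequality from \cref{eq:prog_from_ashudeep_eq_1}, namely $C_{ij} \leq \sum_{t\in B_j} D_{it} \leq A_{ij}$ for all $i,j$. Substituting the definition of $f$, the middle expression is exactly $f(D)_{ij}$, so the inequality becomes $C_{ij}\le f(D)_{ij}\le A_{ij}$, which is the matching-level definition of $(C,A)$-individual fairness applied to $f(D)$. For part (2), I would carry out the analogous rewriting for the group fairness constraint in \cref{eq:mod_of_ashudeep_eq_2}: swap the two summations to obtain $\sum_{i\in G_\ell} f(D)_{ij}$ on each side, which is precisely the quantity appearing in the matching-level group fairness definition. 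In both cases the implication is an equality of inequalities, so the ``iff'' is immediate.

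The only mild subtlety, and what I would flag as the ``obstacle'' even though it is really a bookkeeping remark, is that \cref{def:group_constraints,def:individual_constraints} and the newly introduced matching-level definitions are phrased in terms of distributions and individual rankings, whereas the lemma is stated for marginal matrices $D$. To bridge this, I would observe that if $\cD$ is a distribution over rankings with marginal $D_{it} = \Pr_{R\sim\cD}[R_{it}=1]$, then the probability that item $i$ appears in some position of block $B_j$ equals $\sum_{t\in B_j} D_{it} = f(D)_{ij}$, because the events $\{R_{it}=1\}$ for $t\in B_j$ are pairwise disjoint (any ranking places item $i$ in at most one position). An analogous identity holds for the induced distribution over matchings obtained by composing sampling from $\cD$ with the map $g$ from rankings to matchings. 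Once this correspondence between distribution-level and marginal-level statements is in place, the two equivalences reduce exactly to the definitional rewrites above, completing the proof.
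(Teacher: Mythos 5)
Your proposal is correct and matches the paper's own argument: the paper proves exactly this statement (as \cref{lem:equivalence_of_gf,lem:equivalence_of_if}) by the same one-line substitution $g(D)_{ij}=\sum_{t\in B_j}D_{it}$, which turns each ranking-level constraint into the corresponding matching-level one. Your additional remark identifying the lemma's $f$ with the projection $g$ of \cref{alg:main}, and the observation that block membership probabilities equal block-wise sums of marginals because the events $\{R_{it}=1\}$, $t\in B_j$, are disjoint, is the same bookkeeping the paper relies on implicitly.
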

        \noindent Let the ``refinement'' of a matching $M$ be $g(M)$.
        One can show that the projection of the refinement of $M$, i.e., $g(f(M))$ is $M$ itself. %
        This and \cref{lem:invariance} imply that the refinement $f(M)$ of a matching $M$ satisfies the $(L,U)$-group fairness constraints if and only if $M$ satisfies $(L,U)$-group fairness constraints.
        Since the marginal $\hD$ computed by our algorithm satisfies group fairness and individual fairness constraints, chaining the above invariance results over the four steps presented above implies that each ranking $R_t$ output by our algorithm also satisfies individual fairness and group fairness constraints.

        It remains to show that \cref{alg:main} is efficient and to establish its utility guarantee.
        A lower bound on the utility follows straightforwardly due to the following:
        First, one can show that both $\hD$ (which has the optimal utility) and $\sum_t \alpha_t R_t$ project to the same point in the matching space $\hM$.
        Second, one can lower bound the ratio of the utility of any two points $D_1,D_2\in [0,1]^{m\times n}$, in the ranking space, that have the same projection in the matching space (\cref{lem:utility_guarantee_2}).

        As for the efficiency of \cref{alg:main}, it follows because the set of matchings satisfies the ``integrality'' property we discussed above.
        This is why we introduce matching into our algorithm.
        Consider the set of points in the matching space that satisfy the group fairness constraint and the following analog of \cref{eq:prog_from_ashudeep_eq_2}:
        $$\forall 1\leq j\leq q, \ \ \sum\nolimits_{i} D_{ij} = \abs{B_j}\quad\text{and}\quad\forall 1\leq i\leq m, \ \ \sum\nolimits_{j} D_{ij}\leq 1.$$
        Formally, this set of points forms a polytope $\mathscr{M}$ that is integral.
        Somewhat surprisingly, this connects our work to a recent work on fair matchings that also makes this observation  \cite{pln_2022}.
        The polytope $\mathscr{M}$ is a part of the problem statement of \citet{pln_2022}.
        Our key insight is to make the connection between rankings and matchings discussed in this section, which may be of independent interest to works designing fair ranking algorithms.

        \paragraph{Extension to laminar families of protected groups.} The fact that $\mathscr{M}$ is integral is the only part of the proof where we use the fact that the protected groups $G_1,\dots,G_p$ are disjoint.
        All the remaining steps in the proof hold for arbitrary group structures.
        \citet{pln_2022} observe that this property continues to hold when the groups $G_1,\dots,G_p$  form a laminar family, i.e., for any set of groups such that either $G_\ell\subseteq G_k$ or $G_k\subseteq G_\ell$ for all $1\leq \ell,k\leq p$.
        Hence, our proof extends to laminar families of protected groups.
        This provides another example where efficient algorithms continue to exist when the underlying set structure is related from disjoint to laminar; as has also been observed by earlier works in algorithmic fairness \cite{celis2019polarization} and, more broadly, in the Combinatorial Optimization literature \cite{lau2011iterative}.
        {This allows our algorithm to incorporate constraints on certain intersectional groups, as has been argued by seminal works on intersectionality \cite{king1988multiple} and, more recently, by works analyzing mathematical models of intersectional bias \cite{mehrotra2022intersectional}.
        A concrete example is as follows: since the collection of the sets of all women, all Hispanic women, and all Black (non-Hispanic) women is laminar, one can ensure sufficient representation of women in the output ranking and, within women, also ensure the representation of Hispanic and Black women.}

\section{Proof of Theoretical Results}
    \subsection{Proof of \cref{thm:algo_main}}\label{sec:proofof:thm:algo_main}
        In this section, we prove \cref{thm:algo_main}. For the reader's convenience, we restate the theorem below.

        \thmAlgoMain*

        \paragraph{Outline of the proof of \cref{thm:algo_main}.}
        The key idea is to consider a family of ``coarse rankings'' or \textit{matchings}: Each matching places $\abs{B_j}$ items in block $B_j$ (for each $j\in [q]$) but does not specify which positions in $B_j$ are these items are placed at.
        We encode a matching by an $m\times q$ matrix $M\in \inbrace{0,1}^{m\times q}$ where $M_{ij}=1$ item $i$ is in the block $B_j$ and $M_{ij}=0$ otherwise.
        We give natural functions $g$ (respectively $f$) to convert a ranking to a matching (respectively a matching to a ranking).
        We also give natural analogs of the group fairness constraints and the individual fairness constraints for matchings (\cref{def:group_constraints_matchings,def:individual_constraints_matchings}).
        The algorithm promised in \cref{thm:algo_main} (\cref{alg:main}) proceeds as follows:
        \begin{itemize}[leftmargin=14pt]
            \item First, it computes the (ranking) marginal $\hD\in [0,1]^{m\times n}$ that is  an  optimal solution \prog{prog:mod_of_ashudeep}.
            \item Then, it computes the corresponding (matching) marginal $\hM\in [0,1]^{m\times q}$ defined as $\hM\coloneqq \mu(\hD)$
            \item Then, it computes the decomposition $\hM=\sum_t \alpha_t M_t$ such that each $M_t$ satisfies the group fairness constraints
            \item Finally, for each $t$, it computes $R_t\coloneqq \sigma(M_t)$ and outputs $R_t$ with probability $\propto \alpha_t$.
        \end{itemize}
        Crucially, we are able to efficiently solve the matching variant of \cref{prob:main} because the constraints analogous to \cref{eq:group_constraints} (see \cref{eq:group_constraints_matchings}) form a polytope all of whose vertices are integral, i.e., an integral polytope.
        Whereas, with rankings, the constraints in \cref{eq:group_constraints} form a fractional polytope (see \cref{sec:fractional_vertex} for an example).
        The proof of \cref{thm:algo_main} is divided into three parts.
        \begin{itemize} %
            \item \cref{sec:fairness_guarantee} proves the fairness guarantees of \cref{thm:algo_main},
            \item \cref{sec:utility_guarantee} proves the utility guarantee of \cref{thm:algo_main},
            \item \cref{sec:efficiency_guarantee} proves that \cref{thm:algo_main} runs in polynomial time.
        \end{itemize}

        \paragraph{Notation.}
            Let $\cM$ be the set of all $m\times q$ matrices denoting a matching, i.e.,
            \begin{align*}
                \cM\coloneqq \inbrace{X\negsp{}\in\negsp{} \inbrace{0,1}^{m\times q} \colon
                \begin{array}{l}
                    \forall_{i\in [m]},\ \sum\nolimits_j X_{ij}\leq 1,\\
                    \forall_{j\in [q]},\ \sum\nolimits_i X_{ij} = \abs{B_j}
                \end{array}
                }
                \yesnum\label{def:set_of_matchings}
            \end{align*}
            {Throughout this section, we assume that $n=m$.
            One can ensure this by adding $m-n$ ``dummy'' additional positions $\inbrace{n+1,n+2,\dots,m}$ and one additional block $B_{q+1}$ consisting of all dummy positions. For $B_{q+1}$, we set vacuous constraints: $L_{q+1,\ell}=0$ and $U_{q+1,\ell}=\abs{B_{q+1}}$  for each $1\leq \ell\leq p$.
            Note that this implies that any $X\in \cM$ satisfies the constraint $\sum_j X_{ij}\leq 1$ with equality (for any $1\leq i\leq m$).}
            Define $g\colon \cR\to \cM$ to be the following function:
            For each $R\in \cR$, $1\leq i\leq m$, and $1\leq j\leq q$,
            \begin{align*}
                g(R)_{ij} \coloneqq \sum\nolimits_{t\in B_j} R_{it}.
                \yesnum\label{def:mu}
            \end{align*}
            Intuitively, $g(R)$ is the unique matching that matches item $i$ to block $B_j$ if and only if item $i$ appears in $B_j$ in $R$.
            $g$ is a many-to-one function: For any two rankings $R$ and $R'$, if the set of items that appear in block $B_j$ in $R$ is the same as the set of items in block $B_j$ in $R'$ (for each $j$), then $g(R)=g(R')$.
            Next, $f\colon \cM\to \cR$ as follows: $f(M)$ is the unique ranking that satisfies
            \begin{itemize}
                \item $g(f(M))=M$ and
                \item for each $j$, items in block $B_j$ appear in non-increasing order of their utility in $f(M)$.
            \end{itemize}
            Given any $M\in \cM$ and $R\in \cR$, $f(M)$ and $g(R)$ can be computed in $O(mn+n\log{n})$ time.

        \subsubsection{Fairness Guarantee of Algorithm~\ref{alg:main}}\label{sec:fairness_guarantee}
            We will use the following definitions of group fairness constraints and individual fairness constraints for matchings.
            \begin{definition}[\textbf{Group fairness constraints for matchings}]\label{def:group_constraints_matchings}
                Given matrices $L,U\in \Z^{q\times p}$ a matching $M\in \cM$ satisfies the $(L,U)$-group fairness constraints if for each $j\in [q]$ and $\ell\in [p]$
                \begin{align*}
                    L_{j\ell}\leq \sum\nolimits_{i\in G_\ell} M_{ij} \leq U_{j\ell}.
                    \yesnum\label{eq:group_constraints_matchings}
                \end{align*}
                A matrix $M\in [0,1]^{m\times q}$ (encoding a marginal) satisfies $(L,U)$-group fairness constraints if for each $j$ and $\ell$, $M$ satisfies \cref{eq:group_constraints_matchings}.
            \end{definition}
            \begin{definition}[\textbf{Individual fairness constraints for matchings}]\label{def:individual_constraints_matchings}
                Given $A,C\in [0,1]^{m\times q}$, a distribution $\cD^{(\cM)}$ over the set $\cM$ of all matchings satisfies $(C,A)$-individual fairness constraints if for each $i\in [m]$ and $j\in [q]$
                \begin{align*}
                    C_{ij}\leq \Pr\nolimits_{M\sim \cD^{(\cM)}}\insquare{M_{ij} = 1} \leq A_{ij}.
                    \yesnum\label{eq:equality_const_indv_fairness_matchings}
                \end{align*}
                A matrix $M\in [0,1]^{m\times q}$ (encoding a marginal) satisfies $(C,A)$-individual fairness constraints if for each $i$ and $j$, $M$ satisfies \cref{eq:equality_const_indv_fairness_matchings}.
            \end{definition}

            \noindent With some abuse in notation, given $L$ and $U$, we use $(L, U)$-group fairness constraints to refer to both the constraints for rankings (\cref{def:group_constraints}) and for matchings (\cref{def:group_constraints_matchings}).
            Similarly, given $A$ and $C$, we use $(C,A)$-individual fairness constraints to refer to both the constraints for rankings (\cref{def:individual_constraints}) and for matchings (\cref{def:individual_constraints_matchings}).
            The meaning will be clear from the context.

            The fairness guarantee of \cref{alg:main} relies on the following two lemmas. We prove the lemmas here. We complete the proof of \cref{alg:main}'s fairness guarantee in \cref{sec:complete_proof_algo_main}.

            \begin{lemma}\label{lem:equivalence_of_gf}
                For any matrices $L,U\in \Z^{m\times q}$ and matrix $D\in [0,1]^{m\times n}$ encoding the marginals of some distribution over rankings,
                the matrix $D^{(\cM)}\coloneqq g(D)$ satisfies $(L,U)$-group fairness constraints if and only if $D$ satisfies \cref{eq:mod_of_ashudeep_eq_2}.
            \end{lemma}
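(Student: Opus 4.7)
The plan is a direct unfolding of the definition of $g$ followed by substitution; no nontrivial ideas are needed here. The statement should be read as asserting that the two sets of inequalities, one stated in terms of $M=g(D)$ and the other in terms of $D$, are literally identical after expanding $g$.

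First, I would fix $j\in [q]$ and $\ell\in [p]$, and write out the middle quantity of the group-fairness constraint for $M^{(\cM)}=g(D)$, namely $\sum_{i\in G_\ell} M^{(\cM)}_{ij}$. Plugging in the definition $g(D)_{ij} = \sum_{t\in B_j} D_{it}$ from \cref{def:mu}, I obtain
\[
\sum_{i\in G_\ell} M^{(\cM)}_{ij} \;=\; \sum_{i\in G_\ell} \sum_{t\in B_j} D_{it},
\]
which is exactly the quantity appearing between $L_{j\ell}$ and $U_{j\ell}$ in \cref{eq:mod_of_ashudeep_eq_2} (noting that the subscript $D_{ij}$ printed there is a typo for $D_{it}$, since the summation index is $t\in B_j$).

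Consequently the inequality $L_{j\ell}\leq \sum_{i\in G_\ell} M^{(\cM)}_{ij} \leq U_{j\ell}$ from \cref{def:group_constraints_matchings} is, as a statement about real numbers, the same inequality as \cref{eq:mod_of_ashudeep_eq_2} evaluated at the same $(j,\ell)$. Since both the definition of ``$M^{(\cM)}$ satisfies the $(L,U)$-group fairness constraints'' and the condition ``$D$ satisfies \cref{eq:mod_of_ashudeep_eq_2}'' are the conjunctions of these identical inequalities over all $j\in[q]$ and $\ell\in[p]$, the two conditions are logically equivalent, which proves the lemma. The only conceivable obstacle is the notational mismatch in \cref{eq:mod_of_ashudeep_eq_2} (the $D_{ij}$/$D_{it}$ typo), which should be flagged but does not affect the argument.
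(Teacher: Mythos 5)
Your proposal is correct and matches the paper's own proof essentially verbatim: both fix $(j,\ell)$, expand $g(D)_{ij}=\sum_{t\in B_j}D_{it}$ via \cref{def:mu}, and observe that the resulting inequalities coincide with \cref{eq:mod_of_ashudeep_eq_2} for every $(j,\ell)$. Your note that the subscript in \cref{eq:mod_of_ashudeep_eq_2} should read $D_{it}$ rather than $D_{ij}$ is a fair catch of a typo and does not affect the argument.
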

            \begin{proof}
                Fix any $1\leq \ell\leq p$ and $1\leq j\leq q$.
                \begin{align*}
                    \sum\nolimits_{i\in G_\ell} D^{(\cM)}_{ij}
                    \ \ \Stackrel{\eqref{def:mu}}{=} \ \
                    \sum\nolimits_{i\in G_\ell} \sum\nolimits_{t\in B_j} D_{it}
                \end{align*}
                Hence, $L_{j\ell}\leq \sum\nolimits_{i\in G_\ell} D^{(\cM)}_{ij} \leq U_{j\ell}$ if and only if $L_{j\ell}\leq \sum\nolimits_{i\in G_\ell} \sum\nolimits_{t\in B_j} D_{it}$ $\leq U_{j\ell}$.
                The result follows as this holds for all $1\leq i\leq m$ and $1\leq j\leq q$.
            \end{proof}

            \begin{lemma}\label{lem:equivalence_of_if}
                For any matrix $A,C\in [0,1]^{m\times q}$ and matrix $D\in [0,1]^{m\times n}$ encoding the marginals of some distribution over rankings,
                the matrix $D^{(\cM)}\coloneqq g(D)$ satisfies $(C,A)$-individual fairness constraints if and only if $D$ satisfies \cref{eq:prog_from_ashudeep_eq_1}.
            \end{lemma}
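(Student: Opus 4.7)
The plan is to proceed exactly parallel to the proof of \cref{lem:equivalence_of_gf}, namely by unfolding both the definition of $g$ and the definition of the marginal, and observing that the two sides of the constraint are literally the same quantity.

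First, I would fix arbitrary indices $1\leq i\leq m$ and $1\leq j\leq q$ and compute $D^{(\cM)}_{ij}$ by applying \cref{def:mu}:
\[
    D^{(\cM)}_{ij} \;=\; g(D)_{ij} \;=\; \sum\nolimits_{t\in B_j} D_{it}.
\]
This immediately shows that the inequality $C_{ij}\leq D^{(\cM)}_{ij}\leq A_{ij}$ appearing in \cref{eq:equality_const_indv_fairness_matchings} is literally the same inequality as \cref{eq:prog_from_ashudeep_eq_1}. Quantifying over all $i$ and $j$ then gives the claimed equivalence, provided that we justify interpreting $D^{(\cM)}_{ij}$ as the marginal $\Pr_{M\sim\cD^{(\cM)}}[M_{ij}=1]$ of the induced matching distribution.

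Second, I would justify this probabilistic interpretation. Let $\cD$ be the distribution over rankings whose marginal is $D$, so that $D_{it}=\Pr_{R\sim\cD}[R_{it}=1]$, and let $\cD^{(\cM)}$ be its pushforward under $g$, i.e., the law of $g(R)$ when $R\sim\cD$. By the definition of $g$ in \cref{def:mu}, for any ranking $R\in\cR$ we have $g(R)_{ij}=1$ if and only if $R_{it}=1$ for some $t\in B_j$, and because $R$ places item $i$ in at most one position the events $\{R_{it}=1\}_{t\in B_j}$ are pairwise disjoint. Hence
\[
    \Pr\nolimits_{M\sim\cD^{(\cM)}}[M_{ij}=1]
    \;=\; \Pr\nolimits_{R\sim\cD}[\exists\,t\in B_j:\,R_{it}=1]
    \;=\; \sum\nolimits_{t\in B_j}\Pr\nolimits_{R\sim\cD}[R_{it}=1]
    \;=\; \sum\nolimits_{t\in B_j} D_{it},
\]
which matches $D^{(\cM)}_{ij}$ from the first step.

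Combining the two steps, $D^{(\cM)}$ satisfies the $(C,A)$-individual fairness constraint for matchings (\cref{eq:equality_const_indv_fairness_matchings}) at $(i,j)$ if and only if $D$ satisfies the $(i,j)$-th inequality of \cref{eq:prog_from_ashudeep_eq_1}, and taking the conjunction over all $(i,j)$ yields the lemma. No step looks difficult: the whole argument is a bookkeeping exercise that rides on the disjointness of the events $\{R_{it}=1\}$ for $t$ in a single block $B_j$, which is precisely what makes $g$ preserve marginal probabilities. The only (minor) subtlety to flag is that we are implicitly using the existence of some underlying distribution $\cD$ with marginal $D$ -- this is given in the hypothesis ``$D\in[0,1]^{m\times n}$ encoding the marginals of some distribution over rankings'' and so requires no further verification.
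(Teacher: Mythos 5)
Your proposal is correct and follows essentially the same route as the paper: fix $(i,j)$, unfold \cref{def:mu} to get $D^{(\cM)}_{ij}=\sum_{t\in B_j}D_{it}$, and observe the inequalities in \cref{eq:equality_const_indv_fairness_matchings} and \cref{eq:prog_from_ashudeep_eq_1} coincide. Your second step (identifying $D^{(\cM)}$ with the marginal of the pushforward distribution via disjointness of $\{R_{it}=1\}_{t\in B_j}$) is correct but not needed here, since the lemma is a purely matrix-level statement under \cref{def:individual_constraints_matchings}.
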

            \begin{proof}
                Fix any $1\leq i\leq m$ and $1\leq j\leq q$.
                \begin{align*}
                    D^{(\cM)}_{ij}
                    \ \ \Stackrel{\eqref{def:mu}}{=} \ \
                    \sum\nolimits_{t\in B_j} D_{it}
                \end{align*}
                Hence, $\sum\nolimits_{t\in B_j} D_{it}\geq C_{ij}$ if and only if $D^{(\cM)}_{ij} \geq C_{ij}$ and $\sum\nolimits_{t\in B_j} D_{it}\leq A_{ij}$ if and only if $D^{(\cM)}_{ij} \leq A_{ij}$.
                The result follows as this holds for all $1\leq i\leq m$ and $1\leq j\leq q$.
            \end{proof}

        \subsubsection{Utility Guarantee of Algorithm~\ref{alg:main}}\label{sec:utility_guarantee}
            Let $\cD^\star$ be  an  optimal solution of \cref{prob:main}.
            Let $D^\star\in [0,1]^{m\times n}$ be the corresponding matrix of marginals.
            Since $\cD^\star$ is $(C,A)$-individually fair and is supported over the set of $(L,U)$-group fair rankings, it follows that $D^\star$ is feasible for \prog{prog:mod_of_ashudeep}.
            Hence, $$\rho^\top (D^\star) v\leq \rho^\top \hD v$$ where $\hD$ is  an  optimal solution of \prog{prog:mod_of_ashudeep}.
            Since $\rho^\top (D^\star) v$ is the expected utility of $\cD^\star$, the expected utility of $\cD^\star$ is at most $\rho^\top \hD v$.
            \begin{lemma}\label{lem:utility_guarantee_1}
                The expected utility of $\cD^\star$ is at most $\rho^\top \hD v$, where  $\cD^\star$ is  an  optimal solution of \cref{prob:main} and $\hD$ is  an  optimal solution of \prog{prog:mod_of_ashudeep}.
            \end{lemma}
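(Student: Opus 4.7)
The plan is to reduce the lemma to the LP-optimality of $\hD$ by showing that the matrix of marginals of $\cD^\star$ is itself a feasible solution of \prog{prog:mod_of_ashudeep} with objective value equal to the expected utility of $\cD^\star$. So the argument has three parts: (i) rewrite the expected utility of $\cD^\star$ in terms of its marginals, (ii) verify feasibility of those marginals for \prog{prog:mod_of_ashudeep}, and (iii) invoke optimality of $\hD$.

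First, let $D^\star \in [0,1]^{m\times n}$ be defined by $D^\star_{ij} \coloneqq \Pr_{R\sim \cD^\star}[R_{ij}=1]$. By linearity of expectation,
\[
\Ex_{R\sim \cD^\star}[\rho^\top R v] \;=\; \sum_{i,j} \rho_i v_j \Ex_{R\sim \cD^\star}[R_{ij}] \;=\; \rho^\top D^\star v.
\]
Next, I would check each family of constraints of \prog{prog:mod_of_ashudeep} in turn. The row/column constraints in \cref{eq:prog_from_ashudeep_eq_2} follow because every $R$ in the support of $\cD^\star$ lies in $\cR$ (so each $R$ satisfies $\sum_i R_{ij}=1$ and $\sum_j R_{ij}\leq 1$); taking expectations over $R\sim\cD^\star$ preserves these (in)equalities since they are linear in $R$. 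The individual fairness constraints in \cref{eq:prog_from_ashudeep_eq_1} follow directly from the hypothesis that $\cD^\star$ is $(C,A)$-individually fair together with $\sum_{t\in B_j} D^\star_{it} = \Pr_{R\sim \cD^\star}[R_{it}=1 \text{ for some } t\in B_j]$ (using that for a fixed ranking $R$, at most one position in $B_j$ can be occupied by item $i$). Finally, the group fairness constraints in \cref{eq:mod_of_ashudeep_eq_2} follow from the hypothesis that every $R$ in the support of $\cD^\star$ satisfies the $(L,U)$-group fairness constraints in \cref{eq:group_constraints}: applying $\Ex_{R\sim \cD^\star}[\cdot]$ to both sides of \cref{eq:group_constraints} and using linearity gives the same inequalities for $D^\star$.

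Putting (i) and (ii) together, $D^\star$ is a feasible solution of \prog{prog:mod_of_ashudeep} with objective value equal to the expected utility of $\cD^\star$. Since $\hD$ is an optimal solution of \prog{prog:mod_of_ashudeep}, we conclude $\Ex_{R\sim \cD^\star}[\rho^\top R v] = \rho^\top D^\star v \leq \rho^\top \hD v$, which is the claim. There is no real technical obstacle here; the only subtlety worth stating carefully is the group-fairness step, where we must note that the group-fairness constraints are linear inequalities in $R$, so they are preserved under taking expectations over a distribution supported entirely on group-fair rankings.
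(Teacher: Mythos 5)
Your proposal is correct and follows essentially the same route as the paper: the paper also observes that the marginal $D^\star$ of $\cD^\star$ is feasible for \prog{prog:mod_of_ashudeep}, that its objective value $\rho^\top D^\star v$ equals the expected utility of $\cD^\star$, and then invokes optimality of $\hD$. You simply spell out the feasibility checks (linearity of the constraints, and the disjointness of the events $\{R_{it}=1\}$ for $t\in B_j$) that the paper leaves implicit.
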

            \noindent Thus, it suffices to show that $\rho^\top Dv\geq \alpha (\rho^\top \hD v)$, where $D$ is the marginal of the distribution $\cD$ in \cref{thm:algo_main}.

            \begin{lemma}\label{lem:utility_guarantee_2}
                If $v\in \R^n$ is such that, {for all $\Delta\geq 0$, $\frac{v_{s+\Delta}}{v_s}$ is a non-decreasing function of $1\leq s \leq n-\Delta$,} then
                \begin{align*}
                    \rho^\top D v \geq \inparen{\frac{\sum_{s\in B_1} v_s}{\abs{B_1} \cdot v_{1}}} \cdot \rho^\top \hD v.
                \end{align*}
            \end{lemma}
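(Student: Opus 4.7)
The plan is to prove the inequality through a per-block comparison of the utilities of $D$ and $\hD$, exploiting two structural properties of \cref{alg:main}. First, since $D = \sum_t \alpha_t f(M_t)$ and $\hM = \sum_t \alpha_t M_t$, and $g\circ f$ is the identity on $\cM$, we have $g(D) = g(\hD) = \hM$; in particular $D$ and $\hD$ share the same ``items-per-block'' marginals, i.e., $\sum_{t\in B_j} D_{it} = \sum_{t\in B_j} \hD_{it} = \hM_{ij}$ for every $i, j$. Second, each refined ranking $R_t = f(M_t)$ places the items matched to $B_j$ at positions $s(j), s(j)+1, \ldots, s(j)+|B_j|-1$ in non-increasing order of utility.

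Fix a block $B_j$ with first position $s(j)$. To upper bound $\hD$'s contribution from $B_j$, observe that the hypothesis on $v$ implies $v$ is non-increasing, so $v_t \leq v_{s(j)}$ for all $t \in B_j$; hence
\[
    \sum_i \rho_i \sum_{t \in B_j} v_t\, \hD_{it} \leq v_{s(j)} \sum_i \rho_i\, \hM_{ij}.
\]
To lower bound $D$'s contribution from $B_j$, fix a matching $M_t$ in the decomposition and let $\rho^{(1)} \geq \cdots \geq \rho^{(|B_j|)}$ denote the sorted utilities of the items matched to $B_j$ in $M_t$. Since the discount sequence $(v_{s(j)+r})_{r=0}^{|B_j|-1}$ and the utility sequence $(\rho^{(r+1)})_{r=0}^{|B_j|-1}$ are both non-increasing, Chebyshev's sum inequality gives
\[
    \sum_{r=0}^{|B_j|-1} v_{s(j)+r}\, \rho^{(r+1)} \geq \frac{1}{|B_j|}\left(\sum_{s \in B_j} v_s\right)\left(\sum_i [M_t]_{ij}\, \rho_i\right).
\]
Taking the $\alpha_t$-weighted combination over $t$, and using $\sum_t \alpha_t [M_t]_{ij} = \hM_{ij}$, yields
\[
    \sum_i \rho_i \sum_{t \in B_j} v_t\, D_{it} \geq \frac{\sum_{s \in B_j} v_s}{|B_j|} \sum_i \rho_i\, \hM_{ij}.
\]

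Dividing these two per-block bounds produces a ratio of at least $c_j \coloneqq \frac{\sum_{s \in B_j} v_s}{|B_j|\, v_{s(j)}}$. The monotonicity of $v_{s+\Delta}/v_s$ in $s$, together with $s(1) = 1$, gives $v_{s(j)+\Delta}/v_{s(j)} \geq v_{1+\Delta}/v_1$ term-by-term; under the lemma's block structure this yields $c_j \geq c_1 = \frac{\sum_{s\in B_1}v_s}{|B_1|\, v_1}$, and summing the per-block bounds over $j$ delivers the claimed global inequality. I expect the main obstacle to be the Chebyshev step together with tracking the sorted correspondence correctly under each random matching $M_t$ and averaging with the weights $\alpha_t$ so that the marginal $\hM_{ij}$ emerges on the right-hand side; the reduction of $c_j$ to $c_1$ via $v$-monotonicity is then routine.
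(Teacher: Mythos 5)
Your proposal is correct and follows essentially the same route as the paper's proof: establishing that $D$ and $\hD$ share block-wise marginals via $g(D)=g(\hD)=\hM$, bounding $\hD$'s per-block contribution by $v_{s(j)}$, lower-bounding $D$'s per-block contribution via Chebyshev's sum inequality applied to each refined ranking $f(M_t)$ and averaging with the weights $\alpha_t$, and finally using the monotonicity of $v_{s+\Delta}/v_s$ (with equal block sizes) to reduce the per-block ratio $c_j$ to $c_1$.
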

            \begin{proof} The proof is divided into two steps, in the first step we show that $D$ and $\hD$ have identical marginals.
            In the second step, we use this to lower-bound the ratio of $D$'s expected utility to $\hD$ expected utility.

                \paragraph{Step 1 (Block-wise sums of $D$ and $\hD$ are identical):}
                    By design in \cref{alg:main}, $\Pr_{R'\sim \cD}[R'=R_t]=\alpha_t$ for each $t$.
                    Since $D$ is the marginal of $\cD$, it follows that $D = \sum_t \alpha_t R_t$.
                    Further, since $g(R_t)=M_t$, $g$ is linear, and $\hM=\sum_t \alpha_t M_t$, it follows that $g(D)=\hM$.
                    By construction in {Step 2} of \cref{alg:main}, $g(\hD)=\hM$.
                    Hence, $g(\cD)=g(\hD)$.
                    Consequently, by the definition of $g$ (\cref{def:mu}), for each $1\leq i\leq m$ and $1\leq j\leq q$, it holds that
                    \begin{align*}
                        \sum\nolimits_{t\in B_j} D_{it}=\sum\nolimits_{t\in B_j} \hD_{it}.
                        \yesnum\label{eq:equality_of_block_wise_sums}
                    \end{align*}

                \paragraph{Step 2 (Completing the proof):}
                    To complete the proof, we will upper bound $\rho^\top \hD v$ and lower bound $\rho^\top D v$.
                    For each $1\leq j\leq q$, let $s(j)$ be the smallest position in $B_j$.
                    In other words, $B_j=\inbrace{s(j), s(j)+1,\dots,s(j+1)-1}$.
                    \begin{align*}
                        \rho^\top \hD v
                        &= \sum\nolimits_{j=1}^q\sum\nolimits_{i=1}^m \sum\nolimits_{s\in B_j} \rho_i v_s \hD_{is} \\
                        &\leq \sum\nolimits_{j=1}^q v_{s(j)}\sum\nolimits_{i=1}^m \rho_i\sum\nolimits_{s\in B_j}\hD_{is}.\yesnum\label{eq:ub_util_guarantee}
                    \end{align*}
                    \noindent Fix any $1\leq j\leq q$ and $t$.
                    Suppose items $i_1, i_2, \dots, i_{\abs{B_j}}$ appear in block $B_j$ in $R_t$ in that order.
                    Since $R_t=f(M_t)$ and (by definition) items each block $B$ of $f(M)$ appear in non-increasing order of utility, it follows that
                    \begin{align*}
                      \rho_{i_1}\geq \rho_{i_2}\geq \dots \geq \rho_{i_{\abs{B_j}}}.  \yesnum\label{eq:increasing_rho}
                    \end{align*}
                    It holds that
                    \begin{align*}
                        \sum\nolimits_{i=1}^m \sum\nolimits_{s\in B_j} \rho_i (R_t)_{is} v_s
                        &= \sum\nolimits_{s\in B_j} \rho_{i_s} v_s\\
                        &\geq \frac{1}{\abs{B_j}} \inparen{\sum\nolimits_{s\in B_j} \rho_{i_s}}\inparen{\sum\nolimits_{s\in B_j} v_{s}}\\
                        &= \frac{1}{\abs{B_j}} \inparen{\sum\nolimits_{i=1}^m\sum\nolimits_{s\in B_j} \rho_i (R_t)_{is} }\inparen{\sum\nolimits_{s\in B_j} v_{s}}
                        \yesnum\label{eq:lb_on_util_of_one_rank}
                    \end{align*}
                    Where the inequality holds because $v_1\geq v_2\geq \dots \geq v_n$, \cref{eq:increasing_rho}, and the following standard inequality (which is also known as Chebyshev's order inequality): for all $z\geq 1$ and $x,y\in \R^z$, if $x_1\geq x_2\geq \dots\geq x_z$ and $y_1\geq y_2\geq \dots\geq y_z$, then
                    \[
                        z \sum_{i=1}^z x_i y_i - \inparen{\sum_{i=1}^z x_i}\inparen{\sum_{i=1}^z y_i}=\sum_{i,j\in [z]\colon i<j}\inparen{x_i-x_j}\inparen{y_i-y_j}\geq 0.
                    \]
                    The following lower bound holds for $\rho^\top D v$
                    \begin{align*}
                        \rho^\top D v\
                        &=\ \sum\nolimits_{i=1}^m\sum\nolimits_{j=1}^q\sum\nolimits_{s\in B_j}\sum\nolimits_{t} \rho_i v_s  (R_t)_{is}\\
                        &\Stackrel{\eqref{eq:lb_on_util_of_one_rank}}{\geq}\ \sum\nolimits_{j=1}^q \sum\nolimits_{t} \frac{\inparen{\sum\nolimits_{i=1}^m\sum\nolimits_{s\in B_j} \rho_i (R_t)_{is} }\inparen{\sum\nolimits_{s\in B_j} v_{s}}}{\abs{B_j}}\\
                        &\Stackrel{}{=}\ \sum\nolimits_{j=1}^q \frac{\inparen{\sum\nolimits_{i=1}^m\sum\nolimits_{s\in B_j} \rho_i D_{is} }\inparen{\sum\nolimits_{s\in B_j} v_{s}}}{\abs{B_j}}\\
                        &\Stackrel{}{=}\ \sum\nolimits_{j=1}^q \frac{\sum\nolimits_{s\in B_j} v_{s}}{\abs{B_j}} \sum\nolimits_{i=1}^m\rho_i \sum\nolimits_{s\in B_j}  D_{is}\\
                        &\Stackrel{\eqref{eq:equality_of_block_wise_sums}}{=}\ \sum\nolimits_{j=1}^q \frac{\sum\nolimits_{s\in B_j} v_{s}}{\abs{B_j}} \sum\nolimits_{i=1}^m\rho_i \sum\nolimits_{s\in B_j}  \hD_{is}.
                            \yesnum\label{eq:lb_util_guarantee}
                    \end{align*}
                    Taking the ratio of \cref{eq:ub_util_guarantee,eq:lb_util_guarantee}, it follows
                    \begin{align*}
                        \frac{\rho^\top D v}{\rho^\top \hD v}
                        &\geq \min_{1\leq j\leq q} \frac{\sum_{s\in B_j}  v_s}{\abs{B_j} \cdot v_{s(j)}}\\
                        &= \min_{1\leq j\leq q} \frac{\sum_{s\in B_j}  v_s}{k\cdot v_{s(j)}}
                        \tag{Using that each block $B_1,B_2,\dots,B_q$ has size $k$}\\
                        &\geq \frac{\sum_{s\in B_1}  v_s}{k\cdot v_{s(j)}}\\
                        &= \frac{v_1+v_2+\dots+v_k}{k\cdot v_1}.
                        \yesnum\label{eq:verifying_upperbound}
                    \end{align*}
                    Where the last inequality holds due to (1) {the assumption that, for all $\Delta\geq 0$, $\frac{v_{s+\Delta}}{v_s}$ is a non-decreasing function of $s$} and (2) that $B_1$, $B_2,\cdots,B_q$ are blocks of positions such that all positions in $B_j$ appear before all positions in $B_{j+1}$ (for each $1\leq j < q$). %
            \end{proof}

        \subsubsection{Efficiency Guarantee of Algorithm~\ref{alg:main}}\label{sec:efficiency_guarantee}
            In this section, we prove \cref{lem:decomposition}

            \begin{lemma}\label{lem:decomposition}
                There is a polynomial time algorithm, that given any $m\times q$ matrix $M\in [0,1]^{m\times q}$ satisfying the following
                \begin{itemize}
                    \item \cref{eq:group_constraints_matchings},
                    \item for each $i\in [m]$, $\sum_{j} M_{ij}\leq 1$,
                    \item for each $j\in [q]$, $\sum_{i} M_{ij} = \abs{B_j}$,
                \end{itemize}
                outputs numbers $\alpha_1,\alpha_2,\dots,\alpha_{O((m+p)n)}\geq 0$ and corresponding matchings $M_1,M_2,\dots,M_{O((m+p)n)}\in \cM$ such that
                \begin{align*}
                    M &= \sum\nolimits_{t} \alpha_t M_t \quad\text{and}\quad \sum\nolimits_t\alpha_t = 1,
                    \yesnum\label{eq:decomposition}
                \end{align*}
                and each $M_t$ satisfies $(L,U)$-group fairness constraints.
            \end{lemma}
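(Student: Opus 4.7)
The plan is to exploit an integrality property of the polytope of fractional group-fair matchings and convert Carathéodory's theorem into an efficient iterative rounding scheme. Define
\[
    \mathscr{M} \coloneqq \inbrace{X \in \R_{\geq 0}^{m \times q} \colon \forall i,\ \sum\nolimits_j X_{ij} \leq 1;\ \forall j,\ \sum\nolimits_i X_{ij} = \abs{B_j};\ X \text{ satisfies \cref{eq:group_constraints_matchings}}}.
\]
The input $M$ lies in $\mathscr{M}$ by the lemma's hypotheses. As noted in the theoretical overview and in \cite{pln_2022}, when the groups $G_1,\dots,G_p$ are disjoint (or, more generally, laminar), the constraint matrix defining $\mathscr{M}$ is a bipartite transportation system augmented with laminar side constraints, hence totally unimodular; consequently every vertex of $\mathscr{M}$ is a 0/1 matrix, i.e., a matching in $\cM$ that satisfies the $(L,U)$-group fairness constraints. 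I would either invoke this fact directly from \cite{pln_2022} or give a short self-contained proof via bipartite $b$-matching.

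Given integrality, I would decompose $M$ as follows. Initialize $M^{(0)} \coloneqq M$ and $\beta_0 \coloneqq 1$. In round $t \geq 1$, compute a vertex $M_t$ of the minimal face $F_t \subseteq \mathscr{M}$ containing $M^{(t-1)}$ by solving an LP over $F_t$ and extracting a basic feasible solution; integrality of $\mathscr{M}$ ensures $M_t \in \cM$ and that $M_t$ is $(L,U)$-group fair. Then choose $\alpha_t$ to be the largest $\lambda \in (0, \beta_{t-1}]$ such that, setting $\beta_t \coloneqq \beta_{t-1} - \lambda$, the normalized residual $M^{(t)} \coloneqq \beta_t^{-1}(M^{(t-1)} - \lambda M_t)$ remains in $\mathscr{M}$; equivalently, $\alpha_t$ is determined by the first previously-slack inequality of $\mathscr{M}$ that turns tight on the residual. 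Iterate until $\beta_t = 0$; then $M = \sum_t \alpha_t M_t$ and $\sum_t \alpha_t = 1$ by telescoping. For the iteration bound, each round forces at least one new inequality of $\mathscr{M}$ to become tight at $M^{(t)}$ while preserving all previously tight ones, so the minimal face strictly shrinks. Since $\mathscr{M}$ has $O(mq + pq) = O((m+p)n)$ facet-defining inequalities (using $q \leq n$), the loop terminates in $O((m+p)n)$ rounds, matching the claimed bound, and each round is a single LP solve over $\poly(m,n,p)$ variables and constraints.

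The main obstacle is the vertex-extraction step: one must produce a vertex of the face $F_t$, not merely any point in it. The standard remedy is to pivot from $M^{(t-1)}$ to a basic feasible solution of the LP defining $F_t$ (or to solve the LP to a basic optimum), and correctness hinges on the integrality of $\mathscr{M}$, which guarantees the extracted basic solution is a 0/1 matching. A secondary subtlety is ensuring that the residual $M^{(t)}$ actually lies in $\mathscr{M}$ (not merely in its affine hull); this follows because $M^{(t-1)}$ sits in the relative interior of $F_t$ by the minimality of $F_t$, so the choice of $\alpha_t$ as the largest feasible step keeps $M^{(t)}$ inside $\mathscr{M}$. The paper's convention of padding with a dummy block $B_{q+1}$ with vacuous group-fairness bounds converts the inequality $\sum_j X_{ij} \leq 1$ into an equality, which further streamlines the residual analysis.
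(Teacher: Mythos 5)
Your proposal follows essentially the same route as the proof in \cref{sec:efficiency_guarantee}: both rest on the integrality of the group-fair matching polytope (which the paper imports from \cite[Lemma 2.4]{pln_2022}, and which your total-unimodularity argument via the laminar/disjoint group structure also establishes), both observe that integrality makes every vertex a matching in $\cM$ satisfying the $(L,U)$-group fairness constraints, and both then write $M$ as a convex combination of vertices via Carath\'eodory. The only genuine difference is the last step: the paper black-boxes the algorithmic Carath\'eodory step by citing \cite[Theorem 6.5.11]{grotschel_lovasz_schrijver}, whereas you make it explicit with the standard iterative scheme (extract a vertex of the minimal face containing the current point, take the maximal step, recurse) and bound the number of terms by the number of inequality constraints, $O((m+p)n)$; your justifications that the step size is positive (because the vertex is taken from the minimal face) and that the set of tight constraints strictly grows are the right ones, so this is a sound, more self-contained alternative. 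One bookkeeping slip should be fixed: the update $M^{(t)}\coloneqq\beta_t^{-1}(M^{(t-1)}-\lambda M_t)$ with $\beta_t=\beta_{t-1}-\lambda$ is inconsistent with the invariant $M^{(t-1)}\in\mathscr{M}$ once $\beta_{t-1}<1$, since comparing column sums gives $\sum_i M^{(t)}_{ij}=\abs{B_j}\,(1-\lambda)/(\beta_{t-1}-\lambda)$, which equals $\abs{B_j}$ only when $\beta_{t-1}=1$; as written, no feasible $\lambda$ exists after the first round. The repair is the usual multiplicative accounting: set $M^{(t)}\coloneqq(1-\lambda_t)^{-1}(M^{(t-1)}-\lambda_t M_t)$ with $\lambda_t\in(0,1]$ the maximal feasible step, record $\alpha_t=\beta_{t-1}\lambda_t$ and $\beta_t=\beta_{t-1}(1-\lambda_t)$, and stop when $\lambda_t=1$; with this correction your argument is complete and matches the guarantees of \cref{lem:decomposition}.
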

            \noindent Thus, \cref{lem:decomposition} shows that Step 4 of \cref{alg:main} can be implemented in polynomial time if $\hM$ satisfies the requirements in \cref{lem:decomposition}.
            In \cref{sec:complete_proof_algo_main}, we show that $\hM$ does satisfy the requirements in \cref{lem:decomposition}.

            The key property used in the proof of \cref{lem:decomposition} is that all vertices of the following polytope are integral, i.e., the polytope is integral.
            \begin{align*}
                \mathscr{M}_{\rm GF}\coloneqq \inbrace{M\in [0,1]^{m\times q} \ \bigg| \begin{array}{cc}
                     \text{$M$ satisfies \cref{eq:group_constraints_matchings},}\\
                     \forall i\in [m],\quad \sum_{j} M_{ij} = 1,\\
                     \white{.}\ \ \forall j\in [q],\quad\  \sum_{i} M_{ij} = \abs{B_j}
                \end{array}
                }
            \end{align*}
            (Note that the constraint $\sum_{j} M_{ij} = 1$ has an equality and not an inequality like in the definition of $\cM$.)

            \paragraph{Each vertex of $\mathscr{M}_{\rm GF}$ is a $(L,U)$-group fair matching.}
            Integrality of $\mathscr{M}_{\rm GF}$ implies that each vertex of $\mathscr{M}_{\rm GF}$ is a matching satisfying $(L,U)$-group fairness constraints.
            To see this, fix any vertex $V$ of $\mathscr{M}_{\rm GF}$.
            Since $\mathscr{M}_{\rm GF}$ is integral, $V\in \inbrace{0,1}^{m\times q}$.
            This, combined with the fact that $V$ satisfies the constraints
            \begin{align*}
                \forall i\in [m],\quad \sum\nolimits_{j} M_{ij} = 1
                    \quad\text{and}\quad
                \forall j\in [q],\quad\  \sum\nolimits_{i} M_{ij} = \abs{B_j}
            \end{align*}
            implies that $V\in \cM$.
            Since $V$ is in $\cM$ and $V$ satisfies \cref{eq:group_constraints_matchings}, it follows that $V$ is a matching that satisfies $(L,U)$-group fairness constraints.
            Thus, each vertex of $\mathscr{M}_{\rm GF}$ is a $(L,U)$-group fair matching.

            \paragraph{$(M_t,\alpha_t)_t$ can be efficiently computed.}
                The constraints on $M$ in \cref{lem:decomposition} ensure that $M\in \mathscr{M}_{\rm GF}$.
                Hence, Carathéodory's theorem \cite{grotschel_lovasz_schrijver} guarantees that $M$ can be expressed as in \cref{eq:decomposition} where each of $M_1,M_2,\dots$ is a vertex of $\mathscr{M}_{\rm GF}$.
                Hence, by our previous argument, each of $M_1, M_2, \dots$ is a matching satisfying $(L, U)$-group fairness constraint.
                $M_1, M_2, \dots$ and $\alpha_1,\alpha_2,\dots$ can be computed efficiently using, e.g., the algorithm in \cite[Theorem 6.5.11]{grotschel_lovasz_schrijver}.

            \paragraph{$\mathscr{M}_{\rm GF}$ is integral.}
                It remains to prove the claim that $\mathscr{M}_{\rm GF}$ is integral.
                To prove this, we use \cite[Lemma 2.4]{pln_2022} (also implicit in \citet{meghana2019classified}).
                We use $\circ$ in the superscript to denote the variables in \cite{pln_2022}.
                \begin{lemma}[\protect{Rephrasing \cite[Lemma 2.4]{pln_2022}}]
                    For any $m,p,q\geq 1$, any disjoint groups $G^\circ_1,G^\circ_2,\dots,G^\circ_p\subseteq [m]$, and any integer assignments of the following values
                    \begin{itemize}
                        \item $L^\circ_{i S}$ and $U^\circ_{iS}$ for each $1\leq i\leq m$ and $S\subseteq [q]$,
                        \item $L^\circ_{j}$ and $U^\circ_{j}$ for each $1\leq j\leq q$,
                        \item $L^\circ_{j\ell}$ and $U^\circ_{j\ell}$ for each $1\leq j\leq q$ and $1\leq \ell\leq p$,
                    \end{itemize}
                    the following polytope is integral.
                    \begin{align*}
                        \cP \coloneqq \inbrace{P\in [0,1]^{m\times q} \ \mid \begin{array}{cc}
                             \forall i\in [m],\forall S\subseteq [q], & L^\circ_{iS}\leq \sum_{j\in S} P_{ij}\leq U^\circ_{iS},\\
                             \forall j\in [q], & L^\circ_{j}\leq \sum_{i\in [m]} P_{ij}\leq U^\circ_{j},\\
                             \forall j\in [q],\forall \ell\subseteq [p], & L^\circ_{j\ell}\leq \sum_{i\in G^\circ_\ell} P_{ij}\leq U^\circ_{j\ell}
                        \end{array}
                        }.
                    \end{align*}
                \end{lemma}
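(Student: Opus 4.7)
The plan is to show that every vertex of $\cP$ is a $0/1$ matrix. The strategy has two parts: first, an uncrossing argument reduces the exponentially many row-subset constraints (at any given vertex) to a laminar family; second, a network-flow / totally-unimodular-matrix argument handles the remaining bi-laminar constraint system.

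For the first part, fix a putative vertex $P^*$ of $\cP$ and, for each row $i$, let $\mathcal{L}_i$ denote the subsets $S\subseteq[q]$ whose row-$i$ constraint is tight at $P^*$ (at either the lower or the upper bound). Using the identity $\mathbf{1}_{S_1}+\mathbf{1}_{S_2}=\mathbf{1}_{S_1\cap S_2}+\mathbf{1}_{S_1\cup S_2}$ together with the row-$i$ inequalities at $S_1\cap S_2$ and $S_1\cup S_2$, a standard uncrossing argument shows that among upper-tight (respectively lower-tight) subsets, $\mathcal{L}_i$ is closed under intersection and union. Consequently the linear span of the tight row-$i$ constraints coincides with that of a laminar subfamily $\mathcal{L}^\star_i\subseteq\mathcal{L}_i$ of size $O(q)$. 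Within a neighborhood of $P^*$, we may therefore replace the row constraints by this polynomially-sized laminar system while preserving $P^*$ as a vertex.

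For the second part, I would model the reduced constraint system as a bi-laminar constraint matrix: within each row $i$, the $j$-indexed constraints form the laminar family $\mathcal{L}^\star_i$ on $[q]$; within each column $j$, the $i$-indexed constraints form a two-level laminar family on $[m]$ consisting of the full column together with its restrictions to the disjoint groups $G^\circ_1,\dots,G^\circ_p$. The disjointness of the $G^\circ_\ell$'s is essential — it makes the column-side structure genuinely laminar and prevents overlapping group-sum constraints. Such a bi-laminar integer system admits a natural network-flow representation (source $\to$ per-row laminar tree gadget $\to$ bipartite $(i,j)$-edges carrying flow $P_{ij}$ $\to$ per-column group and total gadgets $\to$ sink) in which every arc has integer lower and upper capacities. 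The node-arc incidence matrix of this network is totally unimodular, so every basic feasible flow is integer-valued, and combined with the box constraint $P_{ij}\in[0,1]$ this forces $P^*_{ij}\in\{0,1\}$.

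The main obstacle is the uncrossing step: one must verify that the argument applies uniformly to both upper-tight and lower-tight row constraints (which ordinarily require submodular-type and supermodular-type closure, respectively). The crucial observation is that, at a vertex, tight constraints are equalities, so the relevant uncrossing inequalities collapse to equalities simultaneously for the two bound types; hence the same laminar replacement works in both directions. A secondary subtlety is ensuring that the laminar row families from different rows combine cleanly with the column-side flow gadget into a single integral network — this follows automatically because row-gadgets and column-gadgets share variables only through the bipartite $(i,j)$-edges and interact only via flow conservation at those edges. Once these points are checked, network-flow integrality gives $P^*\in\{0,1\}^{m\times q}$, completing the proof.
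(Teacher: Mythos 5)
Your proof founders at the uncrossing step, and the obstacle is not a technicality. With arbitrary integer bounds $L^\circ_{iS},U^\circ_{iS}$ prescribed on \emph{every} subset $S\subseteq[q]$, tightness of the row-$i$ constraints for $S_1$ and $S_2$ at a vertex does not propagate to $S_1\cap S_2$ and $S_1\cup S_2$: the standard uncrossing argument needs the upper bounds to be submodular (and the lower bounds supermodular) so that $U^\circ_{i,S_1\cap S_2}+U^\circ_{i,S_1\cup S_2}\leq U^\circ_{i,S_1}+U^\circ_{i,S_2}$ squeezes the intersected and unioned constraints into tightness, and no such inequality is available when the bounds are arbitrary per-subset integers. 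Your proposed fix (``at a vertex, tight constraints are equalities, so the uncrossing inequalities collapse to equalities'') is circular --- the constraints at $S_1\cap S_2$ and $S_1\cup S_2$ simply need not be tight at all. In fact, the statement as literally written is false, so no proof can close this gap: take $m=1$, $q=3$, $p=1$, $G^\circ_1=\inbrace{1}$, set $L^\circ_{1S}=U^\circ_{1S}=1$ for $S\in\inbrace{\inbrace{1,2},\inbrace{2,3},\inbrace{1,3}}$, make all other subset bounds vacuous (say $0$ and $3$) and all column and group bounds vacuous ($0$ and $1$). Then $\cP$ consists of the single point $P=(1/2,1/2,1/2)$, which is a fractional vertex, so $\cP$ is not integral.

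For comparison: the paper does not prove this lemma at all --- it imports it as Lemma 2.4 of \cite{pln_2022} (noting it is also implicit in \cite{meghana2019classified}), and it only ever instantiates it with trivially structured row constraints, namely the single non-vacuous subset $S=[q]$ (forcing $\sum_j P_{ij}=1$) together with per-column totals and per-column bounds on disjoint groups. In that regime --- and, more generally, when each row's constrained subsets form a laminar family, which is how the source result should be read --- the second half of your plan is essentially the right argument: a per-row laminar family combined with the column-side structure (disjoint groups nested inside each column total) yields a network whose constraint matrix is totally unimodular, giving integral vertices; this is also in the spirit of the paper's remark that one can adapt the integrality proof for the matching polytope. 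So your flow-gadget step is sound for the statement that is actually needed, but the reduction from arbitrary subset constraints to a laminar family cannot be carried out, and the general claim you set out to prove is not true.
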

                \noindent The integrality of $\mathscr{M}_{\rm GF}$ follows as $P$ reduces to $\mathscr{M}_{\rm GF}$ for the following setting of parameters.
                \begin{itemize}
                    \item For each $1\leq \ell\leq p$, $G^\circ_\ell=G_\ell$,
                    \item For each $1\leq i\leq m$ and $S\subseteq [q]$, $L^\circ_{i S}=0$,
                    \item For each $1\leq i\leq m$ and $S\subseteq [q]$, where $S\neq [q]$, $U^\circ_{iS}=m$,
                    \item For each $1\leq i\leq m$, $L^\circ_{i[q]}=1$ and $U^\circ_{i[q]}=1$,
                    \item For each $1\leq j\leq q$, $L^\circ_{j}=U^\circ_{j}=\abs{B_j}$,
                    \item For each $1\leq j\leq q$ and $1\leq \ell\leq p$, $L^\circ_{j\ell}=L_{j\ell}$ and $U^\circ_{j\ell}=U_{j\ell}$.
                \end{itemize}
                To be precise, the equivalence follows after dropping the vacuous constraints.
                The vacuous constraints correspond to the parameters: $L^\circ_{iS}$ (for all $i$ and $S\neq [q]$.

                \paragraph{Additional discussion.}
                    There are several other ways of proving the integrality of $\mathscr{M}_{\rm GF}$.
                    For instance, one method is to modify the standard proof that the perfect matching polytope is integral (see \cite[Theorem 18.1]{schrijver2003combinatorial}) to account for the group fairness constraints.

        \subsubsection{Completing the Proof of Theorem~\ref{thm:algo_main}}\label{sec:complete_proof_algo_main}
            \begin{proof}[Proof of \cref{thm:algo_main}]
                Since $\hD$ is feasible for \prog{prog:mod_of_ashudeep}, it satisfies \cref{eq:prog_from_ashudeep_eq_1,eq:prog_from_ashudeep_eq_2}.
                Hence, \cref{lem:equivalence_of_gf,lem:equivalence_of_if} imply that $\hM$ satisfies $(L,U)$-group fairness constraints and $C$-individual fairness constraints.
                The former shows that \cref{lem:decomposition} is applicable for $\hM$.

                \paragraph{Each output ranking is group fair.}
                    \cref{lem:decomposition} implies that for each $t$, $M_t\in \cM$ computed in {Step 4} of \cref{alg:main} satisfies $(L,U)$-group fairness constraints.
                    Since $R_t\coloneqq f(M_t)$, by definition of $f$, $g(R_t)=M_t$ and, hence, because \cref{lem:equivalence_of_gf} and the fact that $M_t$ satisfies $(L,U)$-group fairness constraints, it follows that $R_t$ satisfies $(L,U)$-group fairness constraints (for each $t$).
                    This proves that each ranking output by \cref{alg:main} satisfies $(L, U)$-group fairness constraints.

                \paragraph{Output rankings are individually fair.}
                    Let $\cD$ be the distribution mentioned in \cref{thm:algo_main}.
                    By design in \cref{alg:main}, $\Pr_{R'\sim \cD}[R'=R_t]=\alpha_t$ for each $t$.
                    Let $D$ be the marginal of $\cD$.
                    It follows that $D = \sum_t \alpha_t R_t$.
                    Since $g(R_t)=M_t$, $g$ is linear, and $\hM=\sum_t \alpha_t M_t$, it follows that $g(D)=\hM$.
                    Hence, from \cref{lem:equivalence_of_if} and the earlier observation that $\hM$ is $(C,A)$-individually fair, it follows that $D$ is $(C,A)$-individually fair.
                    As $D$ is a marginal of $\cD$, it also follows that $\cD$ is $(C,A)$-individually fair.

                \paragraph{Expected utility of output rankings is $\alpha$-approximation of the optimal.}
                    Follows from \cref{lem:utility_guarantee_1,lem:utility_guarantee_2}.

                \paragraph{\cref{alg:main} terminates in polynomial time.}
                    Step 1 of \cref{alg:main} can be finished in polynomial time using any standard polynomial time linear programming solver.
                    Step 2 only requires computing $g(\hD)$, which can be done in $O(mn)$ time.
                    Step 3 takes $O(1)$ time.
                    \cref{lem:decomposition} proves that Step 4 can be implemented in polynomial time.
                    Step 5 has $T=O(m^2n^2)$ substeps, each of which requires computing $f(M)$ for some $M\in\cM$. $f(M)$ can be computed in $O(mn\log{n})$ time. Hence, Step 5 takes $O(m^3n^3\log{n})$ time.
            \end{proof}

    \subsubsection{Proof of \cref{eq:strongerInequality}}\label{sec:verify}
        In this section, we prove \cref{eq:strongerInequality}.
        Namely, let $\Delta$ be defined as follows
        \[
            \Delta \coloneqq \frac{\max_{i} \rho_i}{\min_{i} \rho_i} - 1.
            \yesnum\label{def:delta_app}
        \]
        We prove that the ranking $R$ output by \cref{alg:main} has an expected utility at least $\alpha$ times the optimal where $\alpha$ satisfies the following lower bound
        \begin{align*}
            \alpha \geq
            \frac{
                1 + \frac{v_k \Delta }{v_1+v_2+\dots+v_k}
            }{
                1 + \frac{  v_1\Delta}{  v_1+v_2+\dots+v_k}
            }
            \yesnum\label{eq:toProve}
        \end{align*}
        Recall that \cref{thm:algo_main} proves the following lower bound on $\alpha$.
        \begin{align*}
            \alpha \geq \frac{v_1 + v_2 + \dots + v_k}{kv_1}.
        \end{align*}
        To prove \cref{eq:toProve}, we modify the proof of  \cref{thm:algo_main}.
        Specifically, we prove the following variant of \cref{lem:utility_guarantee_2}.
        \begin{lemma}\label{lem:utility_guarantee_2_variant}
            If $v\in \R^n$ is such that, {for all $\Delta\geq 0$, $\frac{v_{s+\Delta}}{v_s}$ is a non-decreasing function of $1\leq s \leq n-\Delta$,} then
            \begin{align*}
                \rho^\top D v \geq
                \frac{
                    1 + \frac{v_k \Delta }{v_1+v_2+\dots+v_k}
                }{
                    1 + \frac{  v_1\Delta}{  v_1+v_2+\dots+v_k}
                }
                \cdot \rho^\top \hD v.
            \end{align*}
        \end{lemma}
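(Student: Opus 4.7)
The plan is to mimic the proof of \cref{lem:utility_guarantee_2}: Step~1 of that proof (the block-wise marginal equality $\sum_{s\in B_j} D_{is}=\sum_{s\in B_j}\hat D_{is}$, established via $g(D)=g(\hat D)=\hat M$) carries over verbatim, and the only new work is a sharpened Step~2 that exploits the range bound $\rho_i\le(1+\Delta)\rho_{\min}$, where $\rho_{\min}:=\min_i\rho_i$.

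The central trick is to split each utility as $\rho_i = \rho_{\min} + (\rho_i-\rho_{\min})$, thereby writing
\[
  \rho^\top D v \;=\; \rho_{\min}V+\widetilde W,\qquad \rho^\top \hat D v \;=\; \rho_{\min}V+\widetilde H,
\]
where $V=\sum_s v_s$ and $\widetilde W, \widetilde H$ are the residual sums obtained by replacing $\rho_i$ with $\rho_i-\rho_{\min}\in[0,\Delta\rho_{\min}]$. Since every column of both $D$ and $\hat D$ sums to $1$, the baselines $\rho_{\min}V$ coincide, so the interesting comparison reduces to that between $\widetilde W$ and $\widetilde H$.

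The same two inequalities used in the original proof go through for the residuals, because $\rho_i-\rho_{\min}\ge 0$ preserves every sign requirement (the block-wise Chebyshev lower bound and the pointwise bound $v_s\le v_{s(j)}$ on $B_j$). Invoking Step~1 to identify block-wise marginals, this gives
\[
  \widetilde W \;\ge\; \sum_{j=1}^{q}\frac{T_j}{\abs{B_j}}\,Q_j, \qquad
  \widetilde H \;\le\; \sum_{j=1}^{q} v_{s(j)}\,Q_j,
\]
where $T_j=\sum_{s\in B_j}v_s$ and $Q_j=\sum_i(\rho_i-\rho_{\min})\sum_{s\in B_j}\hat D_{is}$. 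Under the monotonicity condition \eqref{eq:assumption_on_v} on $v$ and size-$k$ blocks, block~$1$ is the bottleneck so that $\widetilde W\ge (S/(kv_1))\,\widetilde H$ where $S=v_1+\cdots+v_k$.

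To finish, I observe that $(\rho_{\min}V+\widetilde W)/(\rho_{\min}V+\widetilde H)$ is a M\"obius-type function that is monotone decreasing in $\widetilde H$, and substitute the worst-case value admitted by the hypothesis $\rho_i-\rho_{\min}\le \Delta\rho_{\min}$. Simplifying the resulting fraction yields the claimed expression $(1+v_k\Delta/S)/(1+v_1\Delta/S)$. I expect the delicate step to be this final algebraic simplification: producing $\Delta$ (rather than $k\Delta$ or $\Delta V$) in the coefficient of $v_k$ and $v_1$ requires taking the worst block ratio \emph{inside} the M\"obius fraction rather than aggregating $\widetilde W,\widetilde H$ globally, and making explicit the cancellation that brings $S$ into the denominator of the $\Delta$-term.
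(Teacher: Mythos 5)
Your Step~1 and the baseline-splitting idea coincide with the paper's own proof: the paper subtracts $\rho_m\mathbf{1}$ (after sorting, $\rho_m=\rho_{\min}$) and likewise uses that every column of $D$ and $\hD$ sums to $1$ so the baseline terms agree, and your intermediate bounds are sound -- the within-block Chebyshev inequality does survive the passage to residuals (subtracting a constant preserves the within-block order), and combined with Step~1 it gives $\widetilde W\ge\sum_j (T_j/k)\,Q_j$, $\widetilde H\le\sum_j v_{s(j)}Q_j$, hence $\widetilde W\ge\frac{S}{kv_1}\widetilde H$ under \cref{eq:assumption_on_v}. Where you genuinely diverge from the paper is that its proof of this variant never reuses the Chebyshev step at all: it replaces each coefficient $\rho_i/\rho_m-1$ by $\Delta$ via the monotonicity of $\frac{1+ax}{1+bx}$, and then needs only column-stochasticity and $v_s\le v_{s(j)}$, terminating at $\frac{1+\Delta}{1+kv_1\Delta/S}$ in \cref{eq:verifying_upperbound_new}, i.e., exactly the expression in \cref{eq:strongerInequality}.

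The gap is the final step that you flag as delicate and then assert: neither completion of your outline simplifies to the displayed target $\bigl(1+\tfrac{v_k\Delta}{S}\bigr)/\bigl(1+\tfrac{v_1\Delta}{S}\bigr)$. Aggregating globally (using $\widetilde H\le\Delta\rho_{\min}V$) yields $\bigl(1+\tfrac{S\Delta}{kv_1}\bigr)/(1+\Delta)$, while carrying the M\"obius step out per block (bounding each $Q_j\le\Delta\rho_{\min}k$ and then applying a mediant inequality over blocks) yields $\bigl(1+\Delta\bigr)/\bigl(1+\tfrac{kv_1\Delta}{S}\bigr)$. Both can be strictly smaller than the displayed expression even for discounts satisfying \cref{eq:assumption_on_v}: with geometric discounts $v_i=0.6^{i-1}$, $k=3$, $\Delta=0.1$, one has $S=1.96$ and the three numbers are roughly $0.9685$ (global) and $0.9540$ (per block) versus $0.9689$ for the displayed bound, so no worst-case substitution of the kind you describe can produce it. In fairness, the paper has the same mismatch -- its written proof also stops at $\frac{1+\Delta}{1+kv_1\Delta/S}$ rather than the lemma's displayed expression -- so your per-block completion reproduces what the paper actually establishes (via a slightly different, Chebyshev-based route, and your global variant is in fact marginally stronger); but your claim that the final algebra "yields the claimed expression" is unsupported, and that expression does not follow from the inequalities your outline (or the paper's argument) provides.
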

        \noindent This variant combined with \cref{lem:utility_guarantee_1} implies \cref{eq:toProve}.
        \begin{proof}[Proof of \cref{lem:utility_guarantee_2_variant}]
            Like the proof of \cref{lem:utility_guarantee_2}, the proof of \cref{lem:utility_guarantee_2_variant} is divided into two steps, in the first step we show that $D$ and $\hD$ have identical marginals.
            In the second step, we use this to lower-bound the ratio of $D$'s expected utility to $\hD$ expected utility.

            \paragraph{Step 1 (``Marginals'' of $D$ and $\hD$ are identical):}
            The first step is identical to the first step in the proof of \cref{lem:utility_guarantee_2}.
            It concludes with the following equalities: for each $1\leq i\leq m$ and $1\leq j\leq q$, it holds that
            \begin{align*}
                \sum\nolimits_{t\in B_j} D_{it}=\sum\nolimits_{t\in B_j} \hD_{it}.
                \yesnum\label{eq:equality_of_block_wise_sums_2}
           \end{align*}

                \paragraph{Step 2 (Completing the proof):}
                    Without loss of generality assume that
                    \[
                        \rho_1\geq \rho_2\geq \cdots \geq \rho_m.
                    \]
                    For each $1\leq j\leq q$, let $s(j)$ be the smallest position in $B_j$, i.e., $B_j=\inbrace{s(j), s(j)+1,\dots, s(j+1)-1}$.
                    \begin{align*}
                        \frac{\rho^\top D v}{\rho^\top \hD v}\
                            &= \frac{
                                 \rho_m 1^\top D v
                                 +
                                 \inparen{\rho - \rho_m 1}^\top D v
                            }{
                                \rho_m 1^\top \hD v
                                 +
                                 \inparen{\rho - \rho_m 1}^\top \hD v
                            }\\
                            &= \frac{
                                 \rho_m  1^\top v
                                 +
                                 \inparen{\rho - \rho_m 1}^\top D v
                            }{
                                \rho_m  1^\top v
                                 +
                                 \inparen{\rho - \rho_m 1}^\top \hD v
                            }
                            \tag{Using that for all $s$, $\sum\nolimits_{i=1}^m D_{is}=\sum\nolimits_{i=1}^m \hD_{is}=1$}\\
                            &= \frac{
                                 1^\top v
                                 +
                                 \sum\nolimits_{i=1}^m
                                    \inparen{\frac{\rho_i}{\rho_m} - 1}
                                        \sum\nolimits_{s=1}^n  v_s  D_{is}
                            }{
                                1^\top v
                                 +
                                 \sum\nolimits_{i=1}^m
                                    \inparen{\frac{\rho_i}{\rho_m} - 1}
                                        \sum\nolimits_{s=1}^n  v_s  \hD_{is}
                            }\\
                            &\geq \frac{
                                 1^\top v
                                 +
                                 \sum\nolimits_{i=1}^m
                                    \inparen{\frac{\rho_i}{\rho_m} - 1} \cdot {
                                        \sum\nolimits_{s=1}^n  v_s  D_{is}
                                    }
                            }{
                                1^\top v
                                 +
                                 \sum\nolimits_{i=1}^m
                                    \inparen{\frac{\rho_i}{\rho_m} - 1} \cdot {
                                        \max\inbrace{
                                            \sum\nolimits_{s=1}^n  v_s  D_{is},
                                            \sum\nolimits_{s=1}^n  v_s  \hD_{is}
                                        }
                                    }
                            }\\
                            &\geq \frac{
                                 1^\top v
                                 +
                                 \Delta\cdot \sum\nolimits_{s=1}^n  v_s  \sum\nolimits_{i=1}^m  D_{is}
                                }
                            {
                                1^\top v
                                 +
                                 \Delta\cdot
                                    {
                                        \sum\nolimits_{i=1}^m   \max\inbrace{
                                            \sum\nolimits_{s=1}^n  v_s  D_{is},
                                            \sum\nolimits_{s=1}^n  v_s  \hD_{is}
                                        }
                                    }
                            }
                            \tag{Using that for all $b > a\geq 0$, $\frac{1+ax}{1+bx}$ is a decreasing function of $x$}\\
                            &\geq \frac{
                                 1^\top v
                                 +
                                 \Delta\cdot 1^\top v
                                }
                            {
                                1^\top v
                                 +
                                 \Delta\cdot
                                    {
                                        \sum\nolimits_{i=1}^m   \max\inbrace{
                                            \sum\nolimits_{j=1}^q  v_{s(j)} \sum_{s\in B_j} D_{is},
                                            \sum\nolimits_{j=1}^q  v_{s(j)} \sum_{s\in B_j}  \hD_{is}
                                        }
                                    }
                            }
                            \tag{Using that $v_1\geq v_2\geq \dots\geq v_n$ and $s(j)$ is the smallest value in $B_{j}$ and $\sum_{i=1}^m D_{is}=1$ for all $s$}
                    \end{align*}
                    \begin{align*}
                            &\geq \frac{
                                 1^\top v
                                 +
                                 \Delta\cdot 1^\top v
                                }
                            {
                                1^\top v
                                 +
                                 \Delta\cdot
                                    {
                                        \sum\nolimits_{i=1}^m \max\inbrace{
                                            \sum\nolimits_{j=1}^q  v_{s(j)}\Pr_D\insquare{i\in B_j}  ,
                                            \sum\nolimits_{j=1}^q  v_{s(j)}\Pr_{\hD}\insquare{i\in B_j}
                                        }
                                    }
                            }
                            \tag{Using that for all $j$, $\sum_{s\in B_j} D_{is}=\Pr_D\insquare{i\in B_j}$ and $\sum_{s\in B_j} \hD_{is}=\Pr_{\hD}\insquare{i\in B_j}$}\\
                            &\geq \frac{
                                 1^\top v\inparen{1+\Delta}
                                }
                            {
                                1^\top v
                                 +
                                 \Delta\cdot
                                        \sum\nolimits_{j=1}^q \abs{B_j} v_{s(j)}
                            }
                            \tag{Using that for all $j$, $\sum_{i=1}^m\Pr_{D}\insquare{i\in B_j}=\sum_{i=1}^m\Pr_{\hD}\insquare{i\in B_j}=\abs{B_j}$}\\
                            &= \frac{
                                 1 + \Delta
                                }
                            {
                                1 + \Delta \cdot
                                        \frac{
                                            \sum\nolimits_{j=1}^q  \abs{B_j} v_{s(j)}
                                        }{
                                            1^\top v
                                        }
                            }\\
                            &\geq \frac{
                                 1 + \Delta
                                }
                            {
                                1 + \Delta \cdot
                                        \max_{1\leq j\leq q}
                                        \frac{
                                            \abs{B_j} v_{s(j)}
                                        }{
                                            \sum_{s\in B_j} v_s
                                        }
                            }\\
                            &\geq \frac{
                                 1 + \Delta
                                }
                            {
                                1 + \frac{
                                           \Delta  k v_{1}
                                        }{
                                            v_1+v_2+\dots+v_k
                                        }
                            }.
                            \yesnum\label{eq:verifying_upperbound_new}
                    \end{align*}
                    Where the last inequality holds due to (1) {the assumption that, for all $\Delta\geq 0$, $\frac{v_{s+\Delta}}{v_s}$ is a non-decreasing function of $s$} and (2) that $B_1$, $B_2,\cdots,B_q$ are blocks of positions of size $k$, and (2) all positions in $B_j$ appear before all positions in $B_{j+1}$ (for each $1\leq j < q$). %
        \end{proof}

    \subsubsection{Proof That \cref{eq:strongerInequality} Is Strictly Larger Than the Bound in \cref{thm:algo_main}}\label{sec:verify2}
        In this section, we prove that for any $\Delta\geq 0$, \cref{eq:strongerInequality} is strictly larger than the RHS in \cref{eq:lowerbound_on_alpha}, i.e., for any $\Delta\geq 0$
        \[
            \inparen{ 1 + \Delta }\inparen{  1+\frac{ kv_1\Delta}{  v_1+v_2+\dots+v_k}}^{-1} \geq \frac{  v_1+v_2+\dots+v_k}{kv_1}
        \]
        This holds because of the following
        \begin{align*}
            \inparen{ 1+ \Delta }\inparen{  1+\frac{ kv_1\Delta}{  \sum_{i=1}^k v_i}}^{-1}
            &= \inparen{\sum_{i=1}^k v_i }\cdot \frac{1 + \Delta}{\sum_{i=1}^k v_i + k v_1\Delta }\\
            &= \frac{\sum_{i=1}^k v_i }{v_1}\cdot \frac{1 + \Delta}{\sum_{i=1}^k \frac{v_i}{v_1}+k\Delta}\\
            &\geq \frac{\sum_{i=1}^k v_i }{v_1}\cdot \frac{1+\Delta}{k\inparen{1+\Delta}}\tag{Using that $v_1\geq v_2\geq \dots\geq v_n$}\\
            &= \frac{\sum_{i=1}^k v_i }{k v_1}.\yesnum\label{eq:bound_tmp}
        \end{align*}

    \begin{remark}
        {The following example shows that the bound on utility in \cref{eq:lowerbound_on_alpha} is tight.
        Suppose $n=m=4$, $\rho=(1,1,0,0)^\top$, $B_1=\inbrace{1,2}$, and $B_{2}=\inbrace{3,4}$.
        Suppose $v=(1,0,0,0)^\top$ and, hence, the RHS in \cref{eq:lowerbound_on_alpha} is $\frac{1}{2}$.
        Suppose the marginal $D^\star$ of $\cD^\star$ is as follows
        \[D^\star = \begin{bmatrix}
            0.5 & 0.5 & 0 & 0\\
            0 & 0 & 0.5 & 0.5\\
            0.5 & 0.5 & 0 & 0\\
            0 & 0 & 0.5 & 0.5
        \end{bmatrix}.\]
        In this example, \cref{alg:main} can output a ranking sampled from the uniform distribution over $R_1$ and $R_2$ where
        \[
            R_1 = \begin{bmatrix}
            1 & 0 & 0 & 0\\
            0 & 1 & 0 & 0\\
            0 & 0 & 1 & 0\\
            0 & 0 & 0 & 1
            \end{bmatrix}
            \quad\text{and}\quad
            R_2 = \begin{bmatrix}
            0 & 0 & 1 & 0\\
            0 & 0 & 0 & 1\\
            1 & 0 & 0 & 0\\
            0 & 1 & 0 & 0
            \end{bmatrix}.
        \]
        Here, the optimal utility is $1$ (as $\rho^\top D^\star v = 1$) and the expected utility achieved by \cref{alg:main} is $\frac{1}{2}$ (as  $\frac{1}{2}\rho^\top (R_1+R_2) v=\frac{1}{2}$).
        Hence, \cref{alg:main} achieves $\inparen{\frac{\sum_{i=1}^k v_i }{v_1}\cdot \frac{1+\Delta}{k\inparen{1+\Delta}}}$-fraction of the optimal utility.}
    \end{remark}

    \subsection{Proof of \cref{thm:approxStochasUtil}}\label{sec:proofof:thm:approxStochasUtil}
        In this section, we prove \cref{thm:approxStochasUtil}.
        To ease readability, we restate \cref{thm:approxStochasUtil} below.

        \thmApproxStochasUtil*

        \noindent The proof of \cref{thm:approxStochasUtil} relies on the following standard lemmas.
        \begin{lemma}\label{lem:unifConc}
            If $m$ values $\inbrace{x_1,x_2,\dots,x_m}$ are drawn i.i.d. from the uniform distribution on $[0,S]$ and sorted to get $x_{(1)}\geq x_{(2)}\geq \dots \geq x_{(m)}$, then
            with probability at least $1-4m^{-\frac14}$ the following holds:
            \[
                \text{for all $1\leq i\leq m$,}\quad
                \abs{
                    x_{(i)} - S\inparen{ 1 - \frac{i}{m}}
                }
                \leq \frac{2S}{m^{1/4}}.
                \yesnum\label{def:eventETmp}
            \]
        \end{lemma}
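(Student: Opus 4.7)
The plan is to apply the Dvoretzky--Kiefer--Wolfowitz (DKW) inequality to the empirical CDF of the samples, which gives a \emph{uniform} deviation bound, and then read off the order-statistic bound from this uniform bound. Let $F_m(x) \coloneqq \frac{1}{m}\sabs{\inbrace{i \colon x_i \leq x}}$ be the empirical CDF of the samples and let $F(x) = x/S$ for $x \in [0,S]$ be the CDF of the uniform distribution on $[0,S]$. DKW (with Massart's sharp constant) states that for every $t > 0$,
\begin{align*}
    \Pr\insquare{\sup_{x \in [0,S]} \sabs{F_m(x) - F(x)} \geq t} \leq 2 \exp(-2mt^2).
\end{align*}
I would set $t = m^{-1/4}$, so that the failure probability is $2 \exp(-2\sqrt{m})$, which is at most $4 m^{-1/4}$ for all $m \geq 1$ (in fact much smaller, but this loose form suffices for the lemma).

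Next, condition on the success event $\sup_{x \in [0,S]}\sabs{F_m(x) - F(x)} \leq m^{-1/4}$. Since the uniform distribution is continuous, the samples are distinct with probability one, so for each $1 \leq i \leq m$ we have $F_m(x_{(i)}) = (m - i + 1)/m$ exactly. Substituting into the uniform deviation bound yields $\sabs{x_{(i)}/S - (m-i+1)/m} \leq m^{-1/4}$, i.e., $\sabs{x_{(i)} - S(1 - (i-1)/m)} \leq S m^{-1/4}$. A single triangle inequality then gives
\begin{align*}
    \sabs{x_{(i)} - S(1 - i/m)} \leq \sabs{x_{(i)} - S(1 - (i-1)/m)} + S/m \leq S m^{-1/4} + S/m \leq 2 S m^{-1/4},
\end{align*}
where the last step uses $S/m \leq S m^{-1/4}$ for $m \geq 1$. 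This proves the claim.

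There is no substantive obstacle here; the statement is essentially a restatement of DKW combined with the fact that $F_m$ takes the value $(m-i+1)/m$ at $x_{(i)}$. If one wishes to avoid invoking DKW explicitly, an elementary variant works: partition $[0,S]$ into $\sqrt{m}$ equal-width intervals, apply Hoeffding's inequality to the count of samples below each partition point, and take a union bound over the $\sqrt{m}$ points. This recovers the same deviation bound with the same asymptotic failure probability, and hence the same conclusion. The main modelling point to get right is the off-by-one between $(m-i+1)/m$ and $1 - i/m$, which is absorbed into the $2Sm^{-1/4}$ slack via the triangle inequality step above.
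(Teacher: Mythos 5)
Your proof is correct, but it takes a genuinely different route from the paper. You obtain uniform control of the empirical CDF via the DKW--Massart inequality with $t=m^{-1/4}$, then read off each order statistic from the identity $F_m(x_{(i)})=(m-i+1)/m$ and absorb the off-by-one between $(m-i+1)/m$ and $1-i/m$ with a triangle inequality; the failure probability you actually get, $2\exp(-2\sqrt{m})$, is exponentially small and you deliberately weaken it to the stated $4m^{-1/4}$. The paper instead argues pointwise: it uses the exact mean and variance of uniform order statistics ($\mathrm{Var}[U_{(j:m)}]\leq \tfrac{1}{m+2}$), applies Chebyshev to get deviation $m^{-1/4}$ with failure probability $m^{-1/2}$ per index, takes a union bound only over roughly $m^{1/4}$ ``anchor'' order statistics spaced $m^{3/4}$ apart, and then uses monotonicity of order statistics to sandwich every intermediate index between consecutive anchors, the $2Sm^{-1/4}$ slack absorbing both the deviation and the $m^{3/4}/m$ spacing. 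The trade-off is clear: your argument is shorter and yields a much stronger probability bound (at the cost of invoking DKW, or, in your elementary variant, Hoeffding plus a grid union bound, which is the closest in spirit to the paper), while the paper's argument needs only second moments and Chebyshev and is exactly calibrated to the $1-4m^{-1/4}$ guarantee stated in the lemma. Either proof establishes the statement as used downstream in \cref{thm:approxStochasUtil}.
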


        \newcommand{\errBnd}{\ensuremath{\sigma_{\max}\sqrt{2\log\frac{qm^2}{\delta}}}}

        \begin{lemma}[\protect{\textbf{\cite[Lemma A.1]{charikar2006nearoptimal} and union bound}}]\label{lem:gauss}
            If, for each $1\leq i\leq m$, $\rho_i$ is i.i.d. from $\mathcal{N}(\mu_i,\sigma_i^2)$, then with probability at least $1-\frac{\delta}{mq}$ (for any $\delta>0$), it holds that
            \[
                \text{for all $1\leq i\leq m$,}\quad
                \abs{\rho_i-\mu_i}\leq \errBnd{},
                \yesnum\label{def:eventF}
            \]
            where $\sigma_{\max}\coloneqq \inbrace{\sigma_1,\sigma_2,\dots,\sigma_m}$.
        \end{lemma}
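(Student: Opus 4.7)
The plan is to prove the lemma as a direct Gaussian tail bound for each coordinate combined with a union bound over the $m$ items, exactly as the lemma's attribution suggests. First I would invoke the standard (one-dimensional) sub-Gaussian tail bound: for $X\sim \mathcal{N}(\mu,\sigma^2)$ and any $t\geq 0$,
\[
    \Pr\insquare{\abs{X-\mu}\geq t}\leq 2\exp\inparen{-\frac{t^2}{2\sigma^2}},
\]
which is precisely the content of Lemma A.1 of \cite{charikar2006nearoptimal} (up to absorbing the factor of $2$ into constants, as is commonly done there). Since $\sigma_i\leq \sigma_{\max}$ for every $i$, the same bound holds uniformly with $\sigma$ replaced by $\sigma_{\max}$.

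Next, I would instantiate this bound at $t=\sigma_{\max}\sqrt{2\log(qm^2/\delta)}$ so that for each fixed $1\leq i\leq m$,
\[
    \Pr\insquare{\abs{\rho_i-\mu_i}\geq \sigma_{\max}\sqrt{2\log\tfrac{qm^2}{\delta}}}
    \leq 2\exp\inparen{-\log\tfrac{qm^2}{\delta}}
    = \frac{2\delta}{qm^2}.
\]
This is the per-coordinate failure probability; the choice of $t$ is engineered so that the exponent cancels the logarithm cleanly, leaving a $1/m^2$ factor that will be killed by the union bound.

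Finally, I would apply a union bound over $i=1,2,\dots,m$. Since the $\rho_i$ are independent (though independence is not strictly needed for the union bound), the event that some $\rho_i$ falls outside its window has probability at most $m\cdot \frac{2\delta}{qm^2}=\frac{2\delta}{qm}$, which (after rescaling $\delta$ by a constant, or noting that the cited Lemma A.1 already hides this factor of $2$) gives the claimed bound of $\delta/(qm)$. Taking complements yields \cref{def:eventF}.

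I do not anticipate any genuine obstacle: the proof is essentially a template application of a Gaussian tail bound followed by a union bound, and every ingredient is black-boxed by the citation to \cite{charikar2006nearoptimal}. The only mildly delicate point is bookkeeping: making sure that the parameter $t$ is chosen so that, after the union bound over the $m$ items, the exponent $t^2/(2\sigma_{\max}^2)$ is large enough to produce a probability of the form $\delta/(qm)$ rather than $\delta$ or $\delta/m$. The extra factor of $q$ inside the logarithm (which makes $t$ slightly larger than strictly necessary for this lemma in isolation) is presumably chosen to fit neatly with a later union bound across the $q$ blocks when this lemma is used downstream in the proof of \cref{thm:approxStochasUtil}.
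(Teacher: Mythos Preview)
Your proposal is correct and matches the paper's treatment exactly: the paper does not prove this lemma but simply attributes it to the cited Gaussian tail bound together with a union bound, which is precisely the argument you spell out. Your bookkeeping on the choice of $t$ and the handling of the constant factor of $2$ are fine, and your observation about the extra $q$ in the logarithm being there for downstream use in the proof of \cref{thm:approxStochasUtil} is also accurate.
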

        \noindent We begin by defining three high-probability events which will be used in the remainder of the proof.

        \paragraph{High probability events ($\evE$, $\evF$, and $\evG$).}
        Recall $\mu_1,\mu_2,\dots,\mu_m$ are drawn from the uniform distribution on $[0,S]$.
        Without loss of generality rearrange the items to ensure that
        \[
            \mu_1\geq \mu_2\geq \dots\geq \mu_m.
        \]
        Let $\evE$  be the following event:
        \[
            \text{for all $1\leq i\leq m$,}\quad
                \abs{
                    \mu_i - S\inparen{ 1 - \frac{i}{m}}
                }
            \leq \frac{2S}{m^{1/4}}.
            \yesnum\label{def:eventE}
        \]
        \cref{lem:unifConc} implies that
        \[
            \Pr\insquare{\evE} \geq  1 - 4m^{-\frac{1}{4}}.
        \]
        Let $\evF$ be the event \cref{def:eventF} holds.
        \cref{lem:gauss} implies that $\Pr\insquare{\evF}\geq 1-\frac{\delta}{mq}$.
        Let $P_{\rm small}\subseteq [m]\times [q]$ be the set of item-block pairs such that the corresponding individual fairness constraint is at most $\frac{\delta}{mq}$, i.e.,
        \[
            P_{\rm small}
                \coloneqq
                \inbrace{
                    (i,t)\in [m]\times [q] \mid
                    C_{it}\leq \frac{\delta}{mq}
                }.
        \]
        Let $\cD_{\rm alg}$ be the distribution that  \cref{alg:main} samples a ranking from.
        \cref{thm:algo_main} implies that $\cD_{\rm alg}$ satisfies $C$-individual fairness constraints.
        Let $\cD_{\rm oth}$ be any other distribution over rankings that satisfies $C$-individual fairness constraints.
        Let $\evG$ be the event that there exists no pairs $(i,t)\in P_{\rm small}$ such that item $i$ appears in $B_t$ in rankings $R_1\sim \cD_{\rm alg}$ and $R_{2}\sim \cD_{\rm oth}$. %
        Since for each $t\in [q]$, $C_{1t}+C_{2t}+\dots+C_{mt}=1$ any distribution $\cD$ that satisfies the $C$-individual fairness constraints must place item $i$ in $B_t$ with probability \textit{exactly} $C_{it}$.
        Hence, for any element $(i,t)\in P_{\rm small}$, the probability that  item $i$ appears in $B_t$ in the ranking output by \cref{alg:main} is at most $\frac{\delta}{mq}$.
        Since $\abs{P_{\rm small}}\leq mq$, the union bound implies that
        \[
            \Pr\insquare{\evG} \geq 1 - 2\delta.
        \]
        Finally, by another union bound and earlier lower bounds on $\Pr\insquare{\evE}$ and $\Pr\insquare{\evF}$, it follows that
        \[
            \Pr\insquare{\evE\text{, } \evF\text{, and } \evG} \geq 1-4m^{-\frac{1}{4}}-3\delta.
            \yesnum\label{eq:unionBoundWHP}
        \]

        \paragraph{Step 1 (Analysing utility for $(i,t)\not\in P_{\rm small}$):}
            Since event $\evG$ occurs with high probability, we would be able to restrict our attention to pairs $(i,t)\not\in P_{\rm small}$.
            In this step, we bound $\rho_i$ as a function of $t$ for any  $(i,t)\not\in P_{\rm small}$.
            Fix any  $(i,t)\not\in P_{\rm small}$.
            Let $R(\rho)$ be the ranking that sorts items in decreasing order by $\rho$.
            By definition of $P_{\rm small}$, $C_{it}>\frac{\delta}{mq}$.
            Hence, by the construction of $C$ (\cref{eq:example_indv_const}), it follows that
            \begin{align*}
                \frac{\delta}{mq}
                &< C_{it}\\
                &= \Pr_{}\insquare{i\in B_t \text{ in the ranking } R(\rho)}\\
                &=
                \Pr_{}\insquare{i\in B_t \text{ in the ranking } R(\rho) \mid \evF}\cdot \Pr\insquare{\evF} +\Pr_{}\insquare{i\in B_t \text{ in the ranking } R(\rho) \mid \lnot \evF}\cdot \Pr\insquare{\lnot \evF}\\
                &\leq
                \Pr_{}\insquare{i\in B_t \text{ in the ranking } R(\rho) \mid \evF} +\Pr_{}\insquare{i\in B_t \text{ in the ranking } R(\rho) \mid \lnot \evF} \cdot \frac{\delta}{mq}
                    \tag{Using that $1-\frac{\delta}{mq}\leq \Pr[\evF]\leq 1$; see \cref{lem:gauss}}\\
                &\leq
                \frac{\delta}{mq} + \Pr_{}\insquare{i\in B_t \text{ in the ranking } R(\rho) \mid \evF}.
            \end{align*}

            Rearranging the above inequality implies that
            \begin{align*}
                 \Pr_{}\insquare{i\in B_t \text{ in the ranking } R(\rho) \mid \evF} > 0.
            \end{align*}
            Let $\rho_{t,\min }$ and $\rho_{t,\max }$ let the minimum and maximum utilities of an item appearing in $B_t$ in $R(\rho)$, i.e.,
            \[
                {\rho_{t, \min} = \min_{i\in B_t \text{ in } R(\rho)} \rho_{i}
                \quad\text{and}\quad
                \rho_{t, \max} = \max_{i\in B_t \text{ in } R(\rho)} \rho_{i}.}
            \]
            Recall that conditioned on $\evF$
            \[
                \text{for all $1\leq i\leq m$},\quad
                \abs{\rho_i - \mu_i} \leq \errBnd{}.\yesnum\label{eq:tmptmp}
            \]
            Since conditioned on $\evF$, $i$ appears in $B_t$ in $R(\rho)$ with positive probability, it must hold that
            \begin{align*}
                \mu_i - \errBnd{} \leq \rho_{t,\max}
                \quad\text{and}\quad
                \mu_i + \errBnd{} \geq \rho_{t,\min}.
            \end{align*}
            Consequently, conditioned on $\evF$ the following hold
            \[
                \mu_i \in \insquare{
                    \rho_{t,\min} - \errBnd{},
                    \rho_{t,\max} + \errBnd{}
                }.
                \yesnum\label{eq:boundOnMuI}
            \]
            Moreover, since $\mu_1\geq \mu_2\geq \dots\geq \mu_m$ and all blocks have size $k$, \cref{eq:tmptmp} implies the following bounds on
            $\rho_{t,\min }$ and $\rho_{t,\max }$
            \begin{align*}
                \rho_{t,\min } \in \mu_{tk} \pm \errBnd{}
                \quad\text{and}\quad
                \rho_{t,\max } \in \mu_{t(k-1)+1} \pm \errBnd{}.
                \yesnum\label{eq:boundsOnRhoMaxMin}
            \end{align*}
            To see the first containment above, note that since $\rho_1,\rho_2,\dots,\rho_{tk}\geq \mu_{tk} - \errBnd{}$ at least $tk$ items have values above $\mu_{tk} - \errBnd{}$ and, since $\rho_i < \mu_{tk} + \errBnd{}$ for any $i > tk$, at most $tk$ items have value larger than $\mu_{tk} + \errBnd{}$.
            The second containment follows analogously by replacing $tk$ by $t(k-1)+1$ in the previous argument.
            \cref{eq:boundOnMuI,eq:boundsOnRhoMaxMin} imply that
            \[
                \mu_i\in \insquare{
                    \mu_{tk} - 2\errBnd{},
                    \mu_{tk} + 2\errBnd{}
                }.
                \yesnum\label{eq:boundOnMuIWithMuTK}
            \]
            Thus, we have shown that conditioned on event $\evF$ \cref{eq:boundOnMuIWithMuTK} holds for any $(i,t)\not\in P_{\rm small}$.

        \paragraph{Step 2 (Completing the proof):}
            Suppose events $\evE$, $\evF$, and $\evG$ hold.
            Recall that this event happens with high probability (at least $1-3\delta-4m^{-\frac{1}{4}}$ and, hence, it will turn out that the expected utility of the ranking $R_{\rm alg}$ output by \cref{alg:main} conditioned on $\evE$, $\evF$, and $\evG$ is similar to the expected utility without this conditioning.
            Fix any $1\leq j\leq q$ and any ranking $R$ sampled from a distribution that satisfies $C$-individual fairness constraints. %
            Consider any item $i$ appearing in block $B_j$.
            Since $\evG$ holds, $(i,j)\not\in P_{\rm small}$ and further since $\evF$ holds, \cref{eq:boundOnMuIWithMuTK} implies that
            \[
                \mu_i \in \mu_{jk} \pm 2\errBnd{}.
            \]
            Using the fact that $\evF$ holds again, implies that
            \[
                \rho_i \in \mu_{jk} \pm 3\errBnd{}.
                \yesnum\label{eq:noChangeTillHere}
            \]
            Moreover, since $\evE$ holds, it follows that
            \[
                \rho_i \in  S\inparen{1-\frac{jk}{m}} \pm \frac{2S}{m^{\frac{1}{4}}} \pm 3\errBnd{}.
                \yesnum\label{eq:boundOnRhoIConditional}
            \]
            Consider the ranking $R_{\rm alg}\sim \cD_{\rm alg}$ output by \cref{alg:main}.
            Since $\cD_{\rm alg}$ satisfies $C$-individual fairness constraints, the utility of $R_{\rm alg}$ conditioned on $\evE$, $\evF$, and $\evG$ is lower bounded as follows
            \begin{align*}
                \rho^\top R_{\rm alg} v
                &= \sum_{j=1}^q \sum_{s\in B_j} v_s \sum_{i=1}^m \rho_i \inparen{R_{\rm alg}}_{ij}\\
                &\geq \sum_{j=1}^q \sum_{s\in B_j} v_s \abs{B_j} \inparen{ S\inparen{1-\frac{jk}{m}} - \frac{2S}{m^{\frac{1}{4}}} - 3\errBnd{}}.
                \tag{Using that \cref{eq:boundOnRhoIConditional} holds for all $i\in B_j$}
            \end{align*}
            Let $\cD^\star$ be  a  distribution with the highest expected utility subject to satisfying $C$-individual fairness constraints.
            Consider $R^\star\sim \cD^\star$.
            Since $\cD^\star$ satisfies $C$-individual fairness constraints, the utility of $R^\star$ conditioned on $\evE$, $\evF$, and $\evG$ is upper bounded as follows
            \begin{align*}
                \rho^\top R^\star v
                &= \sum_{j=1}^q \sum_{s\in B_j} v_s \sum_{i=1}^m \rho_i \inparen{R^\star}_{ij}\\
                &\leq \sum_{j=1}^q \sum_{s\in B_j} v_s \abs{B_j} \inparen{ S\inparen{1-\frac{jk}{m}} + \frac{2S}{m^{\frac{1}{4}}} + 3\errBnd{}}.
                \tag{Using that \cref{eq:boundOnRhoIConditional} holds for all $i\in B_j$}
            \end{align*}
            Combining the above two inequalities implies that
            \begin{align*}
                \Ex_{R_{\rm alg}, \ R^\star, \rho, \mu}\insquare{
                    \frac{\rho^\top R_{\rm alg} v }{\rho^\top R^\star v } \mid \evE, \evF, \evG
                }
                &\geq \frac{
                    \sum_{j=1}^q \sum_{s\in B_j} v_s \abs{B_j} \inparen{
                        1-m^{-1}jk - 2m^{-\frac{1}{4}} - 3S^{-1}\errBnd{}
                    }
                }{
                    \sum_{j=1}^q \sum_{s\in B_j} v_s \abs{B_j} \inparen{
                        1-m^{-1}jk + 2m^{-\frac{1}{4}} + 3S^{-1}\errBnd{}
                    }
                }\\
                &\geq \min_{1\leq j\leq q}\min_{s\in B_j} \frac{
                    1-m^{-1}jk - 2m^{-\frac{1}{4}} - 3S^{-1}\errBnd{}
                }{
                    1-m^{-1}jk + 2m^{-\frac{1}{4}} + 3S^{-1}\errBnd{}
                }\\
                &\geq \frac{
                    1-nm^{-1} - 2m^{-\frac{1}{4}} - 3S^{-1}\errBnd{}
                }{
                    1-nm^{-1} + 2m^{-\frac{1}{4}} + 3S^{-1}\errBnd{}
                }\tag{Using that for any $b>a>0$, $\frac{a-x}{b-x}$ is a decreasing function of $x$ on the interval $[0,b)$, $qk\leq n$, and $n\leq m$}
            \end{align*}
            \begin{align*}
                \qquad \qquad  &\geq \frac{
                    1-nm^{-1} - 2m^{-\frac{1}{4}} - 3S^{-1}\errBnd{}
                }{
                    1-nm^{-1} + 2m^{-\frac{1}{4}} + 3S^{-1}\errBnd{}
                }\\
                &= \inparen{
                    1 - \frac{
                            2m^{-\frac{1}{4}} + 3S^{-1}\errBnd{}
                        }{
                            1-nm^{-1}
                        }
                }\inparen{
                    1 + \frac{
                            2m^{-\frac{1}{4}} + 3S^{-1}\errBnd{}
                        }{
                            1-nm^{-1}
                        }
                }^{-1}\\
                &\geq \inparen{
                    1 - \frac{
                            2m^{-\frac{1}{4}} + 3S^{-1}\errBnd{}
                        }{
                            1-nm^{-1}
                        }
                }^2\tag{Using that $\frac{1}{1+x}\geq 1-x$ for all $x\in \R$}\\
                &\geq {
                    1 - \frac{
                            4m^{-\frac{1}{4}} + 6S^{-1}\errBnd{}
                        }{
                            1-nm^{-1}
                        }
                }.\tag{Using that $\inparen{1-x}^2\geq 1-2x$ for all $x\in \R$}
            \end{align*}
            An unconditional lower bound on the utility ratio follows because of the following equality
            \begin{align*}
                \Ex\insquare{
                    \frac{\rho^\top R_{\rm alg} v }{\rho^\top R^\star v }
                }
                &= \Ex\insquare{
                    \frac{\rho^\top R_{\rm alg} v }{\rho^\top R^\star v } \mid \evE, \evF, \evG
                } \Pr\insquare{\evE, \evF, \evG}
                + \Ex\insquare{
                    \frac{\rho^\top R_{\rm alg} v }{\rho^\top R^\star v } \mid \lnot (\evE, \evF, \evG)
                } \Pr\insquare{\lnot (\evE, \evF, \evG)}.
            \end{align*}
            Where all expectations and probabilities are with respect to the randomness in $R_{\rm alg}, R^\star, \rho,$ and $\mu$.
            Finally, we can lower bound $\Ex\insquare{
                    \frac{\rho^\top R_{\rm alg} v }{\rho^\top R^\star v }
                }$ as follows
            \begin{align*}
                \Ex\insquare{
                    \frac{\rho^\top R_{\rm alg} v }{\rho^\top R^\star v }
                }
                &= \Ex\insquare{
                    \frac{\rho^\top R_{\rm alg} v }{\rho^\top R^\star v } \mid \evE, \evF, \evG
                } \Pr\insquare{\evE, \evF, \evG}
                + \Ex\insquare{
                    \frac{\rho^\top R_{\rm alg} v }{\rho^\top R^\star v } \mid \lnot (\evE, \evF, \evG)
                } \Pr\insquare{\lnot (\evE, \evF, \evG)}\\
                &\geq \Ex\insquare{
                    \frac{\rho^\top R_{\rm alg} v }{\rho^\top R^\star v } \mid \evE, \evF, \evG
                } \Pr\insquare{\evE, \evF, \evG}\\
                &\geq \Ex\insquare{
                    \frac{\rho^\top R_{\rm alg} v }{\rho^\top R^\star v } \mid \evE, \evF, \evG
                } \inparen{1-3\delta-4m^{-\frac{1}{4}}}\\
                &\geq   \inparen{
                            1 - \frac
                                { 4m^{-\frac{1}{4}} + 6S^{-1}\errBnd{} }
                                { 1-nm^{-1} }
                        }
                    \cdot \inparen{1-3\delta-4m^{-\frac{1}{4}}}\\
                &\geq   1-3\delta - \frac
                                { 8m^{-\frac{1}{4}} + 6S^{-1}\errBnd{} }
                                { 1-nm^{-1} }.
                                \yesnum
            \end{align*}
        The above inequality holds for all $\delta\in (0,1]$.
        To get the required bound it suffices to set $\delta=\frac{\sigma_{\max}}{S}$ (this value ensures that both terms involving $\delta$ are of the same order -- up to logarithmic factors in parameters -- although, one may be able to improve the lower bound by choosing a different $\delta$).
        To see this, observe that subsituting $\delta=\frac{\sigma_{\max}}{S}$ in the above inequality and using the fact that $1-nm^{-1}$ is bounded away from 1, implies that
        \begin{align*}
            \Ex\insquare{
                    \frac{\rho^\top R_{\rm alg} v }{\rho^\top R^\star v }
            }
            &\geq   1
                    - \frac{3\sigma_{\max}}{S}
                    - \frac{8m^{-\frac{1}{4}}}{1-nm^{-1}}
                    - \frac
                        { 6\sigma_{\max} \sqrt{2\log\frac{mS}{\sigma_{\max}}} }
                        { S\inparen{1-nm^{-1}} }\\
            &=   1
                    - O\inparen{\frac{\sigma_{\max}}{S}}
                    - O\inparen{m^{-\frac{1}{4}}}
                    - O\inparen{
                        \frac{ \sigma_{\max}  }{ S}
                        \sqrt{\log\frac{mS}{\sigma_{\max}}}
                        }
                    \tag{Using that $1-nm^{-1}$ is bounded away from 1}\\
            &\geq   1
                    - \wt{O}\inparen{
                            \frac{\sigma_{\max}}{S}
                            \cdot
                            \sqrt{\log{m}}
                        }
                    - O\inparen{m^{-\frac{1}{4}}}.
        \end{align*}

        \paragraph{Proof of bound when $\mu_1,\mu_2,\dots,\mu_m$ are arbitrary deterministic values.}
            The above proof also implicitly bounds $\Ex\insquare{
                    \frac{\rho^\top R_{\rm alg} v }{\rho^\top R^\star v }
            }$ when $\mu_1,\mu_2,\dots,\mu_m$ are arbitrary deterministic values.
            As before, without loss of generality assume that
            \[
                \mu_1\geq \mu_2 \geq \dots \geq \mu_m.
            \]
            Redefine $\evE$ to be the event that is always true.
            Continue with the above proof till \cref{eq:noChangeTillHere}.
            This shows that conditioned on $\evE$, $\evF$, and $\evG$,
            for both the ranking $R_{\rm alg}$ output by \cref{alg:main} and the ranking $R^\star\sim \cD^\star$, it holds that
            for any $t\in [q]$ and any $i$ appearing in block $B_t$ of either  $R_{\rm alg}$ or $R^\star$ \cref{eq:noChangeTillHere} holds.
            This implies the following lower and upper bounds
            \begin{align*}
                \rho^\top R_{\rm alg} v
                &\geq \sum_{j=1}^q \sum_{s\in B_j} v_s \abs{B_j} \inparen{\mu_{jk} - 3\errBnd{}},\\
                \rho^\top R^\star v
                &\leq \sum_{j=1}^q \sum_{s\in B_j} v_s \abs{B_j} \inparen{\mu_{jk} + 3\errBnd{}}.
            \end{align*}
            These, in turn, imply the following lower bound on $\Ex\insquare{
                    \frac{\rho^\top R_{\rm alg} v }{\rho^\top R^\star v }
            }$
            \begin{align*}
                \Ex\insquare{
                    \frac{\rho^\top R_{\rm alg} v }{\rho^\top R^\star v }
                }
                &= \Ex\insquare{
                    \frac{\rho^\top R_{\rm alg} v }{\rho^\top R^\star v } \mid \evE, \evF, \evG
                } \Pr\insquare{\evE, \evF, \evG}
                + \Ex\insquare{
                    \frac{\rho^\top R_{\rm alg} v }{\rho^\top R^\star v } \mid \lnot (\evE, \evF, \evG)
                } \Pr\insquare{\lnot (\evE, \evF, \evG)}\\
                &\geq \Ex\insquare{
                    \frac{\rho^\top R_{\rm alg} v }{\rho^\top R^\star v } \mid \evE, \evF, \evG
                } \Pr\insquare{\evE, \evF, \evG}\\
                &\geq \Ex\insquare{
                    \frac{\rho^\top R_{\rm alg} v }{\rho^\top R^\star v } \mid \evE, \evF, \evG
                } \cdot \inparen{1-3\delta-4m^{-\frac{1}{4}}}\\
                &\geq \frac{
                        \sum_{j=1}^q \sum_{s\in B_j} v_s \abs{B_j} \inparen{\mu_{jk} - 3\errBnd{}}
                    }{
                        \sum_{j=1}^q \sum_{s\in B_j} v_s \abs{B_j} \inparen{\mu_{jk} + 3\errBnd{}}
                    }
                    \cdot \inparen{1-3\delta-4m^{-\frac{1}{4}}}\\
                &\geq \min_{1\leq j\leq q}\min_{s\in B_j}
                    \frac{
                        \mu_{jk} - 3\errBnd{}
                    }{
                        \mu_{jk} + 3\errBnd{}
                    }
                    \cdot \inparen{1-3\delta-4m^{-\frac{1}{4}}}\\
                &\geq
                    \frac{
                        \mu_{n} - 3\errBnd{}
                    }{
                        \mu_{n} + 3\errBnd{}
                    }\cdot \inparen{1-3\delta-4m^{-\frac{1}{4}}}
                \intertext{{Where we used the facts that for any $a>0$, $\frac{x-a}{x+a}$ is an increasing function of $x$ on the interval $[0,\infty)$ with $x=\mu_{jk}$ and $a=3\errBnd{}$.
                To use this fact we recall: $\mu_1\geq \mu_2\geq \dots \geq \mu_m$, $qk\leq n$, and $n\leq m$.}
                Proceeding with the above chain of inequalities, we get}
                \Ex\insquare{
                    \frac{\rho^\top R_{\rm alg} v }{\rho^\top R^\star v }
                }
                &\geq
                    \inparen{1 - \frac{6\errBnd{}}{\mu_{n}}}
                    \cdot \inparen{1-3\delta-4m^{-\frac{1}{4}}}
                    \tag{Using that $\mu_n>0$, $\frac{1-x}{1+x}\geq 1-2x$ for all $x\in \R$}\\
                &\geq
                    1 - 3\delta - 4m^{-\frac{1}{4}} - \frac{6\errBnd{}}{\mu_{n}}.
            \end{align*}
            As before, this inequality holds for all $\delta\in (0,1]$, and substituting $\delta=\frac{\sigma_{\max}}{\mu_n}$, implies that
            \begin{align*}
                \Ex\insquare{
                    \frac{\rho^\top R_{\rm alg} v }{\rho^\top R^\star v }
                }
                \geq 1
                    - \wt{O}\inparen{
                            \frac{\sigma_{\max}}{\mu_n}
                            \cdot
                            \sqrt{\log{m}}
                        }
                    - O\inparen{m^{-\frac{1}{4}}}.
                    \yesnum\label{eq:proofForArbitraryMu}
            \end{align*}

        \newcommand{\unif}{\cU}

    \subsubsection{Proof of \cref{lem:unifConc}}
        For all $k\in [m]$, let $U_{(k:m)}$ be the $k$-th order statistic from $m$ independent draws from $\unif$, i.e., $U_{(k:m)}$ is the $k$-th smallest value from $m$ independent draws.
        The proof of \cref{lem:unifConc} uses the following standard facts.
        \begin{fact}[{\protect{\textbf{\cite[Eqs. 8.2 and 8.8]{ahsanullah2013introduction}}}}]\label{fact:os}
          For all $j\in [m]$, it holds that
          \begin{align*}
              \Ex\insquare{U_{(j:m)}} &= \frac{j}{m+1},\\
              \mathrm{Var}\insquare{U_{(j:m)}} &= \frac{j(m-j+1)}{(m+1)^2(m+2)}\leq \frac{1}{m+2}.
          \end{align*}
        \end{fact}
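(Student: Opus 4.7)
The plan is to derive both formulas directly from the distribution of $U_{(j:m)}$, which is the Beta distribution with parameters $j$ and $m-j+1$. First I would establish the density of $U_{(j:m)}$ via the standard combinatorial argument: for $x \in [0,1]$, the density $f_{j,m}(x)$ satisfies $f_{j,m}(x)\,dx$ equals the probability that one sample is in $[x,x+dx]$, $j-1$ samples lie in $[0,x]$, and $m-j$ lie in $[x,1]$. Choosing which sample is ``central'' and which $j-1$ of the remaining are below gives
\begin{align*}
  f_{j,m}(x) \;=\; m \binom{m-1}{j-1} x^{j-1}(1-x)^{m-j} \;=\; \frac{m!}{(j-1)!\,(m-j)!}\, x^{j-1}(1-x)^{m-j}.
\end{align*}

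With the density in hand, both moments reduce to Beta integrals $\int_0^1 x^{a-1}(1-x)^{b-1}\,dx = \frac{\Gamma(a)\Gamma(b)}{\Gamma(a+b)}$. For the first moment,
\begin{align*}
  \Ex[U_{(j:m)}] \;=\; \frac{m!}{(j-1)!\,(m-j)!}\cdot \frac{j!\,(m-j)!}{(m+1)!} \;=\; \frac{j}{m+1},
\end{align*}
and analogously
\begin{align*}
  \Ex[U_{(j:m)}^2] \;=\; \frac{m!}{(j-1)!\,(m-j)!}\cdot \frac{(j+1)!\,(m-j)!}{(m+2)!} \;=\; \frac{j(j+1)}{(m+1)(m+2)}.
\end{align*}

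Subtracting the square of the first moment from the second moment and clearing denominators yields the claimed variance:
\begin{align*}
  \mathrm{Var}[U_{(j:m)}] \;=\; \frac{j(j+1)}{(m+1)(m+2)} - \frac{j^2}{(m+1)^2} \;=\; \frac{j(j+1)(m+1) - j^2(m+2)}{(m+1)^2(m+2)} \;=\; \frac{j(m-j+1)}{(m+1)^2(m+2)}.
\end{align*}
The upper bound $\frac{1}{m+2}$ follows from AM--GM: $j(m-j+1) \leq \bigl(\tfrac{m+1}{2}\bigr)^2$, so $\mathrm{Var}[U_{(j:m)}] \leq \frac{1}{4(m+2)} \leq \frac{1}{m+2}$.

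There is essentially no hard obstacle here; all three ingredients (the joint density derivation, the Beta-integral evaluation of the Gamma-function quotients, and the AM--GM bound on $j(m-j+1)$) are textbook manipulations. The only step requiring slight care is the algebraic simplification of $j(j+1)(m+1) - j^2(m+2) = j[(j+1)(m+1) - j(m+2)] = j[(m-j+1)]$, which one should verify by direct expansion before concluding.
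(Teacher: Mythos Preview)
Your proof is correct; the paper itself does not prove this fact but simply cites it from a textbook on order statistics. Your derivation via the Beta density and Beta-integral moments is the standard one, and the AM--GM bound even yields the slightly sharper $\frac{1}{4(m+2)}$ before you relax it to the stated $\frac{1}{m+2}$.
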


        \begin{fact}[\protect{\textbf{\cite[Example 5.1]{ahsanullah2013introduction}}}]\label{fact:conditional_os}
          Let $U_{(1:m)},\dots,U_{(m:m)}$ and $V_{(1:m)},\dots,V_{(m:m)}$ be two sets of order statistics of the uniform distribution.
          For any fixed $j\in [m]$ and threshold $t\in (0,1)$,
          conditioned on the event $U_{(j:m)}\negsp =\negsp t$,
          \begin{itemize}
            \item for all $\ell=j+1,j+2,\dots,m$, the distribution of $U_{(j+\ell:m)}$ is the same as the distribution of $t + (1-t) V_{(\ell:m-j)}$,
            \item for all $\ell=1,2,\dots,j-1$, the distribution of $U_{(\ell:m)}$ is the same as the distribution of $t V_{(\ell:j-1)}.$
          \end{itemize}
        \end{fact}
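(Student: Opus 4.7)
The plan is to establish this via the joint density of uniform order statistics and a conditional-density factorization at the pivot $U_{(j:m)}=t$. Recall that the joint density of $\inparen{U_{(1:m)},\ldots,U_{(m:m)}}$ is the constant $m!$ supported on the ordered simplex $\inbrace{0<u_1<\cdots<u_m<1}$. The intended strategy is to compute the marginal density of $U_{(j:m)}$, divide to get the conditional joint density of the remaining $m-1$ coordinates, observe that it factorizes into a ``below $t$'' piece and an ``above $t$'' piece, and finally apply an affine rescaling to identify each piece with uniform order statistics on a sub-interval.

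First I would compute the marginal density of $U_{(j:m)}$ by integrating out the other $m-1$ coordinates from the joint density. Using that $\int_{0<u_1<\cdots<u_{j-1}<t} du_1\cdots du_{j-1}=t^{j-1}/(j-1)!$ and the analogous identity $(1-t)^{m-j}/(m-j)!$ for the upper block, this yields $f_{U_{(j:m)}}(t) = \tfrac{m!}{(j-1)!(m-j)!} t^{j-1}(1-t)^{m-j}$. Dividing the joint density by this marginal gives the conditional density of $\inparen{U_{(1:m)},\ldots,U_{(j-1:m)},U_{(j+1:m)},\ldots,U_{(m:m)}}$ given $U_{(j:m)}=t$, namely
\[
\frac{(j-1)!}{t^{j-1}}\,\mathbf{1}\insquare{0<u_1<\cdots<u_{j-1}<t}\ \cdot\ \frac{(m-j)!}{(1-t)^{m-j}}\,\mathbf{1}\insquare{t<u_{j+1}<\cdots<u_m<1}.
\]

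From this product form I would read off two facts. First, conditional on $U_{(j:m)}=t$, the lower block $\inparen{U_{(1:m)},\ldots,U_{(j-1:m)}}$ and the upper block $\inparen{U_{(j+1:m)},\ldots,U_{(m:m)}}$ are independent. Second, each factor is exactly the joint density of the order statistics of i.i.d.\ uniform samples on the corresponding sub-interval: the lower block has the density of the order statistics of $j-1$ i.i.d.\ $\mathrm{Unif}[0,t]$ samples (density $(j-1)!/t^{j-1}$ on the ordered simplex in $[0,t]^{j-1}$), and the upper block has the density of the order statistics of $m-j$ i.i.d.\ $\mathrm{Unif}[t,1]$ samples. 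An affine change of variables then finishes the proof: since $\mathrm{Unif}[0,t]\stackrel{d}{=}t\cdot\mathrm{Unif}[0,1]$ and $\mathrm{Unif}[t,1]\stackrel{d}{=}t+(1-t)\cdot\mathrm{Unif}[0,1]$, and since monotone affine maps carry order statistics to order statistics of the image sample, the lower block is distributed as $\inparen{t V_{(1:j-1)},\ldots,t V_{(j-1:j-1)}}$ and the upper block as $\inparen{t+(1-t)V_{(1:m-j)},\ldots,t+(1-t)V_{(m-j:m-j)}}$, which is the claim.

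The proof is essentially bookkeeping, so I do not anticipate a serious obstacle; the only step requiring any care is verifying that the joint density of the order statistics of $k$ i.i.d.\ $\mathrm{Unif}[a,b]$ samples is $k!/(b-a)^k$ on the ordered simplex in $[a,b]^k$ (which follows from the fact that the $k!$ orderings of i.i.d.\ samples are equally likely and each arrangement has density $(b-a)^{-k}$), and then checking that the two factors in the conditional density match this formula after substituting $(a,b)=(0,t)$ and $(a,b)=(t,1)$ respectively.
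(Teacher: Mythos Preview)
Your argument is correct and is the standard derivation of this Markov-type property of uniform order statistics. Note, however, that the paper does not supply its own proof of this statement: it is quoted as a \textbf{Fact} with a citation to \cite[Example~5.1]{ahsanullah2013introduction} and invoked without justification. So there is nothing to compare against; your conditional-density factorization is exactly the textbook route that the cited reference would take, and you have executed it cleanly (including the independence of the two blocks, which is stronger than what the stated Fact asserts but falls out of the same computation).
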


        \begin{proof}[Proof of \cref{lem:unifConc}]
            With some abuse of notation, in this proof, we use $\mu_i$ to refer to $\frac{\mu_i}{S}$ for each $1\leq i\leq m$.
            Here, $\mu_i$ is the $\inparen{m-i}$-th order statistic of $m$ draws from $\unif$.
            Hence, it is distributed as $U_{(m-i:m)}$
            From \cref{fact:os}, we have that for all $i\in [m]$,
            $$\mathrm{Var}\insquare{\mu_i} = \mathrm{Var}\insquare{U_{(m-i:m)}} \leq \frac{1}{m}.$$
            Using Chebyshev's inequality~\cite{motwani1995randomized}, we get:
            \begin{align*}
              \forall\ i\in [m],\quad \Pr\insquare{  \abs{\mu_i - \Ex[\mu_i]}
              \geq m^{-\frac14}  } \leq m^{-\frac12}.
              \yesnum\label{eq:vi_conc:new}
            \end{align*}
            \noindent For each $i\in \N$, define
            \begin{align*}
              s(i)\coloneqq (i-1)\cdot m^{\frac34} + 1.
            \end{align*}
            Starting from the first order statistic, consider order statistics at intervals of $m^{\frac34}$:
            $$\mu_{s(1)},\ \ \mu_{s(2)},\ \dots.$$
            Let $\evG$ be the event that all of these order statistics are $m^{-\frac14}$-close to their mean:
            \begin{align*}
              \forall\ i\in \inbrace{1,2,\dots,\floor{\smash{m^{-\frac{3}{4}}\cdot m}}},\quad
              \abs{\mu_{s(i)} - \Ex[\mu_{s(i)}]} < m^{-\frac14}.
              \yesnum\label{eq:def_evG}
            \end{align*}
            Note that $\evG$ implies that \cref{def:eventE} holds because the $\ell$-th order statistic is necessarily between the $s(\sfloor{\ell\cdot m^{-\frac{3}{4}}})$-th and $s(\sfloor{\ell\cdot m^{-\frac{3}{4}}}+1)$-th order statistics.
            Taking the union bound over all $i\in [m]$, by using Equation~\eqref{eq:vi_conc:new}, we get that
            \begin{align*}
              \Pr[\evG]\geq 1-4 m^{\frac14}\cdot  m^{-\frac12}  = 1-4m^{-\frac14}.\yesnum\label{eq:whpcg}
            \end{align*}

        \end{proof}

    \subsection{Proofs Omitted From \cref{sec:model}}\label{sec:proofof:thm:hardness_main}

            In this section, we prove the following hardness result. %
            \begin{restatable}[]{theorem}{thmHardnessMain}
                \label{thm:hardness_main}
                Given matrix $D\in [0,1]^{m\times n}$, number $\eta \in [0, 1]$, and an instance of \cref{prob:main}, it is \np-hard to check if there is a distribution $\cD$ such that $D$ is the marginal of $\cD$ and $\Pr_{R\sim \cD}[R\in \rgf] > \eta.$
            \end{restatable}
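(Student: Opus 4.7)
The plan is a polynomial-time reduction from \textsc{3-Dimensional Matching} (3DM), which is NP-complete. Recall 3DM: given disjoint sets $X, Y, Z$ each of size $\tau$ and a family $T \subseteq X \times Y \times Z$ of triples, decide whether $T$ contains a perfect matching---a size-$\tau$ subfamily covering every element of $X \cup Y \cup Z$ exactly once. I will show NP-hardness already for the case $\eta = 0$.

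Given a 3DM instance, I would construct the following \cref{prob:main} instance together with the auxiliary inputs $D$ and $\eta$. Index the $m = |T|$ items by triples, take $n = \tau$ positions, and place them all in a single block $B_1 = [\tau]$ (so the number of blocks is $q = 1$). Introduce $p = 3\tau$ protected groups $\{G_x\}_{x \in X} \cup \{G_y\}_{y \in Y} \cup \{G_z\}_{z \in Z}$, where the item corresponding to $(x,y,z) \in T$ lies in exactly the three groups $G_x, G_y, G_z$. Set the group-fairness bounds to $L_{1\ell} = 0$ and $U_{1\ell} = 1$ for every $\ell \in [p]$, so a ranking is group-fair iff no group is selected more than once. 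The individual-fairness matrices and utilities are irrelevant to the decomposition question, so take $C$ to be all zeros, $A$ to be all ones, and $\rho, v$ arbitrary. Finally, set $D_{ij} = 1/m$ for all $(i,j) \in [m] \times [n]$---a valid marginal since its column sums equal $1$ and its row sums equal $\tau/m \leq 1$---and $\eta = 0$. The construction is clearly polynomial-time.

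The core equivalence to establish is that the 3DM instance is a YES-instance iff some $\cD$ with marginal $D$ satisfies $\Pr_{R \sim \cD}[R \in \rgf] > 0$. For the forward direction, a perfect matching $\{t_1, \ldots, t_\tau\} \subseteq T$ gives a ranking $R^* \in \cR$ placing $t_i$ at position $i$; each of the $3\tau$ groups is hit exactly once by $R^*$, so $R^* \in \rgf$. To insert $R^*$ into a distribution with marginal $D$, pick a small $\epsilon > 0$ and take $\cD$ to output $R^*$ with probability $\epsilon$ and otherwise sample a ranking from a decomposition of $(D - \epsilon R^*)/(1 - \epsilon)$ into elements of $\cR$. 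For the reverse direction, if some $R \in \rgf$ lies in the support of $\cD$, then $R$ selects $\tau$ items that collectively account for $3\tau$ group-membership slots distributed across $3\tau$ distinct groups, each of capacity at most $1$; these must each be hit exactly once, so the $\tau$ selected items form a perfect 3-dimensional matching.

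The main obstacle will be verifying that the residual matrix $(D - \epsilon R^*)/(1 - \epsilon)$ actually lies in $\conv(\cR)$ for some $\epsilon > 0$, so the forward-direction construction is a bona fide distribution with marginal exactly $D$. A direct check shows the residual is entrywise non-negative, has all column sums equal to $1$, and has all row sums at most $1$ whenever $\epsilon \leq 1/m$ and, for items $i$ not selected by $R^*$ (if any), $\epsilon \leq 1 - \tau/m$; both constraints admit some positive $\epsilon$, and the second is vacuous in the edge case $\tau = m$ in which every item is selected by $R^*$. The required decomposition into rankings in $\cR$ then follows from the rectangular analogue of Birkhoff's theorem, obtained by padding the residual with $m - n$ dummy columns of appropriate mass to form a doubly stochastic $m \times m$ matrix, applying the classical Birkhoff theorem, and dropping the dummy columns. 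Combined with the biconditional above, this establishes \cref{thm:hardness_main}.
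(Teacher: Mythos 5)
Your reduction is internally sound (the mixing argument with $\epsilon R^\ast$, the residual check, and the padded Birkhoff decomposition all work), but it does not prove the theorem as stated, because the instances it produces are not instances of \cref{prob:main}. In the paper's model the protected groups $G_1,\dots,G_p$ are taken to be disjoint, with an extension only to laminar families; \cref{prob:main}, and hence \cref{thm:hardness_main}, is stated for that class, and the paper's own hardness construction respects it. In your construction every item (a triple) belongs to three pairwise crossing groups $G_x, G_y, G_z$, so for a generic 3DM instance the group family is neither disjoint nor laminar, and the hardness you obtain is for a strictly more general problem. Moreover, in your instances the entire difficulty sits in the group structure rather than in the marginal: with $C$ all zeros, $A$ all ones, and $D$ uniform (hence with full support), the question ``is there a $\cD$ with marginal $D$ and positive mass on $\cR_{\rm GF}$'' collapses to ``is $\cR_{\rm GF}$ nonempty,'' i.e., to bare feasibility of the group-fairness constraints. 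That feasibility question is NP-hard for overlapping three-dimensional-matching-type groups, but it is polynomial-time decidable in the disjoint (and laminar) setting the paper works in---this is exactly the integrality property that \cref{alg:main} exploits---so your argument cannot be repaired by tweaking $D$ or $\eta$; the gadget itself must use disjoint groups.

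The paper's proof is built around precisely this point: it reduces from the couple-constrained matching problem on bipartite cycles, constructs \emph{disjoint} groups (each pair $G_{2\ell-1},G_{2\ell}$ uses fresh gadget items together with the items of a single couple, with all remaining items in a dummy group), and encodes the coupling constraints into a carefully chosen fractional matrix $D$. There the hardness is transmitted through the support restriction ``$R_{ij}=1\Rightarrow D_{ij}>0$'': group-fair rankings exist in abundance, but deciding whether one exists \emph{inside the support of the prescribed marginal} is what encodes the NP-hard problem. Your forward/backward equivalence between ``a group-fair ranking supported on $D$ exists'' and ``a distribution with marginal $D$ places positive mass on $\cR_{\rm GF}$'' matches the paper's corresponding lemma and is fine; what is missing is a gadget that realizes an NP-hard problem using disjoint groups and the support structure of $D$, rather than through overlapping group memberships.
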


            \noindent Our result is based on a reduction from the following matching problem that is known to be \np-hard \cite{Thomas2016}. %

            \paragraph{Couple constrained matching problem on bipartite cycles.}
            \begin{itemize}
                \item \textbf{Input:} A bipartite cycle graph $G(X\cup Y, E)$, a set of pairs or \textit{couples} $C\subseteq E\times E$, and an integer $k$
                \item \textbf{Output:} YES if there is a matching $M$ of size $k$ in graph $G$ such that for every couple $(e_1,e_2)\in C$, $e_1\in M$ if and only if $e_2\in M$.
                NO otherwise.
            \end{itemize}

            \newcommand{\pori}{\ensuremath{P_{\rm Original}}}
            \newcommand{\pdum}{\ensuremath{P_{\rm Dummy}}}
            \newcommand{\pgad}{\ensuremath{P_{\rm Gadget}}}

            \newcommand{\iori}{\ensuremath{I_{\rm Original}}}
            \newcommand{\idum}{\ensuremath{I_{\rm Dummy}}}
            \newcommand{\igad}{\ensuremath{I_{\rm Gadget}}}

            \paragraph{Construction.}
                Consider an instance $(X, Y, E, C, k)$ of the above problem.
                Since the underlying graph $G(X\cup Y, E)$ is a bipartite cycle, $\abs{X}=\abs{Y}$.
                If $\abs{X} < k$, then the problem is clearly a NO instance.
                Hence, we assume $\abs{X}\geq k$.
                We set
                \begin{align*}
                    n &\coloneqq m\coloneqq \abs{X}+2\abs{C}+\abs{X}-k,\\
                    q &\coloneqq 2\abs{C}+1,\\
                    p &\coloneqq q. %
                \end{align*}
                We divide the $m$ items into three disjoint sets \iori{}, \idum{}, and \igad{} of sizes $\abs{X}$, $\abs{X}-k$, and $2\abs{C}$ respectively.
                Let $\iori{}=X$.
                Similarly, we divide the positions into three disjoint sets \pori{}, \pdum{}, and \pgad{} of sizes $\abs{X}$, $\abs{X}-k$, and $2\abs{C}$ respectively.
                Let $\pori{}=Y$.
                It remains to construct the matrix $D$, blocks $B_1,\dots,B_p$, groups $G_1,\dots,G_p$, and matrices $L$ and $U$ specifying the fairness constraints.
                Define $$\phi\coloneqq {\frac{n-2k}{2(n-k)}}.$$
                We construct $D$, the blocks, and the groups as follows.
                \begin{itemize}[leftmargin=14pt,itemsep=1pt]

                    \item For each $\ell$, consider the $\ell$-th couple consisting of edges $(g_\ell,h_\ell)$ and $(i_\ell,j_\ell)$. %
                    select two items $\beta_\ell,\delta_\ell\in \igad$ and two positions $\alpha_\ell,\gamma_\ell\in \pgad{}$.
                    For each pair $x\in\inbrace{g_\ell,i_\ell,\beta_\ell,\delta_\ell}$ and $y\in\inbrace{h_\ell,j_\ell,\alpha_\ell,\gamma_\ell}$, set $D_{x,y}$ as specified in \cref{fig:ranking_instance}.
                    Where any edge absent from the figure has a weight of 0.
                    Construct groups $G_{2\ell-1}$ and $G_{2\ell}$ as in \cref{fig:groups} and blocks $B_{2\ell-1}$ and $B_{2\ell}$ as in \cref{fig:blocks}.

                    \begin{figure}
                        \centering
                        \subfigure[]{
                            \includegraphics[width=0.4\linewidth,trim={24cm 0cm 24cm 8cm},clip]{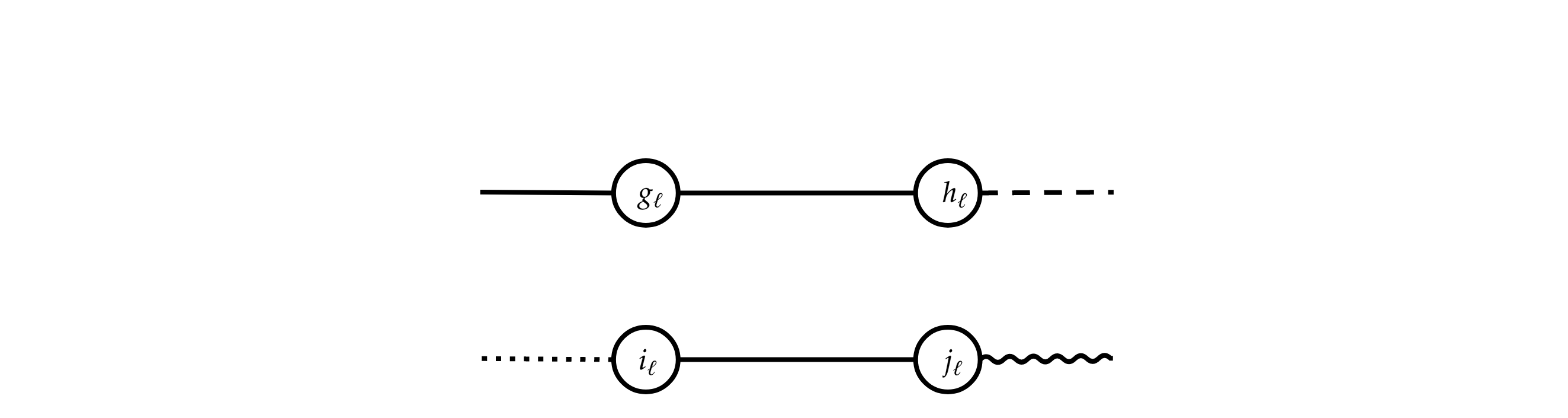}
                        }
                        \subfigure[\label{fig:ranking_instance_eg}]{
                            \includegraphics[width=0.4\linewidth,trim={28cm 0cm 20cm 0cm},clip]{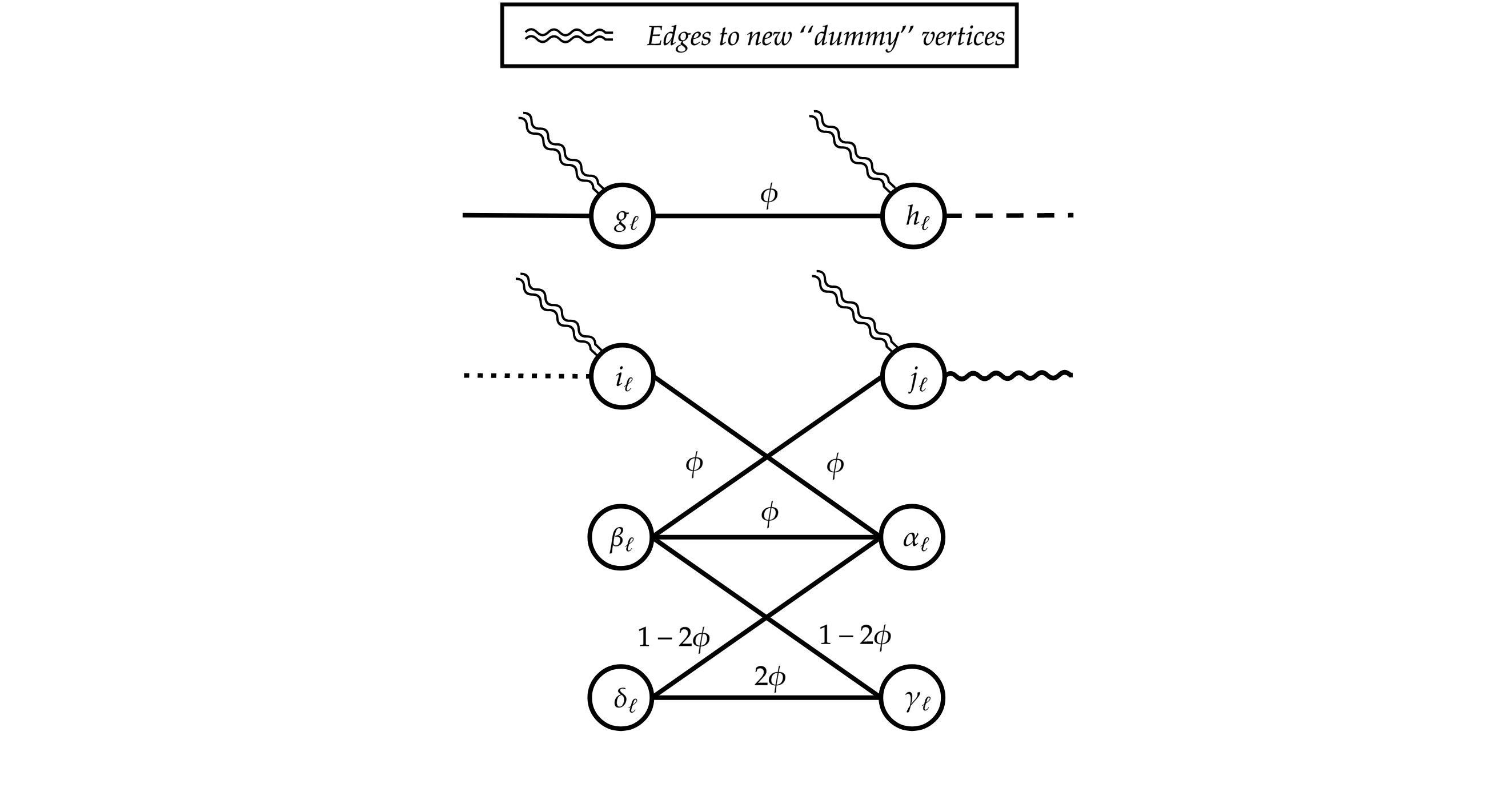}
                        }
                        \vspace{-0.1in}
                        \caption{The construction for the reduction from the couple-constrained matching problem to group-fair ranking.
                        For each couple consisting of edges $(g_\ell,h_\ell)$ and edges $(i_\ell,j_\ell)$, we introduce four additional vertices $\alpha_\ell,\beta_\ell,\gamma_\ell,\delta_\ell$ as shown in \cref{fig:ranking_instance_eg}
                        The numbers on the edge $(x,y)$ denotes the value of $D_{x,y}$; where $D_{x,y}=0$ for all edges absent from the figure.
                        }
                        \label{fig:ranking_instance}
                    \end{figure}

                    \begin{figure}
                        \centering
                        \subfigure[\label{fig:groups}]{
                            \includegraphics[width=0.45\linewidth,trim={28cm 0cm 10cm 0cm},clip]{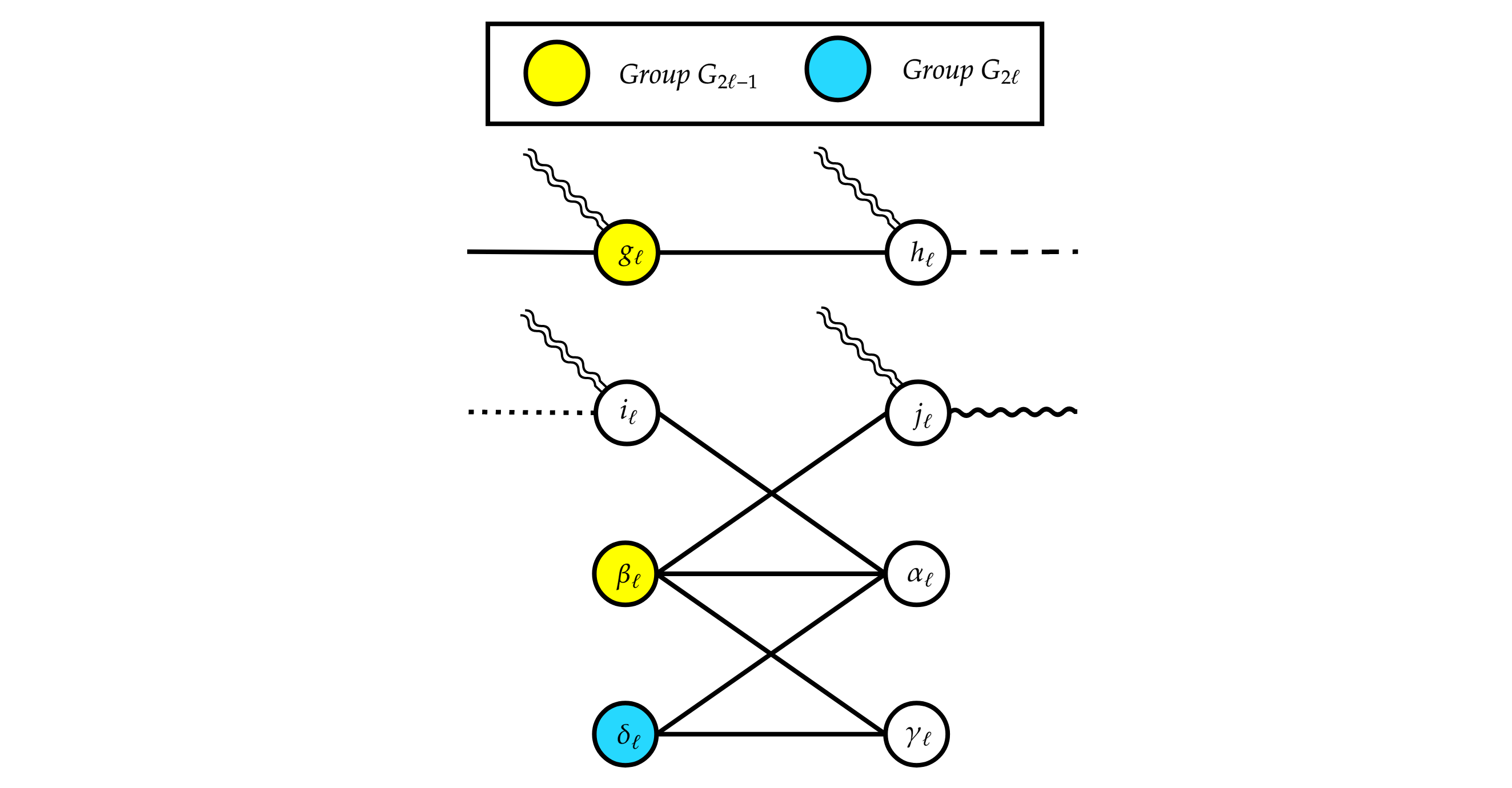}
                        }
                        \subfigure[\label{fig:blocks}]{
                            \includegraphics[width=0.45\linewidth,trim={28cm 0cm 10cm 0cm},clip]{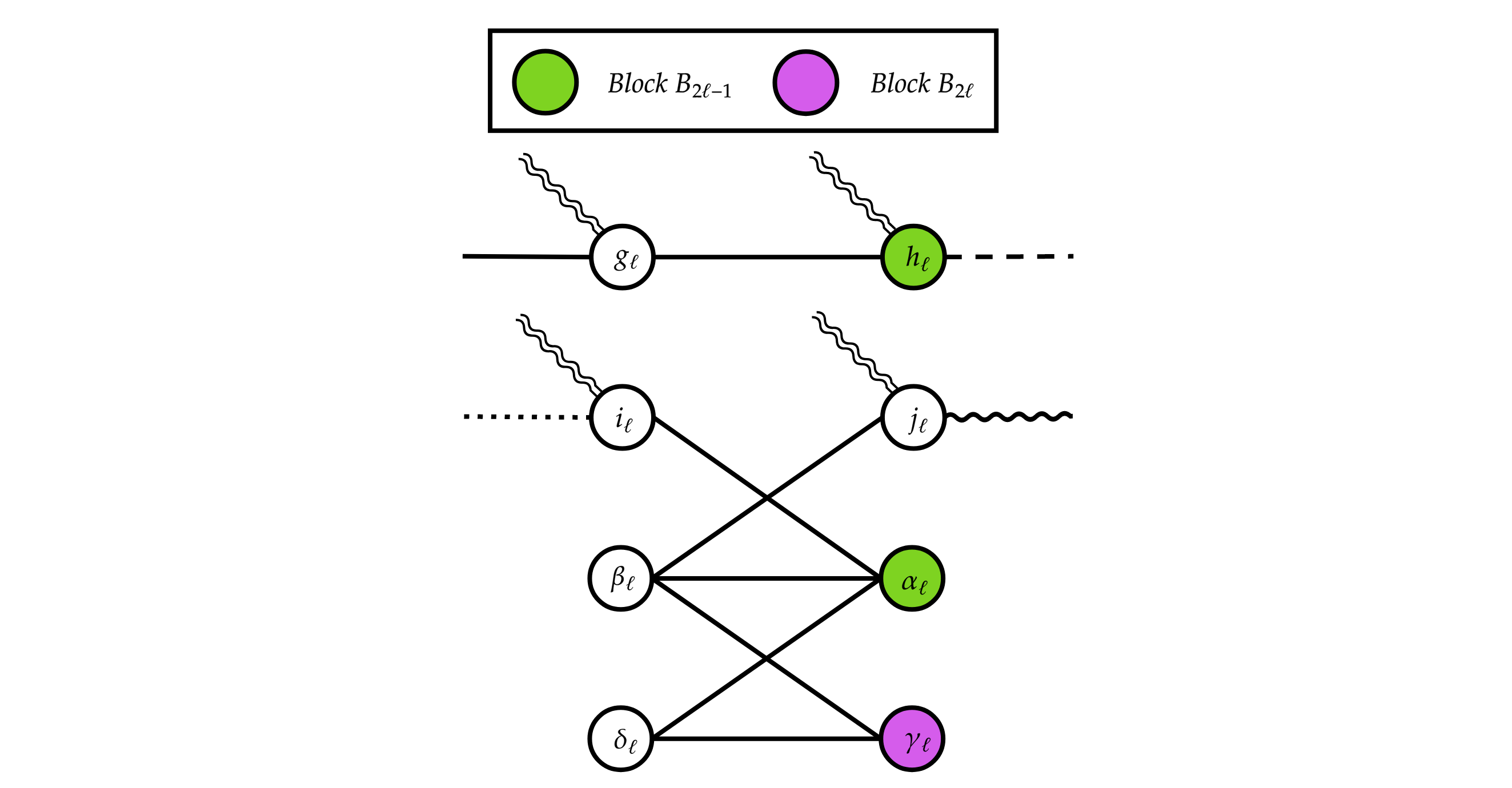}
                        }
                        \caption{Groups $G_{2\ell-1}$ and $G_{2\ell}$ and blocks $B_{2\ell-1}$ and $B_{2\ell}$ for each $\ell\in [p/2]$ appearing in the proof of \cref{thm:hardness_main}. Where for simplicity $p/2$ denotes $\floor{p/2}$.}
                        \label{fig:groups_and_blocks}
                    \end{figure}

                    \item For each edge $(i,j)\in \idum\times\pori$, set $D_{ij}=\frac{1-2\phi}{\abs{\idum}} = \frac{1-2\phi}{\abs{X}-k}$.
                    \item For each edge $(i,j)\in \iori{}\times \pdum$, set  $D_{ij}=\frac{1-2\phi}{\abs{\pdum}} = \frac{1-2\phi}{\abs{X}-k}$.
                    \item For each edge $(i,j)\in \iori{}\times \pori{}$ that is not a part of any couple constraint, set $D_{ij}=\phi$.
                    \item For any remaining edges $(i,j)$ set $D_{i,j}=0$.
                    Let $G_{\rm dummy}$ be the set of items that have been not included in groups $G_{2\ell-1}$ and $G_{2\ell}$ for any $\ell\in [p/2]$. (Where, for simplicity, we use $p/2$ to denote $\floor{p/2}$.)
                    Let $B_{\rm dummy}$ be the set of positions that have been not included in blocks $B_{2\ell-1}$ and $B_{2\ell}$ for any $\ell\in [p/2]$.
                \end{itemize}
                Finally, we define $L$ and $U$ to be the following $p\times q$ matrices.
                \begin{align*}
                    L \coloneqq \begin{bmatrix}
                        1 & 0 & \dots & 0 & 0\\
                        0 & 1 & \dots & 0 & 0\\
                        \vdots & \vdots & \ddots & \vdots & \vdots\\
                        0 & 0 & \dots & 1 & 0\\
                        0 & 0 & \dots & 0 & 0
                    \end{bmatrix}
                    \quad \text{and}\quad
                    U \coloneqq \begin{bmatrix}
                        1 & n & \dots & n & n\\
                        n & 1 & \dots & n & n\\
                        \vdots & \vdots & \ddots & \vdots & \vdots\\
                        n & n & \dots & 1 & n\\
                        n & n & \dots & n & n
                    \end{bmatrix}.
                \end{align*}
                In other words, the constraints specified by $L$ and $U$ require that there is exactly 1 item from $G_{k}$ placed in $B_k$ for each $k\in [p-1]$ and places no constraint on other group-block pairs.

        \paragraph{Polynomial time.}
            Clearly, the above construction can be completed in polynomial time in the bit-complexity of the input (i.e., $(X, Y, E, C,k)$).

        \paragraph{Main lemmas.}
            We will prove the following lemma.
            \begin{lemma}\label{lem:main_reduction}
                An instance $(X, Y, E, C, k)$ of the ``Couple constrained matching problem on bipartite cycles'' is a YES instance if and only if
                there is a ranking $R$ in the above construction such that (1) $R\in \rgf{}$ and (2) for any $i$ and $j$, if $R_{ij}=1$, then $D_{ij}>0$.
            \end{lemma}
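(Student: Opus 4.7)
The plan is to prove both directions by exhibiting an explicit correspondence between couple-constrained perfect matchings of size $k$ in $G$ and rankings $R$ satisfying properties (1) and (2), with the per-couple gadget enforcing the couple constraint and the block-group structure enforcing the size-$k$ matching requirement.

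\textbf{Forward direction $(\Rightarrow)$.} Given a couple-constrained matching $M$ of size $k$ in $G$, I will construct a ranking $R$ as follows. First, for every matched edge $(i,j)\in M$ with $i\in \iori$ and $j\in \pori$, set $R_{ij}=1$; by construction $D_{ij}=\phi>0$ or $D_{ij}$ takes the positive value shown in \cref{fig:ranking_instance}, so condition (2) holds for these placements. Next, the $|X|-k$ unmatched items in $\iori$ will be placed arbitrarily (e.g., by any fixed bijection) into the $|X|-k$ positions of $\pdum$, and the $|X|-k$ items of $\idum$ will be placed arbitrarily into the $|X|-k$ unmatched positions of $\pori$; the edges in $\iori\times\pdum$ and $\idum\times\pori$ all have $D$-weight $\tfrac{1-2\phi}{|X|-k}>0$, so (2) continues to hold. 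Finally, for each couple $\ell$ with edges $(g_\ell,h_\ell),(i_\ell,j_\ell)$, I split into two cases dictated by $M$: if both edges are in $M$ then the positions $h_\ell,j_\ell$ and items $g_\ell,i_\ell$ are already used, so I match the gadget items $\beta_\ell,\delta_\ell$ to the gadget positions $\alpha_\ell,\gamma_\ell$; if neither edge is in $M$, then I route $\beta_\ell,\delta_\ell$ through $\{h_\ell,j_\ell\}$ and route $\{g_\ell,i_\ell\}$ through $\{\alpha_\ell,\gamma_\ell\}$ using only the positive-weight edges shown in \cref{fig:ranking_instance}. A case analysis of the per-couple block/group structure from \cref{fig:groups_and_blocks} (checking in each case that the unique $G_k\cap B_k$ item count is exactly $1$ for $k=2\ell-1$ and $k=2\ell$) shows $R\in \rgf$.

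\textbf{Backward direction $(\Leftarrow)$.} Let $R$ satisfy (1) and (2) and define $M\coloneqq \{(i,j)\in E : R_{ij}=1\}$. Since only edges with $D_{ij}>0$ are used, $M\subseteq E$. I will show (a) $|M|=k$ and (b) $M$ respects every couple in $C$. For (a), observe that the support of $D$ restricted to $\iori\times(\pdum\cup\pori)$ consists of the edges of $G$ plus the complete bipartite part $\iori\times\pdum$; since $R$ is a ranking, each item of $\iori$ is matched to some position, and condition (2) together with the definition of $D$ implies every $\iori$-item goes either to $\pdum$ or along an edge of $E$. A simple counting argument using $|\iori|=|\pori|=|X|$, $|\idum|=|\pdum|=|X|-k$, and the fact that $\iori\to \pdum$ placements consume one unit of $\pdum$-capacity each, forces exactly $k$ items of $\iori$ to be placed into $\pori$ along edges of $E$, giving $|M|=k$. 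For (b), I will examine the per-couple gadget: the group fairness constraints $L_{2\ell-1,2\ell-1}=L_{2\ell,2\ell}=1$ (and the corresponding upper bounds) force exactly one of $G_{2\ell-1}$ into $B_{2\ell-1}$ and exactly one of $G_{2\ell}$ into $B_{2\ell}$. Combined with the support restriction (2), a case analysis on the support pattern inside the gadget of \cref{fig:ranking_instance} shows that the only two feasible configurations are the two described in the forward direction; both correspond to ``both edges in'' or ``both edges out'' of $M$, which is precisely the couple constraint.

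\textbf{Main obstacle.} The main technical hurdle is the gadget analysis in the backward direction: I must verify that the specific $D$-support pattern in \cref{fig:ranking_instance} together with the exact-$1$ group-block constraints from \cref{fig:groups_and_blocks} admits only the two ``synchronized'' configurations for each couple, excluding ``one edge in, one out.'' This is a finite case-check per couple, but it is the crux of the reduction. Once this is established, the forward direction becomes a construction by inspection, and the overall equivalence follows by combining the per-couple synchronization with the global counting argument that pins down $|M|=k$.
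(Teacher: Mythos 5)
Your construction treats the two coupled edges as ordinary positive-weight entries of $D$, and this is where the proof breaks. In the paper's construction the direct entry for a coupled edge is deliberately zeroed out---the backward argument uses explicitly that $D_{g_\ell,h_\ell}=0$---precisely so that a ranking supported on the positive entries of $D$ \emph{cannot} place $g_\ell$ at $h_\ell$. Hence your forward step, which sets $R_{g_\ell,h_\ell}=R_{i_\ell,j_\ell}=1$ whenever the couple lies in $M$ and asserts these entries are positive, violates condition (2) of the lemma on exactly the edges the gadget is built to control; the value $\phi$ is only assigned to pairs of $X\times Y$ that are \emph{not} part of any couple, and the coupled entries come from \cref{fig:ranking_instance}, where the direct edge is absent. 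The intended encoding is indirect: when a couple is in $M$, the items $g_\ell,i_\ell$ are parked at the gadget positions $\alpha_\ell,\gamma_\ell$ and the gadget items $\beta_\ell,\delta_\ell$ fill the positions $h_\ell,j_\ell$ (one of the two ``green'' configurations in \cref{fig:ranking_instance_1,fig:ranking_instance_2}); when it is not, the gadget items pair off with the gadget positions. Your case split assigns these configurations the opposite meaning, which by itself is only a convention issue, but placing items on zero-weight entries is a genuine error.

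The backward direction inherits the same flaw. With $M\coloneqq\inbrace{(i,j)\in E : R_{ij}=1}$ the edge $(g_\ell,h_\ell)$ can never enter $M$, since $D_{g_\ell,h_\ell}=0$ forces $R_{g_\ell,h_\ell}=0$; so either the couple constraint is violated when the partner edge does appear in the support of $D$, or no coupled edge ever appears and the size bound is in jeopardy. The paper instead reads couple membership off the gadget configuration: it puts \emph{both} edges of couple $\ell$ into $M$ whenever $R_{i_\ell,\alpha_\ell}=1$ or $R_{\beta_\ell,j_\ell}=1$, which is why both edges always enter or leave $M$ together. Relatedly, your counting argument assumes every item of $X$ is placed either in a dummy position or along an edge of $E$; this is false, because items belonging to couples can occupy gadget positions and gadget items can occupy positions of $Y$, so the number of $R$-placements along edges of $E$ can be strictly smaller than $k$ even for a valid group-fair $R$ supported on $D$. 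The correct count is the paper's: every item of $X$ placed in a \emph{non-dummy} position (original or gadget) contributes an edge to the extracted $M$, and since there are only $\abs{X}-k$ dummy positions this yields $\abs{M}\geq k$. Your instinct to case-check the gadget against the exact-one group-block constraints is the right one for establishing couple synchronization, but the case analysis has to be run on this indirect encoding, not on direct placements along the coupled edges.
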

            \noindent The reduction then follows from the following lemma.
            \begin{lemma}
                    Given a doubly stochastic matrix $D\in [0,1]^{m\times n}$,
                    there is a ranking $R$ in the above construction such that $R$ and $D$ satisfy the condition in \cref{lem:main_reduction}
                    if and only if
                    there is a distribution $\cD$ such that $D$ is the marginal of $\cD$ and $\Pr_{R\sim \cD}[R\in \rgf{}]>0$.
            \end{lemma}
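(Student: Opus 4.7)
The plan is to prove both directions by thinking of distributions over rankings as convex combinations of permutation matrices and exploiting the classical Birkhoff--von Neumann (BvN) theorem.

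For the (easier) ``only if'' direction, suppose such a distribution $\cD$ exists. Since $\Pr_{R\sim\cD}[R\in\rgf]>0$, the support of $\cD$ must contain at least one specific ranking $R^\star\in\rgf$ with $\Pr_{\cD}[R=R^\star]>0$. For every entry with $R^\star_{ij}=1$ we then have $D_{ij}=\Pr_{R\sim\cD}[R_{ij}=1]\geq \Pr_{\cD}[R=R^\star]>0$, so $R^\star$ witnesses the two conditions of \cref{lem:main_reduction}.

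For the ``if'' direction, let $R$ be a group-fair ranking (viewed as a permutation matrix, which is valid since the construction has $n=m$) satisfying $R_{ij}=1\Rightarrow D_{ij}>0$. The plan is to take the mixture $\cD \coloneqq \epsilon\,\mathbf{1}_R + (1-\epsilon)\,\cD'$ for a suitable $\epsilon>0$, where $\mathbf{1}_R$ is the point-mass at $R$ and $\cD'$ is a distribution with marginal $D' \coloneqq \tfrac{D-\epsilon R}{1-\epsilon}$. I first verify that $D'$ is a doubly stochastic matrix for all sufficiently small $\epsilon>0$: the row/column sums of $D-\epsilon R$ are all $1-\epsilon$, so after rescaling they become $1$; non-negativity of $D'$ at entries with $R_{ij}=1$ follows by taking $\epsilon\leq \min_{(i,j):R_{ij}=1}D_{ij}$, which is positive by the support hypothesis; the entrywise upper bound $D'_{ij}\leq 1$ at entries with $R_{ij}=0$ reduces to $D_{ij}\leq 1-\epsilon$, which holds because $D_{ij}=1$ combined with $R_{ij}=0$ would force (by $D$'s row stochasticity) some $j'\neq j$ with $R_{ij'}=1$ and $D_{ij'}=0$, contradicting the support hypothesis. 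Applying BvN to $D'$ yields a distribution $\cD'$ over permutation matrices whose marginal is $D'$; the mixture $\cD$ then has marginal exactly $D$ and assigns probability at least $\epsilon>0$ to $R\in\rgf$.

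I do not expect any substantial obstacle: both directions are short once the support condition is interpreted correctly, and the only subtlety is checking that the perturbed matrix $D-\epsilon R$ remains in the Birkhoff polytope, which is handled by the two inequalities above. The hypothesis that $D$ is doubly stochastic (guaranteed by the construction preceding the lemma) and the equality $n=m$ (so that all rankings are permutation matrices, allowing BvN to apply) are the two structural facts being used.
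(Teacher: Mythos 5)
Your proof is correct and takes essentially the same route as the paper: the forward direction extracts a group-fair ranking from the support of $\cD$ and reads off positivity of the marginal on its support, while the reverse direction subtracts $\epsilon R$ from $D$, rescales to a doubly stochastic matrix, decomposes it into rankings (Birkhoff--von Neumann in your write-up, Carath\'eodory over the Birkhoff polytope in the paper---the same mechanism), and mixes back with the point mass at $R$. The only cosmetic difference is your explicit check that $D'_{ij}\leq 1$, which is redundant since nonnegative entries with unit row sums already force every entry to be at most $1$.
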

            \begin{proof} The proof is divided into two parts.

                \paragraph{(Existence of $R$ implies existence of $\cD$).}
                Suppose there is a ranking $R$ satisfying the above properties.
                Then, there is an $\alpha>0$ such that $D-\alpha R\in [0,1]^{m\times n}$.
                Since $D$ is doubly stochastic $m$ must be equal to $n$.
                Hence, any ranking and, in particular, $R$ is also doubly stochastic
                Using that $R$ and $D$ are doubly stochastic, it follows that
                \begin{align*}
                    \forall i\in [m],\quad  &\quad \sum\nolimits_{j} (D_{ij}-\alpha R_{ij}) = 1 - \alpha,\\
                    \forall j\in [n],\quad  &\quad \sum\nolimits_{i} (D_{ij}-\alpha R_{ij}) = 1 - \alpha.
                \end{align*}
                Consequently, $\frac{D-\alpha R}{1-\alpha}$ is doubly stochastic.
                Thus, by the Carathéodory's theorem \cite{grotschel_lovasz_schrijver} there exists some distribution $\cD'$ over $\cR$ such that $\frac{D-\alpha R}{1-\alpha}$ is the marginal of $\cD'$--where $\cR$ is the set of all rankings.
                Consequently, $D$ is the marginal of the distribution $\cD$ defined as $$\Pr_{R'\sim \cD}[R'=S] = (1-\alpha) \Pr_{R'\sim \cD'}[R'=S]+\alpha\mathbb{I}[R=S].$$
                As $R\in \rgf{}$, $\cD$ satisfies that $\Pr_{S\sim \cD}[S\in \rgf{}]\geq \alpha > 0$.

                \paragraph{(Existence of $\cD$ implies existence of $R$).}
                    Fix any $R\in \rgf$ such that $\Pr_{S\sim \cD}[S=R]=\alpha>0$.
                    Since $D$ is the marginal of $\cD$, it follows that for all $i$ and $j$
                    \begin{align*}
                        D_{ij} = \Pr_{T\sim \cD}[T_{ij}=1]
                        \geq  \Pr_{T\sim \cD}[T=R]\cdot \mathbb{I}[R_{ij}=1]
                         > \alpha\mathbb{I}[R_{ij}=1].
                         \yesnum\label{}
                    \end{align*}
                    Since $\alpha>0$, for any $i$ and $j$, if $R_{ij}=1$, then $D_{ij}>0$.
            \end{proof}

        \noindent It remains to prove \cref{lem:main_reduction}, which we present below.

         \paragraph{Couple constrained matching $\to$ ranking.}
         Suppose we are given a YES instance for the couple-constrained matching problem.
         Let the corresponding matching be $M$.
         We will construct a ranking $R$ satisfying the properties in \cref{lem:main_reduction}.
         First, we will construct a partial ranking $R$ that leaves some positions unassigned.
         Then we will complete this ranking using the dummy items and positions.

         \smallskip
        \noindent \textit{Step 1: Construct partial ranking $R$.}
         The construction of the partial ranking is as follows.
         \begin{itemize}[leftmargin=14pt]
             \item For every $i\in \iori$ that is not part of any couple, set $R_{ij}=1$ for all $j\in \pori$, if $(i,j)\in M$ and $D_{ij}>0$
             \item For every $j\in \pori$ that is not part of any couple, set $R_{ij}=1$ for all $i\in \iori$, if $(i,j)\in M$ and $D_{ij}>0$
             \item For each $\ell$, consider the $\ell$-th couple.
             Suppose this couple consists of the edges $(g_\ell,h_\ell)$ and $(i_\ell,j_\ell)$.
             \begin{itemize}[leftmargin=14pt]
                \item If $(g_\ell,h_\ell)\in M$ (and, hence, $(i_\ell,j_\ell)\in M$), then for each $x\in\inbrace{g_\ell,i_\ell,\beta_\ell,\delta_\ell}$ and $y\in\inbrace{h_\ell,j_\ell,\alpha_\ell,\gamma_\ell}$, set $R_{x,y}=1$ if $(x,y)$ is green in \cref{fig:ranking_instance_1} and $R_{x,y}=0$ otherwise.
                \item If $(g_\ell,h_\ell)\not\in M$ (and, hence, $(i_\ell,j_\ell)\in M$), then for each $x\in\inbrace{g_\ell,i_\ell,\beta_\ell,\delta_\ell}$ and $y\in\inbrace{h_\ell,j_\ell,\alpha_\ell,\gamma_\ell}$, set $R_{x,y}=1$ if $(x,y)$ is green in \cref{fig:ranking_instance_2} and $R_{x,y}=0$ otherwise.
            \end{itemize}
            One can verify that in both cases the group fairness constraints are satisfied for both groups $G_{2\ell-1}$ and $G_{2\ell}$.
         \end{itemize}
         \begin{figure}
             \centering
             \vspace{-0.3in}
             \subfigure[\label{fig:ranking_instance_1}]{
                {\includegraphics[width=0.35\linewidth,trim={28cm 0cm 24cm 0cm},clip]{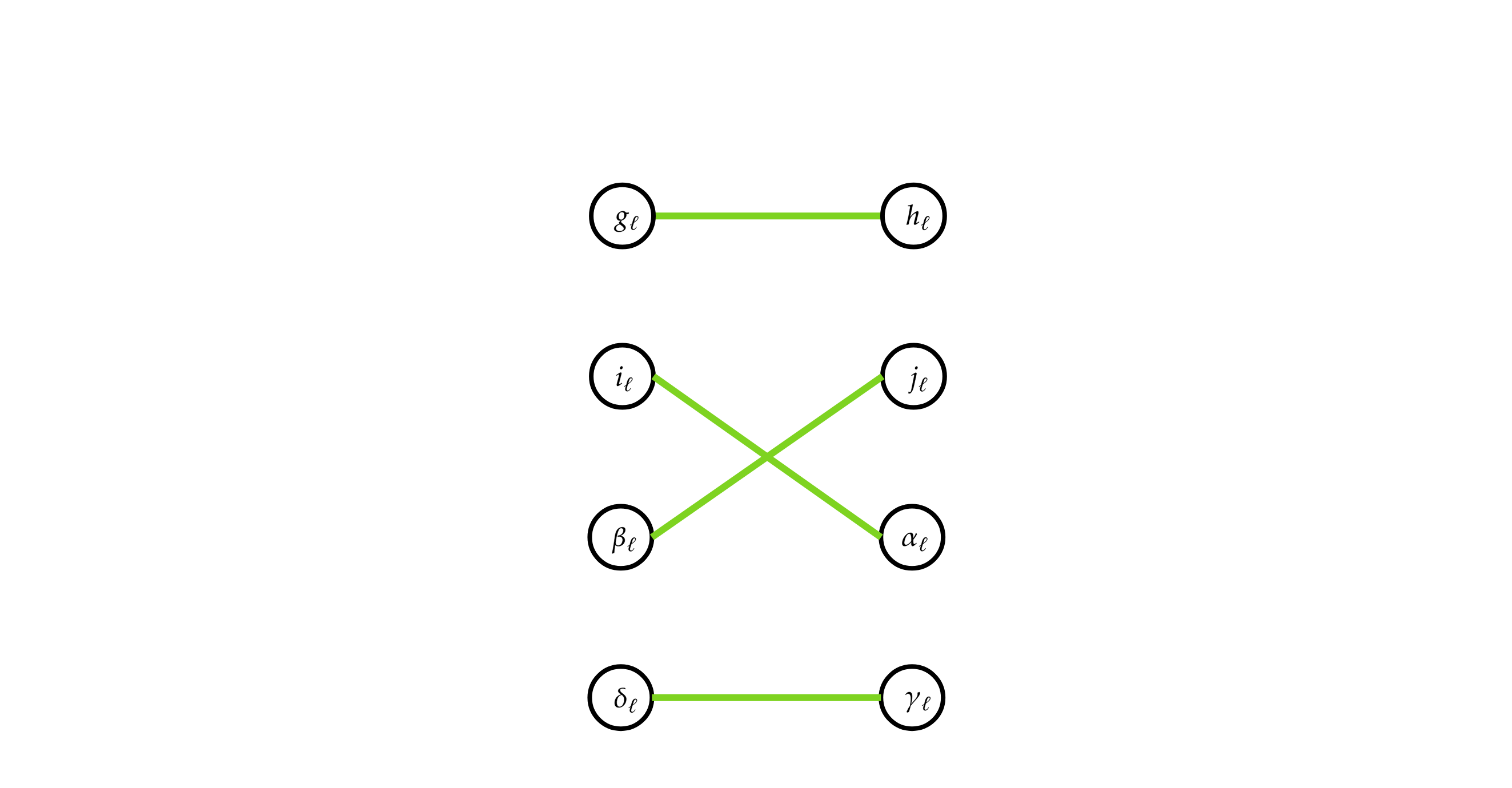}}
             }
             \subfigure[\label{fig:ranking_instance_2}]{
                {\includegraphics[width=0.35\linewidth,trim={28cm 0cm 24cm 0cm},clip]{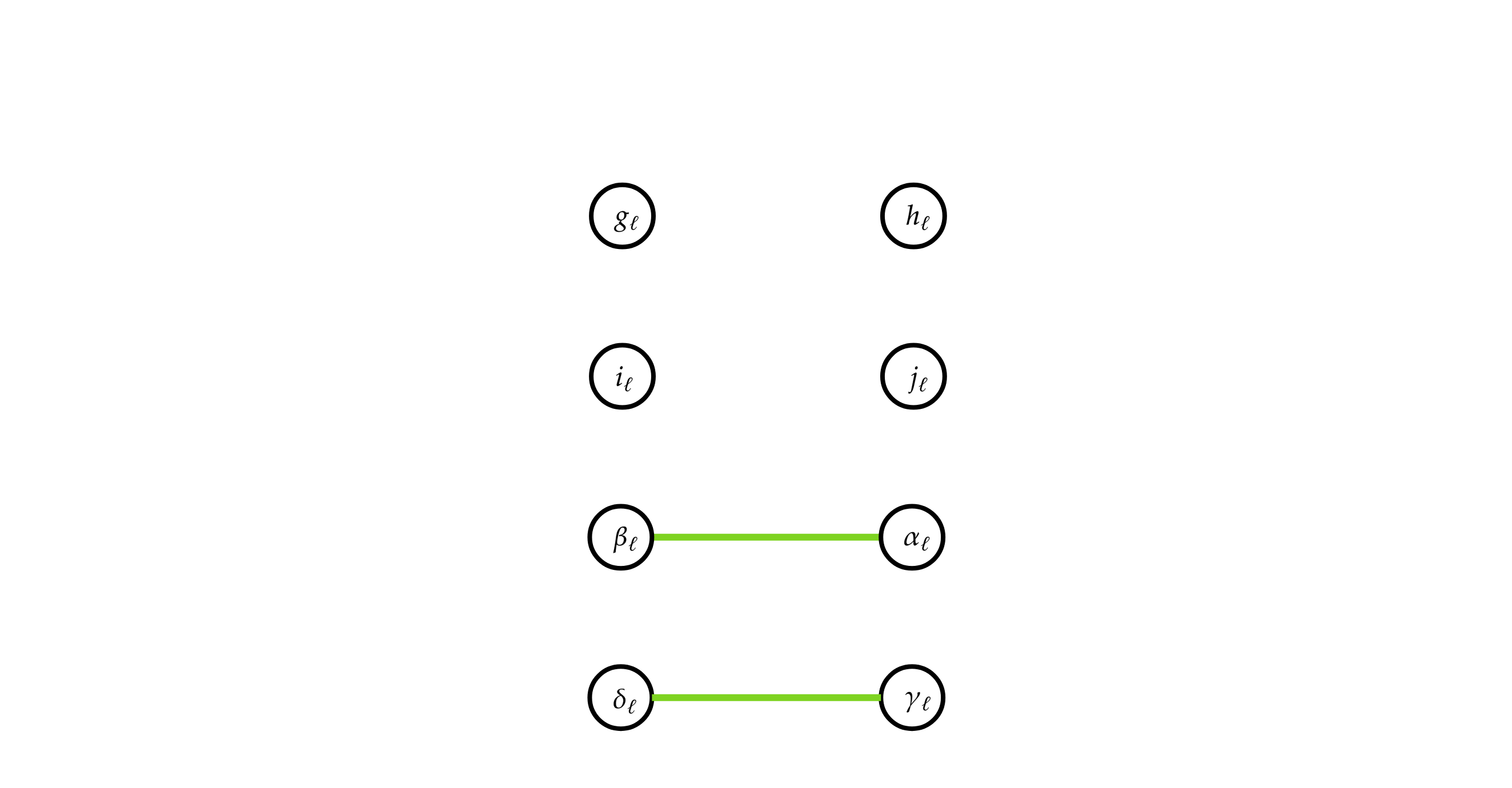}}
             }
             \vspace{-0.1in}
             \caption{Illustration of the rankings appearing in the  proof of \cref{lem:main_reduction}.}
             \label{fig:enter-label}
         \end{figure}
         One can verify that the above construction satisfies the following invariants.
         \begin{enumerate}
             \item For each $i\in \iori$, $\sum_j R_{ij}$ is equal to the number of edges incident on $i$ in $M$
             \item For each $j\in \pori$, $\sum_i R_{ij}$ is equal to the number of edges incident on $j$ in $M$
             \item The partial ranking satisfies all group fairness constraints
             \item All items $i\in \igad$ (respectively positions $j\in \pgad{}$) are matched to a unique item (respectively position).
             \item No dummy item $i\in \idum$ is placed in the ranking $R$ and all dummy positions $j\in \pdum$ are empty in the ranking $R$.
         \end{enumerate}
        The first two invariants above and the fact that $M$ matches at least $2k$ vertices imply that, (1) at most $\abs{X}-k$ items are not assigned to a position in the partial ranking and (2) at most $\abs{X}-k$ positions empty in the partial ranking.

        \smallskip
        \noindent \textit{Completing the ranking $R$.} For each such position $j\in \pori$ (respectively item $i\in \iori$), select a unique dummy item $i\in \idum$ (respectively dummy position $j\in \pori$)  such that $D_{ij}>0$ and set $R_{ij}=1$.
        Such a dummy item (respectively position) exists because of invariant 5, the fact that $\abs{\idum}\geq \abs{X}-k$ (respectively $\abs{\pdum}\geq \abs{X}-k$), and that $D_{ij}>0$ for any dummy item $i$ and any $j\in [n]$ (respectively any dummy position $j$ and any $i\in [m]$).
        Finally, for any empty dummy positions $j\in\pdum$ select a unique dummy item $i\in \idum$ such that $D_{ij}>0$ and set $R_{ij}=1.$

        \smallskip
        \noindent \textit{Step 2: $R$ satisfies the required properties.}
            By construction, it follows that $R_{ij}=1$ then $D_{ij}>0$.
            From invariant 3 and the fact that no block or group contains dummy items, it follows that $R$ satisfies the group fairness constraint.
            Finally, invariant 4 and the construction in Step 2 imply that $R$ is a valid ranking.

         \paragraph{Ranking $\to$ couple constrained matching.}
         Suppose there exists a ranking $R$ satisfying the properties in \cref{lem:main_reduction}.
         We will construct a corresponding matching $M$ which matches at least $k$ items and satisfies the couple constraints.
         The construction is as follows:
         \begin{enumerate}[itemsep=-1pt]
             \item For each $(i,j)\in \iori{} \times \pori{}$, then match $i$ and $j$ in $M$
             \item For each $\ell$, consider the $\ell$-th couple.
             Suppose this couple consists
             $(g_\ell,h_\ell)$ and $(i_\ell,j_\ell)$.
             \begin{enumerate}
                 \item If $R_{i_\ell,\alpha_\ell}=1$, then include edges $(g_\ell,h_\ell)$ and $(i_\ell,j_\ell)$ in $M$.
                 \item If $R_{\beta_\ell,j_\ell}=1$, then include edges $(g_\ell,h_\ell)$ and $(i_\ell,j_\ell)$ in $M$.
             \end{enumerate}
         \end{enumerate}
         By construction, it follows, that (1) for item $i\in \iori$, the number of edges incident on $i$ is at most $\sum_j R_{ij}=1$ and
         (2) for item $j\in \pori$, the number of edges incident on $j$ is at most $\sum_i R_{ij}=1$.
         Hence, $M$ is a valid matching.
         To see that $M$ satisfies couple-constraints, consider any $c\in C$.
         Suppose $c$ consists of the edges $(i_c,j_c)$ and $(g_c,h_c)$.
         By construction, in $D_{g_c,h_c}=0$ and, hence, $R_{g_c,h_c}=0$.
         Thus, $(g_c,h_c)$ must have been added in Step 2 of the above construction.
         By construction in Step (2), it follows that $(i_c,j_c)$ is also present in $M$.
         A symmetric argument shows that if $(i_c,j_c)$ is in $M$ then so is $(g_c,h_c)$.
         Since this holds for all couples $c\in C$, it follows that $M$ satisfies the couple constraints.
         It remains to show that $M$ has at least $k$ edges.
         This is true because by construction every item $i\in \iori{}$ that is placed in a non-dummy position in $R$, is matched in $M$.
         Since there are at most $\abs{X}-k$ dummy positions and each item is placed at some position in $R$, it follows that at least $k$ items are placed in a non-dummy position in $R$.
         Thus, $M$ has at least $k$ edges.

\section{Limitations and Conclusion}\label{sec:limitation}\label{sec:conclusion}
    We present an algorithm (\cref{alg:main}) that works with a general class of group fairness constraints and individual fairness constraints, it outputs rankings sampled from a distribution that satisfies the specified individual fairness constraints and, moreover, each output ranking satisfies the specified group fairness constraints (\cref{thm:algo_main}).
    Further, the algorithm guarantees a constant fraction approximation of the optimal (expected) utility subject to satisfying these constraints (\cref{thm:algo_main,thm:approxStochasUtil}).
    This algorithm works with families of disjoint protected groups as well as certain families of overlapping protected groups (namely, collections of laminar sets) (\cref{sec:overview}).
    Empirically, we observe that our algorithm is able to satisfy the specified fairness criteria while losing at most 6\% loss of the utility compared to the unconstrained baseline (\cref{sec:empirical}).

    Our work raises several questions.
    We consider the setting where the utility of a ranking of multiple items is a linear function of the utilities of individual items.
    While this captures a broad spectrum of applications \cite{fair_ranking_survey1,fair_ranking_survey2,overviewFairRanking}, in some applications, the utility of a ranking may be a non-linear function of the items present in the ranking; this is particularly, the case where the diversity of the items in a ranking has an effect on its utility \cite{microsoft_diverse,asadpour2022sequential,kleinberg2022ordered}.
    Extending our approach to this (more complicated) setting is an interesting direction for future work.
    Moreover, while our algorithm works for certain families of overlapping protected groups, extending it to arbitrary families of overlapping protected groups is an important question.
    Further, \cref{thm:hardness_main} demonstrates that solving a certain relaxation of our problem is \np-hard, exploring other potential relaxations may be fruitful to further improve the utility guarantee of our algorithm.

    \paragraph{Acknowledgments.}
        Part of this work was done when AM was an intern at Microsoft Research.
        AM was supported in part by NSF Awards CCF-2112665 and IIS-204595. SG was supported by a Google Ph.D. Fellowship award. AL was supported in part by a Pratiksha Trust Young Investigator Award. AL is also grateful to Microsoft Research for supporting this collaboration.

\newpage

\printbibliography

\clearpage

\appendix

\section{An Example With a Unique, Optimal, and Fractional Vertex}\label{sec:fractional_vertex}
        In this section, we present an example of group fairness constraints, individual fairness constraints, and utilities, where the (unique) optimal solution of \prog{prog:mod_of_ashudeep} has fractional entries; this example is inspired by a similar fact (without individual fairness constraints) in \citet{celis2018ranking}.
        \begin{fact}\label{lem:existence_of_extreme_point}
            There exists an instance of \prog{prog:mod_of_ashudeep}, such that, the unique optimal solution $\Pi$ is fractional.
            Moreover, $\Pi$ cannot be represented as a convex combination of rankings that satisfy the corresponding group-fairness constraints.
        \end{fact}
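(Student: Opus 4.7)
The plan is to exhibit an explicit small instance of \prog{prog:mod_of_ashudeep} and verify both parts of the claim directly. Take $m = n = 4$, blocks $B_1 = \{1,2\}$ and $B_2 = \{3,4\}$, disjoint groups $G_1 = \{1,3\}$ and $G_2 = \{2,4\}$, group-fairness bounds $L_{j\ell} = U_{j\ell} = 1$ for all $j,\ell$ (so each block must contain exactly one item from each group), and individual-fairness bounds $C_{ij} = A_{ij} = \frac{1}{2}$ for all $i,j$ (so each item lies in each block with probability exactly $\frac{1}{2}$). Consider the candidate
\[
    \Pi \;=\; \frac{1}{2}\begin{pmatrix}
        1 & 0 & 1 & 0\\
        1 & 0 & 0 & 1\\
        0 & 1 & 0 & 1\\
        0 & 1 & 1 & 0
    \end{pmatrix}.
\]
A direct calculation shows that $\Pi$ satisfies every row sum, column sum, individual-fairness equality, and group-fairness equality, so $\Pi$ is feasible.

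The heart of the argument is a short case analysis on the rankings contained in $\supp(\Pi)$. Because any convex decomposition $\Pi = \sum_t \alpha_t R_t$ with $\alpha_t > 0$ forces $\supp(R_t) \subseteq \supp(\Pi)$, it suffices to enumerate all permutations respecting the support of $\Pi$ and check that none is $(L,U)$-group-fair. From the eight nonzero entries of $\Pi$, item $1$ is confined to positions $\{1,3\}$, item $2$ to $\{1,4\}$, item $3$ to $\{2,4\}$, and item $4$ to $\{2,3\}$. Splitting on the position of item $1$ quickly yields only two admissible permutations, $R_A = (1,3,4,2)$ and $R_B = (2,4,1,3)$. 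However $R_A$ puts $G_1 = \{1,3\}$ entirely inside $B_1$ (violating $U_{1,1}=1$), and $R_B$ puts $G_2 = \{2,4\}$ entirely inside $B_1$ (forcing $0$ items of $G_1$ into $B_1$ and so violating $L_{1,1}=1$). Neither is group-fair, so $\Pi$ cannot be written as a convex combination of group-fair rankings.

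To finish, I will argue that $\Pi$ is the \emph{unique} optimum for suitable utilities. The individual-fairness equalities $D_{i1}+D_{i2}=D_{i3}+D_{i4}=\frac{1}{2}$, combined with the eight active nonnegativities $D_{ij}=0$ for $(i,j)\notin \supp(\Pi)$, determine every entry of $D$ uniquely (for instance, from $D_{12}=0$ and $D_{11}+D_{12}=\tfrac12$ one recovers $D_{11}=\tfrac12$, and analogously for the other seven entries), so $\Pi$ is a vertex of the LP feasible region. Since no integer ranking satisfies $D_{i1}+D_{i2}=\frac{1}{2}$, every vertex is in fact fractional. Any $\rho \otimes v$ in the interior of the normal cone at $\Pi$---which is nonempty because $\Pi$ is a vertex of a bounded polytope---makes $\Pi$ the unique maximizer, and a small generic perturbation of any fixed $(\rho, v)$ inside that cone suffices. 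The main obstacle is the casework in the support enumeration; feasibility and uniqueness are then routine bookkeeping.
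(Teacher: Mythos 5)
Your feasibility check, the support-containment argument, and the enumeration showing that only the two permutations $R_A=(1,3,4,2)$ and $R_B=(2,4,1,3)$ are compatible with $\supp(\Pi)$ (and that neither is group fair) are all correct, and the verification that $\Pi$ is a vertex via the sixteen tight constraints is fine. The genuine gap is in the last step, where you claim $\Pi$ is the \emph{unique} optimum "for suitable utilities." The objective of \prog{prog:mod_of_ashudeep} is not an arbitrary linear functional on $[0,1]^{m\times n}$: its coefficient matrix must have the rank-one product form $\rho_i v_j$. The normal cone at $\Pi$ having nonempty interior only guarantees that \emph{some} linear objective is uniquely maximized at $\Pi$; it does not guarantee that a rank-one objective of this form lies in that cone, and in your instance none does. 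Indeed, because your individual-fairness constraints force $D_{i1}+D_{i2}=D_{i3}+D_{i4}=\tfrac12$ for every $i$, the LP decouples into two independent subproblems, one per block: writing $a_i=D_{i1}$ and $b_i=D_{i3}$, the objective equals $(v_1-v_2)\sum_i\rho_i a_i+(v_3-v_4)\sum_i\rho_i b_i$ plus a constant, each sum maximized over the capped simplex $\{x\in[0,\tfrac12]^4:\sum_i x_i=1\}$. Your $\Pi$ corresponds to $a=(\tfrac12,\tfrac12,0,0)$ and $b=(\tfrac12,0,0,\tfrac12)$, so unique optimality would force, e.g. when $v_1>v_2$ and $v_3>v_4$, that $\{1,2\}$ are strictly the two largest utilities \emph{and} $\{1,4\}$ are strictly the two largest utilities, which is contradictory; the other sign patterns for $v_1-v_2$ and $v_3-v_4$ fail symmetrically, and equalities destroy uniqueness. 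So for every admissible $(\rho,v)$ your $\Pi$ is either suboptimal or tied, and the Fact's "unique optimal solution" clause is not established. Worse, for generic monotone $(\rho,v)$ the actual unique optimum of your instance is the matrix with $a=b=(\tfrac12,\tfrac12,0,0)$, which \emph{can} be written as a convex combination of two group-fair rankings (place items $1,4$ in $B_1$ and $2,3$ in $B_2$, or vice versa, with the higher-utility item first in each block), so the instance as a whole does not witness the statement.

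By contrast, the paper deliberately breaks this symmetry: it uses one protected group $G_1=\{1,2\}$, unequal blocks $B_1=\{1,2\}$ and $B_2=\{3\}$, asymmetric individual-fairness lower bounds ($C_{i1}=\tfrac12$ for all $i$ but $C_{32}=1$, with $A\equiv 1$), and explicit utilities $\rho_1>\rho_4>\rho_3>\rho_2$, and then pins down every column of the optimum by swapping arguments. If you want to salvage your symmetric construction, you would need to modify the constraints (not just the utilities) so that the within-block position assignments are no longer governed by the same ranking of $\rho$ in both blocks — which is exactly the role the extra constraint $C_{32}=1$ plays in the paper's example.
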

        \begin{proof}
            Let $m=n=4$. Suppose there is one protected group $G_1\coloneqq \inbrace{1,2}$ and two blocks $B_1\coloneqq \inbrace{1,2}$ and $B_2\coloneqq \inbrace{3}$. %
            Let $U_{11}=1$ and, for any $j\neq 1$, $U_{1j}=\infty$.
            That is, any feasible distribution over rankings must place at most 1 item from $G_1$ in the first two positions. %
            Let $k\coloneqq 2$ and, for each item $i$, $C_{i1}=\frac{1}{2}$,
            and $C_{32}=1$, and $A_{ij}=1$ for all $j$.
            That is, in any feasible distribution over rankings, each item should appear in one of the first two positions with probability at least $\frac{1}{2}$ (in fact, with probability exactly $\frac{1}{2}$) and item $i=3$ must appear in the first three positions.
            Let $\rho_1 > \rho_4 > \rho_3 > \rho_2.$
            Consider the following matrix $\Pi\in [0,1]^{m\times n}$ denoting the marginal of some distribution rankings:
            \begin{align*}
                \Pi = \begin{bmatrix}
                    \frac{1}{2} & 0 & \frac{1}{2} & 0\\
                    0 & \frac{1}{2} & 0 & \frac{1}{2}\\
                    0 & \frac{1}{2} & \frac{1}{2} & 0\\
                    \frac{1}{2} & 0 & 0 & \frac{1}{2}
                \end{bmatrix}.
            \end{align*}
            \paragraph{A. $\Pi$ is a vertex solution.}
            It is easy to verify that $\Pi$ is feasible for \prog{prog:mod_of_ashudeep}.
            We claim that $\Pi$ is also a vertex of polytope formed by the set of feasible solutions of \prog{prog:mod_of_ashudeep}.
            Indeed $\Pi$ is supported by $nm$ ($=16$) linearly independent inequalities.
            \begin{itemize}
                \item Eight inequalities are $\Pi_{12}=0$, $\Pi_{14}=0$, $\Pi_{21}=0$, $\Pi_{23}=0$, $\Pi_{31}=0$, $\Pi_{34}=0$, $\Pi_{42}=0$, and $\Pi_{43}=0$.
                \item There are $n+m$ inequalities that require row and column sums to be 1.
                Out of these, $n+m-1$ are linearly independent.
                \item The remaining inequality comes from the set of individual fairness constraints:
                $\Pi_{i1}+\Pi_{i2}=\frac{1}{2}$ for each $i\in [m]$.
            \end{itemize}
            (One can verify that 16 of the above inequalities are linearly independent.) %

            \paragraph{B. $\Pi$ cannot be represented as a convex combination of group-fair rankings.}
                $\Pi$ \textit{uniquely} decomposes as a convex combination of rankings as follows:
                \begin{align*}
                \Pi = \frac{1}{2}\underbrace{\begin{bmatrix}
                    0 & 0 & 1 & 0\\
                    0 & 0 & 0 & 1\\
                    0 & 1 & 0 & 0\\
                    1 & 0 & 0 & 0
                \end{bmatrix}}_{\Pi_a}
                + \frac{1}{2}\underbrace{\begin{bmatrix}
                    1 & 0 & 0 & 0\\
                    0 & 1 & 0 & 0\\
                    0 & 0 & 1 & 0\\
                    0 & 0 & 0 & 1
                \end{bmatrix}}_{\Pi_b}
            \end{align*}
            Here $\Pi_a$ satisfies group fairness constraints and $\Pi_b$ violates the group fairness constraints (it violates the upper bound for group $G_1$ on block 1).
            Since this is the unique decomposition of $\Pi$ into rankings and $\Pi_b$ is not group-fair, it follows that $\Pi$ cannot be decomposed as a convex sum of group-fair rankings.

            \paragraph{C. $\Pi$ is the unique optimal solution.}
                Due to the individual fairness constraints and the constraints that the sum of entries in columns one and two is 1, it follows that each item appears in the first 2 positions with probability exactly $\frac{1}{2}$.
                Since $\rho_1,\rho_4>\rho_2,\rho_3$, by a swapping argument it follows that it is optimal to have the first two columns exactly as in $\Pi$.
                Next, to satisfy the individual fairness constraint specified by $C_{32}$, it is necessary to satisfy that $\Pi_{33}=\frac{1}{2}$.
                To satisfy the third column sum, we require $\Pi_{13}+\Pi_{23}+\Pi_{43}=\frac{1}{2}$.
                As $\rho_1>\rho_4,\rho_2$, by a swapping argument, it follows that it is optimal to have $\Pi_{13}=\frac{1}{2}$.
                Hence, the entries of column 4 are determined by the other three columns as the rows must sum to 1.
        \end{proof}

\section{Approximating Prefix-Based Constraints by Block-Based Constraints}\label{sec:additional_remarks}
        In this section, we consider general families of group fairness and individual fairness constraints considered in the literature and show they can be approximated using group fairness constraints in \cref{def:group_constraints} and the individual fairness constraints in \cref{def:individual_constraints} can be used to approximate them.

        First, we consider the following family of group fairness constraints considered by \cite{celis2018ranking,zehlike2017topk}.
        \begin{definition}[\cite{celis2018ranking,zehlike2017topk}]\label{def:group_constraints2}
            Given matrices $L_{\rm pre},U_{\rm pre}\in \Z^{n\times p}$ a ranking $R$ satisfies the prefix-based $(L_{\rm pre},U_{\rm pre})$-group fairness constraints if for each $j\in [n]$ and $\ell\in [p]$
            \begin{align*}
                \inparen{L_{\rm pre}}_{j\ell}\leq \sum\nolimits_{i\in G_\ell}\sum\nolimits_{t=1}^j R_{ij} \leq \inparen{U_{\rm pre}}_{j\ell}.
                \yesnum\label{eq:group_constraints_prefix}
            \end{align*}
        \end{definition}
        \noindent The following lemma shows how the constraints in \cref{def:group_constraints} can approximate the above constraints.
        \begin{lemma}
            Let $k=2$ and $q=\frac{n}{2}$.
            For any matrices $L_{\rm pre}, U_{\rm pre}\in \Z^{n\times p}$ defining \textit{feasible} prefix-based $(L_{\rm pre}, U_{\rm pre})$-group fairness constraints, there exist matrices $L, U\in \Z^{\frac{n}{2}\times p}$ such that any ranking $R$ satisfying the $(L, U)$-group fairness constraints violates prefix-based $(L_{\rm pre}, U_{\rm pre})$-group fairness constraints at any position by at most an additive factor of $1$.
        \end{lemma}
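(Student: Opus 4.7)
The plan is to extract a witness ranking $R_0$ that realizes the feasibility of $(L_{\rm pre}, U_{\rm pre})$ and to define the block-based constraints so that every block-satisfying ranking has the same block-wise group profile as $R_0$. Concretely, for each $t \in [q]$ and each group $\ell$, I would set $L_{t,\ell} = U_{t,\ell} \coloneqq |\{i \in G_\ell : R_0 \text{ places } i \text{ in } B_t\}|$. Any ranking $R$ satisfying these $(L, U)$-block constraints then has, for every $t$ and $\ell$, exactly the same number of group-$\ell$ items inside block $B_t$ as $R_0$ does, even though the within-block ordering of $R$ may differ from that of $R_0$.

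The key step is a position-by-position comparison of prefix counts. Let $N_j$ and $M_j$ denote the number of group-$\ell$ items in the first $j$ positions of $R$ and $R_0$, respectively. For every even $j = 2t$ the two counts coincide, both equalling the telescoping sum $\sum_{s \leq t} L_{s,\ell}$. For every odd $j = 2t - 1$, the one-position gap forces $N_{2t-1} \in \{N_{2t}, N_{2t} - 1\}$ and $M_{2t-1} \in \{M_{2t}, M_{2t} - 1\}$, hence $|N_{2t-1} - M_{2t-1}| \leq 1$. Combining this with $M_j \in [(L_{\rm pre})_{j,\ell}, (U_{\rm pre})_{j,\ell}]$, which holds for all $j$ because $R_0$ satisfies the original prefix constraints, yields $N_j \in [(L_{\rm pre})_{j,\ell} - 1, (U_{\rm pre})_{j,\ell} + 1]$, which is exactly the additive-$1$ conclusion sought by the lemma.

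The main obstacle I anticipate is the careful bookkeeping at odd positions: one must verify the additive-$1$ lower-bound and upper-bound slacks in all four combinations of whether position $2t$ of $R$ and of $R_0$ carries a group-$\ell$ item. This is a short case analysis driven entirely by the ``adds-at-most-one-per-position'' structure of rankings, and it needs no hypothesis beyond feasibility of $(L_{\rm pre}, U_{\rm pre})$ (to guarantee that $R_0$ exists in the first place). A more permissive alternative would be to take $L_{t,\ell}$ and $U_{t,\ell}$ to be block-wise differences of the monotone unit-gap closures $\tilde L_{j,\ell} \coloneqq \max_{j'}\bigl((L_{\rm pre})_{j',\ell} - \max(0, j'-j)\bigr)$ and the symmetric $\tilde U_{j,\ell}$; telescoping then gives the same conclusion with genuine inequality constraints at the price of a slightly longer verification of the monotonicity and unit-gap properties of $\tilde L, \tilde U$.
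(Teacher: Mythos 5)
Your proposal is correct and takes essentially the same route as the paper: the paper also fixes a witness ranking $R_{\rm pre}$ satisfying the prefix constraints, sets $L_{j\ell}=U_{j\ell}$ equal to its block-wise group counts, and bounds the prefix count of any $(L,U)$-fair ranking by the telescoped block sums, with an additive slack of $1$ at positions interior to a block. Your position-by-position comparison of $N_j$ with $M_j$ is just a rephrasing of that same argument.
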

        \noindent For several natural fairness constraints (such as equal representation or proportional representation), the lower and upper bound constraints at the $j$-th position (for any $1\leq j\leq n$) are of order $\Theta(j)$, and hence, the above result shows that the ranking satisfying the $(L,U)$-group fairness constraints violates the prefix based constraints by at most a multiplicative factor of $\Theta(n^{-1})$ for any $j=\Theta(n)$.
        \begin{proof}
            Fix $k\coloneqq 2$ {and $q=\frac{n}{2}$.}
            Let $R_{\rm pre}$ be any ranking that satisfies the prefix-based $(L_{\rm pre},U_{\rm pre})$-group fairness.
            $R_{\rm pre}$ exists as the prefix-based constraints are feasible.
            We claim that the following matrices $L$ and $U$ suffice: for each $j\in [\frac{n}{2}]$ and $\ell\in [p]$
            \begin{align*}
                L_{j\ell}
                &= \sum\nolimits_{i\in G_\ell} \inparen{ \inparen{R_{\rm pre}}_{ij} + \inparen{R_{\rm pre}}_{i(j+1)} },\yesnum\label{eq:def_1}\\
                U_{j\ell}
                &= \sum\nolimits_{i\in G_\ell} \inparen{ \inparen{R_{\rm pre}}_{ij} + \inparen{R_{\rm pre}}_{i(j+1)} }.\yesnum\label{eq:def_2}
            \end{align*}
            To complete the proof, consider any ranking $R$ that satisfies $(L,U)$-group fairness constraints.
            Since $R$ satisfies $(L,U)$-group fairness constraints, for any $j\in [n]$ and $\ell\in [p]$, the following holds:
            \begin{align*}
                \sum\nolimits_{i\in G_\ell}\sum\nolimits_{t=1}^j R_{ij}
                &\leq \sum\nolimits_{t=1}^{\ceil{j/2}} U_{t\ell}\\
                &=  \sum\nolimits_{t=1}^{\ceil{j/2}} \sum\nolimits_{i\in G_\ell} \inparen{ \inparen{R_{\rm pre}}_{ij} + \inparen{R_{\rm pre}}_{i(j+1)} }\tag{Using \cref{eq:def_2}}\\
                &\leq  \sum\nolimits_{t=1}^{j+1} \sum\nolimits_{i\in G_\ell} { \inparen{R_{\rm pre}}_{ij}}\\
                &\leq  1 + \sum\nolimits_{t=1}^{j} \sum\nolimits_{i\in G_\ell} { \inparen{R_{\rm pre}}_{ij}} \tag{Using that $\inparen{R_{\rm pre}}_{ij}\in \zo$ for all $i$ and $j$}\\
                &\leq  1 + \inparen{U_{\rm pre}}_{j\ell}
                \tag{Using that $R_{\rm pre}$ satisfies the prefix-based $(L_{\rm pre},U_{\rm pre})$-group fairness},
            \end{align*}
            \begin{align*}
                \sum\nolimits_{i\in G_\ell}\sum\nolimits_{t=1}^j R_{ij}
                &\geq \sum\nolimits_{t=1}^{\floor{j/2}} L_{t\ell}\\
                &=  \sum\nolimits_{t=1}^{\floor{j/2}} \sum\nolimits_{i\in G_\ell} \inparen{ \inparen{R_{\rm pre}}_{ij} + \inparen{R_{\rm pre}}_{i(j+1)} }\tag{Using \cref{eq:def_2}}\\
                &\geq  \sum\nolimits_{t=1}^{j-1} \sum\nolimits_{i\in G_\ell} { \inparen{R_{\rm pre}}_{ij}}\\
                &\geq  -1 +  \sum\nolimits_{t=1}^{j} \sum\nolimits_{i\in G_\ell} { \inparen{R_{\rm pre}}_{ij}} \tag{Using that $\inparen{R_{\rm pre}}_{ij}\in \zo$ for all $i$ and $j$}\\
                &\geq  -1 +  \inparen{L_{\rm pre}}_{j\ell}
                \tag{Using that $R_{\rm pre}$ satisfies the prefix-based $(L_{\rm pre},U_{\rm pre})$-group fairness}.
            \end{align*}
            Thus, the result follows.
        \end{proof}
        \noindent Next, we consider a family of individual fairness constraints that captures the constraints considered in \cite{AshudeepUncertainty2021}.
        \begin{definition}[\cite{AshudeepUncertainty2021}]
            Given $C_{\rm pre}\in [0,1]^{m\times n}$, a distribution $\cD$ over the set $\cR$ of all rankings satisfies prefix-based $C_{\rm pre}$-individual fairness constraints if for each $i\in [m]$ and $j\in [n]$
            \begin{align*}
                \Pr\nolimits_{R\sim \cD}\insquare{R_{it} = 1 \text{ for some $1\leq t\leq j$}} \geq \inparen{C_{\rm pre}}_{ij}.
                \yesnum\label{eq:equality_const_indv_fairness_2}
            \end{align*}
        \end{definition}
        \noindent The following lemma shows that the individual fairness constraints in \cref{def:individual_constraints} can approximate the above family of constraints.
        \begin{lemma}
            Let $k=2$ and $q = \frac{n}{2}$.
            For any matrix $C_{\rm pre}\in [0,1]^{m\times n}$ defining \textit{feasible} prefix-based $C_{\rm pre}$-individual fairness constraints, there exists a matrix $C \in [0,1]^{m\times\frac{n}{2}}$ such that any distribution $\cD$ satisfying the $(C,1)$-individual fairness constraints violates prefix-based $C_{\rm pre}$-individual fairness constraints for any item at any position by at most an additive factor of $\max_{1\leq i\leq m}$ $\max_{1\leq j\leq n-1}\inparen{C_{\rm pre}}_{ij}$.
        \end{lemma}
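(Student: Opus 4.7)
The plan is to mirror the reduction used for the group-fairness analog: produce $C$ as the block marginals of a feasible distribution for the prefix constraints, then compare any candidate distribution satisfying the block constraints against this one, block by block.

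\textbf{Step 1 (Definition of $C$).} Since the prefix-based $C_{\rm pre}$-individual fairness constraints are feasible, fix a distribution $\cD_{\rm pre}$ over rankings that satisfies them. For each $i\in [m]$ and $j\in [\frac{n}{2}]$, let $B_j\coloneqq\{2j-1,2j\}$ and set
\[
    C_{ij} \coloneqq \Pr_{R\sim\cD_{\rm pre}}[R_{it}=1 \text{ for some } t\in B_j].
\]
By construction $\cD_{\rm pre}$ itself witnesses the feasibility of the $(C,1)$-individual fairness constraints, so such constraints are satisfiable.

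\textbf{Step 2 (Comparing any feasible $\cD$ with $\cD_{\rm pre}$).} Let $\cD$ be any distribution satisfying $(C,1)$-individual fairness. Fix any item $i\in[m]$ and any position $j\in[n]$. If $j=2s$ is even, then the prefix $[1,j]$ decomposes exactly as the disjoint union $B_1\cup\cdots\cup B_s$, and additivity of marginal probabilities plus the lower bound $C_{is'}\leq \Pr_{R\sim\cD}[R_{it}=1 \text{ for some } t\in B_{s'}]$ yields
\begin{align*}
    \Pr_{R\sim\cD}\!\bigl[R_{it}=1 \text{ for some }1\leq t\leq j\bigr]
    &= \sum_{s'=1}^{s}\Pr_{R\sim\cD}[R_{it}=1 \text{ for some } t\in B_{s'}]\\
    &\geq \sum_{s'=1}^{s} C_{is'}
    = \Pr_{R\sim\cD_{\rm pre}}\!\bigl[R_{it}=1 \text{ for some }1\leq t\leq j\bigr]
    \geq (C_{\rm pre})_{ij},
\end{align*}
so the prefix constraint at an even position is satisfied with no violation.

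\textbf{Step 3 (Odd positions).} If $j=2s-1$ is odd, the same argument applied to the prefix $[1,2s-2]=B_1\cup\cdots\cup B_{s-1}$ gives
\[
    \Pr_{R\sim\cD}\!\bigl[R_{it}=1 \text{ for some }1\leq t\leq j\bigr] \geq (C_{\rm pre})_{i(j-1)}.
\]
Hence the additive violation of the prefix constraint at position $j$ is at most
\[
    (C_{\rm pre})_{ij} - (C_{\rm pre})_{i(j-1)} \leq (C_{\rm pre})_{ij} \leq \max_{1\leq i\leq m}\max_{1\leq j\leq n-1}(C_{\rm pre})_{ij},
\]
as claimed. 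The only step that requires any care is making sure the union of blocks is really a prefix in the even case and a prefix minus one position in the odd case; everything else is bookkeeping, and I do not foresee a genuine obstacle here since the construction parallels the group-fairness approximation lemma proved just above.
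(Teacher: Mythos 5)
Your proof is correct and follows essentially the same route as the paper: define $C$ as the block marginals of a feasible distribution $\cD_{\rm pre}$ for the prefix constraints, lower-bound any feasible $\cD$'s prefix probability by the sum of the block constraints over the blocks contained in that prefix, and absorb the single missing position at odd prefixes into the $\max_{1\leq i\leq m}\max_{1\leq j\leq n-1}(C_{\rm pre})_{ij}$ term. Your write-up is, if anything, slightly cleaner than the paper's (it states the block construction $B_j=\{2j-1,2j\}$ explicitly and notes that even-length prefixes incur no violation at all), but the underlying argument is the same.
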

        \noindent Thus, as long as the relative change in the value of the prefix-based  individual fairness constraint at each position is small, any distribution satisfies the $(C,1)$-individual fairness violates the prefix-based individual fairness constraint by a small additive amount.
        \begin{proof}
            Fix $k\coloneqq 2$ and $q=\frac{n}{2}$.
            Let $\cD_{\rm pre}$ be any distribution over rankings that satisfies the prefix-based $C_{\rm pre}$-group fairness.
            $\cD_{\rm pre}$ exists as the prefix-based constraints are feasible.
            We claim that the following matrix $C$ suffices: for each $j\in [\frac{n}{2}]$ and $i\in [m]$
            \begin{align*}
                C_{ij}
                &= \sum\nolimits_{i\in G_\ell} \Pr\nolimits_{R\sim \cD}\insquare{R_{it} = 1 \text{ for some $t\in \inbrace{j,j+1}$}} .\yesnum\label{eq:def_3}
            \end{align*}
            To complete the proof, consider any distribution $\cD$ that satisfies $(C,1)$-individual fairness constraints.
            Since $\cD$ satisfies $(C,1)$-individual fairness constraints, for any $j\in [n]$ and $i\in [m]$, the following holds:
            \begin{align*}
                &\Pr\nolimits_{R\sim \cD}\insquare{R_{it} = 1 \text{ for some $1\leq t\leq j$}}\\
                &\quad\geq \sum\nolimits_{t=1}^{\floor{j/2}} C_{it}\\
                &\quad=  \sum\nolimits_{t=1}^{\floor{j/2}} \Pr\nolimits_{R\sim \cD}\insquare{R_{it} = 1 \text{ for some $t\in \inbrace{j,j+1}$}}
                \tag{Using \cref{eq:def_3}}\\
                &\quad\geq \Pr\nolimits_{R\sim \cD}\insquare{R_{it} = 1 \text{ for some $1\leq t\leq j-1$}}\\
                &\quad\geq \Pr\nolimits_{R\sim \cD}\insquare{R_{it} = 1 \text{ for some $1\leq t\leq j-1$}}  -\max_{1\leq r\leq m}\max_{1\leq s\leq n-1}\inparen{C_{\rm pre}}_{rs}\tag{Using that $\inparen{C_{\rm pre}}_{ij}\leq \max_{1\leq r\leq m;\ 1\leq s\leq n-1}\inparen{C_{\rm pre}}_{rs}$ for all $i$ and $j$}\\
                &\quad\geq \inparen{C_{\rm pre}}_{j\ell} - \max_{1\leq r\leq m}\max_{1\leq s \leq n-1}\inparen{C_{\rm pre}}_{rs}. \tag{Using that $R_{\rm pre}$ satisfies the prefix-based $C_{\rm pre}$-individual fairness}
            \end{align*}
            Thus, the result follows.
        \end{proof}
\end{document}